\keywords{Markov chains, Behavioral distances, Axiomatization.}
  \tikzset{
    commutative diagram/.style 2 args={
    	matrix of math nodes, row sep=#1,column sep=#2,
	text height=1.5ex, text depth=0.25ex},
    commutative diagram/.default={1cm}{1cm}
    }
  \tikzset{    
    skip loop/.style n args={3}{to path={-- ++(0,#1) -| node[pos=0.25,#2] {#3} (\tikztotarget)}},
    cross line/.style={preaction={draw=white, -, line width=6pt}}
  }
  \tikzstyle{labels} = [
\providecommand*{\ifempty}[3]{\ifthenelse{\isempty{#1}}{#2}{#3}}
\newcommand{\parensmathoper}[2]{\ensuremath{{{#1}}\ifempty{#2}{}{(#2)}}}
\newcommand{\ol}{\overline}
\newcommand{\ind}[1]{\mathds{1}_{#1}}
\newcommand{\e}{\varepsilon}
\newcommand{\coupling}[2]{\Omega(#1, #2)}
\newcommand{\set}[2]{\left\{ #1 \ifempty{#2}{}{\mid #2} \right\}}
\newcommand{\kernel}{\parensmathoper{\ker}}
\newcommand{\dist}[1][]{\mathbf{d}_{#1}}
\newcommand{\norm}[2][]{\| #2 \|_{#1}}
\newcommand{\K}[1][d]{\parensmathoper{\parensmathoper{\mathcal{K}}{#1}}}
\newcommand{\Labels}{\mathcal{L}}
\newcommand{\naturals}{\mathbb{N}}
\newcommand{\prationals}{\mathbb{Q}_{\geq 0}}
\newcommand{\Q}[1][+]{\mathbb{Q}_{#1}}
\newcommand{\M}{\mathcal{M}} %% generic Markov chain
\newcommand{\N}{\mathcal{N}} %% (2nd) generic Markov chain
\newcommand{\D}{\mathcal{D}} %% probability over open states
\newcommand{\OMC}{\mathbf{OMC}} %% class of open Markov chains
\newcommand{\X}{\mathcal{X}} %% the set of names
\newcommand{\fn}{\parensmathoper{fn}} %% free names
\newcommand{\bisim}[1][]{\sim^{#1}} %% probabilistic bisimilarity
\newcommand{\TT}[2][]{\mathbb{T}\ifempty{#2}{\ifempty{#1}{}{(#1)}}{\ifempty{#1}{(#2)}{(#1,#2)}}}  %% Set of terms
\newcommand{\Sub}[1][]{\ifempty{#1}{\mathcal{S}}{\mathcal{S}(#1)}} %% Set of substitutions
\newcommand{\E}[1][]{\ifempty{#1}{\mathcal{E}}{\mathcal{E}(#1)}} %% Set of quantitative equations
\newcommand{\U}[1][]{{\mathcal{U}^{#1}}} %% Theory induced by a quantitative deduction system
\newcommand{\KK}[2][]{\mathbb{K}\ifempty{#1}{(#2)}{(#1,#2)}} %% Class of models for theory
\newcommand{\freemodel}[1][\vdash]{{\mathcal{T}_{#1}}} %% Freely-generated model
\newcommand{\Refl}{\textsf{Refl}} %% Reflexivity Axiom
\newcommand{\Symm}{\textsf{Symm}} %% Symmetry Axiom
\newcommand{\Triang}{\textsf{Triang}} %% Triangular inequality Axiom
\newcommand{\Max}{\textsf{Max}} %% Max Axiom
\newcommand{\Arch}{\textsf{Arch}} %% Archimedian Axiom
\newcommand{\Nexp}{\textsf{NExp}} %% Non-expansivity Axiom
\newcommand{\Subst}{\textsf{Subst}} %% Substitution Axiom
\newcommand{\Cut}{\textsf{Cut}} %% Cut rule
\newcommand{\Assum}{\textsf{Assum}} %% Assumption rule
\newcommand{\Bone}{\textsf{B1}}
\newcommand{\Btwo}{\textsf{B2}}
\newcommand{\SC}{\textsf{SC}}
\newcommand{\SA}{\textsf{SA}}
\newcommand{\Unfold}{\textsf{Unfold}}
\newcommand{\Unguard}{\textsf{Unguard}}
\newcommand{\Fix}{\textsf{Fix}}
\newcommand{\Cong}{\textsf{Cong}}
\newcommand{\Top}{\textsf{Top}}
\newcommand{\Conv}{\textsf{IB}}
\newcommand{\Dist}{\textsf{Dist}}
\newcommand{\Pref}{\textsf{Pref}}
\newcommand{\Entr}{\textsf{Entr}}
\newcommand{\Distr}{\textsf{Distr}}
\newcommand{\dPref}[1][\lambda]{\textsf{$#1$-Pref}}
\renewcommand{\O}{\Sigma} %% Signature for open Markov chains
\newcommand{\denot}[1]{\llbracket #1 \rrbracket}
\newcommand{\rec}[2]{\mathop{\textsf{rec}}#1\ifempty{#2}{}{. #2}} %% recursion term
\newcommand{\pref}[2]{#1. #2} %% prefix term
\newcommand{\UU}{\mathbb{U}} %% Universal open Markov chain
\newcommand{\tauTT}[1][]{\ifempty{#1}{\mu_{\TT{}}}{\mu_{#1}}} %% transition function lifting
\newcommand{\PP}{\mathcal{P}}
\newcommand{\A}{\mathcal{A}} %% Generic quantitative algebra
\theoremstyle{plain} %\crefname{satz}{Satz}{S\"atze}
\def\ie{{\em i.e.}}
\def\cf{{\em cf.}}
\begin{document}

\title[A Quantitative Deduction System for the Bisimilarity Distance on MCs]{
A Complete Quantitative Deduction System for the Bisimilarity Distance on Markov Chains}
\titlecomment{This paper is an extended version of an earlier conference paper~\cite{BacciBLM16:CONCUR}
presented at CONCUR 2016.}

\author[G.\ Bacci]{Giorgio Bacci}	%required
\address{Dept.\ of Computer Science, Aalborg University, Denmark}	%required
\email{\{grbacci,giovbacci,kgl,mardare\}@cs.aau.dk}  %optional
\thanks{Work supported by the EU 7th Framework Programme (FP7/2007-13)
under Grants Agreement nr.318490 (SENSATION), nr.601148 (CASSTING), the 
Sino-Danish Basic Research Center IDEA4CPS funded by Danish National Research 
Foundation and National Science Foundation China, the ASAP Project (4181-00360) funded by the Danish Council for Independent Research, the ERC Advanced Grant LASSO, and the Innovation Fund Denmark center DiCyPS.
}	%optional

\author[G.\ Bacci]{Giovanni Bacci}	%optional
% \address{Dept.\ of Computer Science, Aalborg University, Denmark}	%optional
%\thanks{thanks 2, optional.}	%optional

\author[K.\ G.\ Larsen]{Kim G. Larsen}	%optional
% \address{Dept.\ of Computer Science, Aalborg University, Denmark}	%optional
%\thanks{thanks 3, optional.}	%optional

\author[R.\ Mardare]{Radu Mardare}	%optional
% \address{Dept.\ of Computer Science, Aalborg University, Denmark}	%optional
%\thanks{thanks 4, optional.}	%optional

%% etc.

%% required for running head on odd and even pages, use suitable
%% abbreviations in case of long titles and many authors:

%%%%%%%%%%%%%%%%%%%%%%%%%%%%%%%%%%%%%%%%%%%%%%%%%%%%%%%%%%%%%%%%%%%%%%%%%%%

%% the abstract has to PRECEDE the command \maketitle:
%% be sure not to issue the \maketitle command twice!

\begin{abstract}
In this paper we propose a complete axiomatization of the bisimilarity distance of 
Desharnais et al.\ for the class of finite labelled Markov chains. 

Our axiomatization is given in the style of a quantitative extension of equational logic recently proposed by Mardare, Panangaden, and Plotkin (LICS 2016) that uses equality relations $t \equiv_\varepsilon s$ indexed by rationals, expressing that ``$t$ is approximately equal to $s$ up to an error $\varepsilon$''.
Notably, our quantitative deduction system extends in a natural way the equational system for probabilistic bisimilarity given by Stark and Smolka by introducing an axiom for dealing with the Kantorovich distance between probability distributions.

The axiomatization is then used to propose a metric extension of a Kleene's style representation theorem for 
finite labelled Markov chains, that was proposed (in a more general coalgebraic fashion) by Silva et al.\ (Inf.\ Comput.\ 2011).
\end{abstract}

\maketitle

%% start the paper here:
\section{Introduction}

In~\cite{Kleene56}, Kleene presented an algebra of \emph{regular events} with the declared objective of
``\textit{showing that all and only regular events can be represented by [...] finite automata}.''
This fundamental correspondence is known as Kleene's representation theorem for regular languages.
Kleene's approach was essentially equational, but he did not provide a proof of completeness for his equational characterization. The first sound and complete axiomatization for proving equivalence of regular
events is due to Salomaa~\cite{Salomaa66}, later refined by Kozen~\cite{Kozen91}.

The above programme was applied by Milner~\cite{Milner84} to process behaviors and non-deterministic labelled transition system. Milner's algebra of process behaviors consists of a prefix operator (representing the observation of an atomic event), a non-deterministic choice operator (for union of behaviors), and a recursive operator (for the definition of recursive behaviors). He proved an analogue of Kleene's theorem, showing that process behaviors represent all and only finite labelled transition systems up to bisimilarity. 
Milner also provided a sound and complete axiomatization for behavior expressions, with the property that two expressions are provably equal if and only if they represent bisimilar labelled transition systems.

Stark and Smolka~\cite{StarkS00} extended Milner's axiomatization to probabilistic process behaviors,
providing a complete axiomatization for probabilistic bisimilarity on (generative) labelled Markov chains.
The key idea was to represent probabilistic non-determinism by a probabilistic choice 
operator for the convex combination of probabilistic behaviors. Their equational axiomatization 
differs from Milner's only by replacing the semi-lattices axioms for non-deterministic choice by the Stone's barycentric axioms for probabilistic choice.

Similar extensions to Milner's axiomatization have been investigated by several authors.
Amongst them we recall some of the works that have been done on probabilistic systems: 
Bandini and Segala~\cite{BandiniS01} on simple probabilistic automata;
Mislove, Ouakinine, and Worrell~\cite{MisloveOW04} on (fully) non-deterministic probabilistic automata;
Deng and Palamidessi~\cite{DengP07} axiomatizing probabilistic weak-bisimulation and behavioral equivalence on Segala and Lynch's probabilistic automata; and Silva et al.~\cite{SilvaBBR11,Bonsangue13} 
providing a systematic way to generate sound and complete axiomatizations and Kleene's representation theorems thereof, for a wide variety of systems represented as coalgebras.  

The attractiveness towards sound and complete axiomatizations for process behaviors 
originates from the need of being able to reason
%comes from the fact that one can reason 
about their equivalence in a purely algebraic fashion 
by means of classical deduction of valid equational statements.

Jou and Smolka~\cite{JouS90}, however, observed that for reasoning about the behavior of probabilistic systems (and more in general all type of quantitative systems) a notion of distance is preferable to that of equivalence, since the latter is not robust w.r.t.\ small variations of numerical values.  
This motivated the development of metric-based approximated semantics for probabilistic systems, initiated by Desharnais et al.~\cite{DesharnaisGJP04} on labelled Markov chains and greatly developed and explored by van Breugel, Worrell and others \cite{BreugelW:icalp01,BreugelW06,BreugelSW08,BacciLM:fossacs15,BacciLM:ictac15}. 
It consists in proposing a pseudometric which 
measures the dissimilarities between quantitative behaviors. 
The pseudometric proposed by Desharnais et al.~\cite{DesharnaisGJP04} on labelled Markov chains, 
a.k.a.\ \emph{probabilistic bisimilarity distance}, is defined as the least solution of a functional operator 
on $1$-bounded pseudometrics based on the Kantorovich distance on probability distributions.

The first proposals of sound and complete axiomatizations of behavioral distances are due to Larsen et al.~\cite{LarsenFT11} on weighted transition systems, and D'Argenio et al.~\cite{DArgenioGL14} on probabilistic systems. These approaches, however, are rather specific and based on \textit{ad hoc} assumptions.
Recently, Mardare, Panangaden, and Plotkin~\cite{MardarePP:LICS16} ---with the purpose of developing a general research programme for a quantitative algebraic theory of effects~\cite{PlotkinP01}--- proposed the concept of \emph{quantitative equational theory}. The key idea behind their approach is to use ``quantitative equations'' of the form $t \equiv_\e s$ to be interpreted as ``$t$ is approximately equal to $s$ up to an error $\e$'', for some rational number 
$\e \geq 0$. 
Their main result is that completeness for a quantitative theory always holds on the freely-generated algebra of terms equipped with a metric that is freely-induced by the axioms.
Due to this result, they were able to prove soundeness and completeness theorems for many interesting axiomatizations, such as the Hausdorff metric, the total variation metric, the $p$-Wasserstein metric, and the Kantorovich metric.

In this paper, we contribute to the quest of axiomatizations of behavioral metrics, by proposing a quantitative deduction system in the sense of~\cite{MardarePP:LICS16}, that is proved to be sound and complete w.r.t.\ the probabilistic bisimilarity distance of Desharnais et al. on labelled Markov chains. 
The proposed axiomatization extends Stark and Smolka's one~\cite{StarkS00} with three 
additional axioms ---the last of these borrowed from~\cite{MardarePP:LICS16}--- for expressing: 
(1) $1$-boundedness of the metric; (2) non-expansivity of the prefix operator; and 
(3) the Kantorovich lifting of a distance to probability distributions.

The resulting axiomatization is simpler than the one presented in~\cite{DArgenioGL14} for probabilistic transition systems and it extends~\cite{DArgenioGL14} by allowing recursive behaviors.

An important detail about our axiomatization that should be mentioned since the introduction, 
is that the proposed quantitative deduction system do not fully adhere to the conditions required 
to fit within the quantitative algebraic framework of~\cite{MardarePP:LICS16}. Indeed, one of the
conditions that we do not satisfy is soundness w.r.t.\ non-expansivity for the recursion operator.
To overcome this problem we propose to relax the original notion of quantitative deduction 
system from~\cite{MardarePP:LICS16} by not requiring non-expansivity of the algebraic 
operators. In this way we could not use the 
general proof technique of~\cite{MardarePP:LICS16} to obtain the completeness theorem, and 
we needed to appeal to specific properties of the distance.
The property used to prove completeness is $\omega$-cocontinuity (i.e., preservation 
of \emph{infima} of countable decreasing chains) of the functional operator defining  
the distance. Interestingly, the proof technique proposed in this paper appears to be generic 
on the used functional operator, provided its $\omega$-cocontinuity.
This generality is showed on a specific example, where we prove soundness and 
completeness for another (although similar) quantitative 
deduction system w.r.t.\ the \emph{discounted} bisimilarity distance of Desharnais et al.

As a final result we prove a metric analogue of Kleene's representation theorem for finite 
labelled Markov chains. Specifically, we show that the class of expressible behaviors 
correspond, up to bisimilarity, to the class of finite labelled Markov chains. 
Moreover, if the set of expressions is equipped with 
the pseudometric that is freely-induced by the axioms, this correspondence is metric invariant.
Note that this establishes a stronger correspondence than the usual ``equational''
Kleene's theorem.

\smallskip
This work is an extended version of the conference paper~\cite{BacciBLM16:CONCUR}.
In comparison to~\cite{BacciBLM16:CONCUR}, this paper includes new and more detailed examples, 
improved proofs of soundness and completeness of the axiomatization, and two new results: (i)
a sound and complete axiomatization for the discounted bisimilarity distance 
(Section~\ref{sec:discountaxioms}) and (ii) 
a quantitative Kleene's representation theorem for finite (open) Markov chains 
(Section~\ref{sec:expressibility}).

\subsubsection*{Synopsis} Section~\ref{sec:prelim} introduces the notation and the preliminary basic
concepts. In Section~\ref{sec:QDS} we recall the quantitative equational
framework of~\cite{MardarePP:LICS16}. Section~\ref{sec:AMC} presents the algebra of open 
Markov chains. 
In Section~\ref{sec:axiomatizationbisim} we present a quantitative deduction system (Section~\ref{sec:quantAxioms}) that is proved to be sound (Section~\ref{sec:soundness}) and complete 
(Section~\ref{sec:completeness}) w.r.t.\ the probabilistic bisimilarity distance of Desharnais et al.. 
Section~\ref{sec:discountaxioms} shows how to extend these results to the discounted version of the bisimilarity distance.
In Section~\ref{sec:expressibility} we present a quantitative Kleene's representation theorem
of finite open Markov chains.
Finally, in Section~\ref{sec:conclusion} we conclude and present suggestions for future work.

\section{Preliminaries and Notation} 
\label{sec:prelim}

For $R \subseteq X \times X$ an equivalence relation, we denote by $X /_R$ its quotient set. For two sets $X$ and $Y$, we denote by $X \uplus Y$ their disjoint union and by $[X \to Y]$ (or alternatively, $Y^X$) the set of all functions from $X$ to $Y$.

A \emph{discrete sub-probability distribution} on $X$ is a real-valued function \mbox{$\mu \colon X \to [0,1]$}, such that $\mu(X) \leq 1$,
where, for $E \subseteq X$, $\mu(E) = \sum_{x \in E} \mu(x)$. A sub-probability distribution is a \emph{(full) probability distribution} if $\mu(X) = 1$. The support of $\mu$ is the set of all points with
strictly positive probability, denoted as $\mathit{supp}(\mu) = \set{x \in X}{ \mu(x) > 0}$. 
We denote by $\Delta(X)$ and $\D(X)$ the set of discrete probability and finitely-supported sub-probability distributions on $X$, respectively.

A $1$-bounded \emph{pseudometric} on $X$ is a function $d \colon X \times X \to [0,1]$ such that, for any $x,y,z \in X$, $d(x,x) = 0$, $d(x,y) = d(y,x)$ and $d(x,y) + d(y,z) \geq d(x,z)$; $d$ is a \emph{metric} if, in addition, $d(x,y) = 0$ implies $x = y$. The pair $(X,d)$ is called \emph{(pseudo)metric space}. 
%For $n \in \naturals$, the \emph{$n$-th product (pseudo)metric space} of $(X,d)$ is defined as 
%$(X^n,d')$ where $d'((x_1,\dots,x_n),(y_1,\dots,y_n)) = \max_{i=1}^n d(x_i,y_i)$. 
The kernel of a (pseudo)metric $d$ is the set $\kernel{d} = \set{(x,y)}{ d(x,y) = 0}$. 

%%%%%%%%%%%%%%%%%%%%%%%%%%%%%%%%%%%%%%%%%%%%%%

\section{Quantitative Algebras and their Equational Theories} 
\label{sec:QDS}

We recall the notions of quantitative equational theory and quantitative algebra from~\cite{MardarePP:LICS16}.
 
Let $\Sigma$ be an algebraic signature of function symbols $f \colon n \in \Sigma$ of arity $n \in \naturals$.
Fix a countable set of \emph{metavariables} $X$, ranged over by $x,y,z, \ldots \in X$. We denote by $\TT[\Sigma]{X}$ the set of $\Sigma$-terms freely generated over $X$; terms will be ranged over by $t,s,u,\ldots$
A \emph{substitution of type $\Sigma$} is a function $\sigma\colon X \to \TT[\Sigma]{X}$ that is homomorphically extended to terms as $\sigma(f(t_1, \dots, t_n)) = f(\sigma(t_1), \dots, \sigma(t_n))$; by $\Sub[\Sigma]$ we denote the set of substitutions of type $\Sigma$.

A \emph{quantitative equation of type $\Sigma$} is an expression of the form $t \equiv_\e s$, where
$t,s \in \TT[\Sigma]{X}$ and $\e \in \prationals$. Let $\E[\Sigma]$ denote the set of quantitative equations of type
$\Sigma$. The subsets of $\E[\Sigma]$ will be ranged over by $\Gamma, \Theta, \Pi, \ldots \subseteq \E[\Sigma]$.

Let ${\vdash} \subseteq 2^{\E[\Sigma]} \times \E[\Sigma]$ be a binary relation from the powerset of $\E[\Sigma]$
to $\E[\Sigma]$. We write $\Gamma \vdash t \equiv_\e s$ whenever $(\Gamma,t \equiv_\e s) \in {\vdash}$, and $\vdash t \equiv_\e s$ for $\emptyset \vdash t \equiv_\e s$. We use $\Gamma \vdash \Theta$ as a shorthand to meaning that $\Gamma \vdash t \equiv_\e s$ holds for all $t \equiv_\e s \in \Theta$. The relation $\vdash$ is a \emph{quantitative deduction 
system of type $\Sigma$} if it satisfies the following axioms and rules
\begin{align*} 
(\Refl) \quad 
& \vdash t \equiv_0 t \,, \\
(\Symm) \quad 
& \{t\equiv_\e s\} \vdash s\equiv_\e t \,, \\
(\Triang) \quad 
& \{t \equiv_\e u, u \equiv_{\e'} s \} \vdash t \equiv_{\e+\e'} s \,, \\
(\Max) \quad 
& \{t\equiv_\e s\} \vdash t\equiv_{\e+\e'}s \,, \text{ for all $\e'>0$} \,, \\ 
(\Arch) \quad 
& \{t\equiv_{\e'}s\mid \e'>\e\} \vdash t\equiv_\e s \,, \\
(\Nexp) \quad
& \{t_1=_\e s_1,\ldots,t_n =_\e s_n\} \vdash f(t_1,\dots, t_n) \equiv_\e f(s_1,\dots, s_n) \,, 
\text{ for all $f \colon n \in \Sigma$} \,, \\
(\Subst) \quad
& \text{If $\Gamma \vdash t \equiv_\e s$, then $\sigma(\Gamma) \vdash \sigma(t) \equiv_\e \sigma(s)$, 
for all $\sigma \in \Sub[\Sigma]$} \,, \\
(\Cut) \quad 
& \text{If $\Gamma \vdash \Theta$ and $\Theta \vdash t \equiv_\e s$, then $\Gamma \vdash t \equiv_\e s$} \,, \\
(\Assum) \quad
& \text{If $t \equiv_\e s \in\Gamma$, then $\Gamma \vdash t \equiv_\e s$} \,.
\end{align*}
where $\sigma(\Gamma) = \set{\sigma(t) \equiv_\e \sigma(s)}{ t \equiv_\e s \in \Gamma}$.

\medskip
The rules (\Subst), (\Cut), (\Assum) are the classical deduction rules from equational logic.
The axioms (\Refl), (\Symm), (\Triang) correspond, respectively, to reflexivity, symmetry, and triangular 
inequality for a pseudometric; (\Max) represents inclusion of neighbourhoods of increasing diameter;
(\Arch) is the Archimedean property of the reals w.r.t.\ a decreasing chain of neighbourhoods with 
converging diameters; and (\Nexp) stands for non-expansivity of the algebraic operators $f \in \Sigma$.

\medskip
A \emph{quantitative equational theory} is a set $\U$ of universally quantified \emph{quantitative inferences}, (\ie, expressions of the form 
\begin{equation*}
\{t_1 \equiv_{\e_1} s_1, \dots, t_n \equiv_{\e_n} s_n\} \vdash t \equiv_\e s \,,
\end{equation*}
with a finite set of hypotheses) closed under $\vdash$-deducibility.
%(Note that $\U$ contains only inferences with finite set of hypotheses, however it is closed under (\Arch) which is infinitary).
A set $\A$ of quantitative inferences is said to axiomatize a quantitative equational theory $\U$, if $\U$ is the smallest quantitative equational theory containing $\A$.
A theory $\U$ is called \emph{inconsistent} if $\vdash x \equiv_0 y \in \U$, for distinct metavariables $x,y \in X$, it is called \emph{consistent} otherwise%
\footnote{Note that for an inconsistent theory $\U$, by $\Subst$, we have $\vdash t \equiv_0 s \in \U$, for all $t,s \in \TT[\Sigma]{X}$.}. 

\medskip
The models of quantitative equational theories are standard $\Sigma$-algebras equipped with a pseudometric,
called \emph{quantitative algebras}. 
\begin{defi}[Quantitative Algebra] \label{def:quantitativealgebra}
A \emph{quantitative $\Sigma$-algebra} is a tuple $\A = (A, \Sigma^{\A}, d^\A)$, consisting of a 
pseudometric space $(A, d^\A)$ and a set $\Sigma^\A = \set{f^\A \colon A^n \to A}{ f \colon n \in \Sigma }$ of \emph{interpretations} for the function symbols in $\Sigma$, required to be non-expansive w.r.t.\ $d^\A$, \ie, for all $1 \leq i \leq n$ and $a_i, b_i \in A$,
$d^\A(a_i, b_i) \geq d^\A( f^\A(a_1, \dots, a_n), f^\A(b_1, \dots, b_n))$.
\end{defi}
 
Morphisms of quantitative algebras are non-expansive homomorphisms.

\medskip
A quantitative algebra $\A = (A,\Sigma^\A,d^\A)$ \emph{satisfies} the quantitative inference $\Gamma \vdash t \equiv_\e s$, written $\Gamma \models_\A t \equiv_\e s$, if for any assignment of the meta-variables $\iota \colon X \to A$,
\begin{align*}
\big( \text{for all $t' \equiv_{\e'} s' \in \Gamma$, } d^\A(\iota(t'),\iota(s')) \leq \e'  \big) 
&& \text{implies} && 
d^\A(\iota(t),\iota(s)) \leq \e \,,
\end{align*}
where, for a term $t \in \TT[\Sigma]{X}$, $\iota(t)$ denotes the homomorphic interpretation of $t$ in $\A$.
A quantitative algebra $\A$ is said to \emph{satisfy} (or is a \emph{model} for) the quantitative theory $\U[]$, 
if whenever $\Gamma \vdash t \equiv_\e s \in \U[]$, then $\Gamma \models_\A t \equiv_\e s$. 
The collection of all models of a theory $\U$ of type $\Sigma$, 
is denoted by $\KK[\Sigma]{\U}$.

In~\cite{MardarePP:LICS16} it is shown that any quantitative theory $\U[]$ has a universal model 
$\freemodel[{\U[]}]$ (the freely generated $\vdash$-model) satisfying exactly those quantitative equations 
belonging to $\U[]$.
Moreover, \cite[Theorem~5.2]{MardarePP:LICS16} proves a completeness theorem for quantitative equational theories $\U$, stating that a quantitative inference is satisfied by all the algebras satisfying $\U$ if and only if it belongs to $\U$.
\begin{thm}[Completeness] For any given quantitative equational theory $\U$ of type $\Sigma$,
\begin{align*}
\big( \text{for any $\A \in \KK[\Sigma]{\U}$, $\Gamma \models_\A t \equiv_\e s$}  \big) 
&& \text{if and only if} && 
\Gamma \vdash t \equiv_\e s \in \U \,.
\end{align*}
\end{thm}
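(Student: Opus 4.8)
The statement is precisely \cite[Theorem~5.2]{MardarePP:LICS16}, and the plan is to follow its two halves: the right-to-left implication (soundness) is immediate, while the left-to-right implication (completeness) rests on the universal model $\freemodel[{\U[]}]$ recalled above. For soundness, suppose $\Gamma \vdash t \equiv_\e s \in \U[]$ and let $\A \in \KK[\Sigma]{\U[]}$ be arbitrary. By the very definition of $\A$ being a model of $\U[]$, membership $\Gamma \vdash t \equiv_\e s \in \U[]$ forces $\Gamma \models_\A t \equiv_\e s$; no structural property of $\vdash$ is needed beyond this.

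For completeness I would exhibit a single distinguished model in $\KK[\Sigma]{\U[]}$ on which the hypotheses $\Gamma$ hold at a canonical assignment, thereby forcing the conclusion to be derivable. Concretely, equip the term set $\TT[\Sigma]{X}$ with $d_\Gamma(t,s) = \inf \{ \e' \mid \Gamma \vdash t \equiv_{\e'} s \in \U[] \}$ and pass to the quotient by $\kernel{d_\Gamma}$; write $\freemodel[\Gamma]$ for the resulting quantitative algebra and $\iota_0$ for the canonical assignment sending each metavariable to its own class. Granting for the moment that $\freemodel[\Gamma] \in \KK[\Sigma]{\U[]}$, the argument closes quickly: for every $t' \equiv_{\e'} s' \in \Gamma$ we have $\Gamma \vdash t' \equiv_{\e'} s' \in \U[]$ by $(\Assum)$, hence $d_\Gamma(\iota_0(t'),\iota_0(s')) \leq \e'$, so $\iota_0$ satisfies all premises of $\Gamma$; applying the hypothesis $\Gamma \models_{\freemodel[\Gamma]} t \equiv_\e s$ to $\iota_0$ yields $d_\Gamma(\iota_0(t),\iota_0(s)) \leq \e$, which by construction of $d_\Gamma$ means exactly $\Gamma \vdash t \equiv_\e s \in \U[]$. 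When $\Gamma = \emptyset$ this $\freemodel[\Gamma]$ is just the universal model $\freemodel[{\U[]}]$, so nothing new is required in that case.

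The substance, and the step I expect to be the main obstacle, is verifying that $\freemodel[\Gamma]$ is a legitimate member of $\KK[\Sigma]{\U[]}$, which is where each deduction axiom earns its keep. That $d_\Gamma$ is a (possibly $\infty$-valued) pseudometric uses $(\Refl)$, $(\Symm)$ and $(\Triang)$, while $(\Arch)$ is needed to see that the defining infimum is attained, so that $d_\Gamma(t,s) \leq \e$ is genuinely equivalent to $\Gamma \vdash t \equiv_\e s \in \U[]$ (with $\leq$, not merely $<$); this equivalence is what makes the final line of the completeness argument a tautology rather than an approximation. That the $\Sigma$-operations descend to the quotient and are non-expansive follows from $(\Nexp)$ together with $(\Max)$ to align the errors of the several arguments. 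Finally, that $\freemodel[\Gamma]$ satisfies every inference of $\U[]$ is the delicate point: given an assignment realised by a substitution $\sigma$ under which the premises of some $\Theta \vdash u \equiv_\delta v \in \U[]$ hold, one applies $(\Subst)$ to obtain $\sigma(\Theta) \vdash \sigma(u) \equiv_\delta \sigma(v) \in \U[]$ and then $(\Cut)$ against the provable $\Gamma \vdash \sigma(\Theta)$ to conclude $\Gamma \vdash \sigma(u) \equiv_\delta \sigma(v) \in \U[]$, i.e.\ $d_\Gamma(\iota(u),\iota(v)) \leq \delta$. Assembling these verifications is routine, but it is exactly the content of the theorem.
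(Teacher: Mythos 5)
Your proposal is correct and takes essentially the same approach as the paper: the paper recalls this theorem from \cite[Theorem~5.2]{MardarePP:LICS16} without reproving it, and the proof it points to is precisely your construction ---the term model carrying the provability-induced distance $d_\Gamma(t,s)=\inf\set{\e'}{\Gamma \vdash t \equiv_{\e'} s \in \U[]}$, shown to lie in $\KK[\Sigma]{\U[]}$ using (\Refl), (\Symm), (\Triang) for the pseudometric, (\Max), (\Arch), (\Cut) for the equivalence $d_\Gamma(t,s)\leq\e$ iff $\Gamma \vdash t \equiv_\e s \in \U[]$, (\Nexp) for non-expansiveness, and (\Subst)/(\Cut) for satisfaction of all inferences of $\U[]$. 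In particular, for $\Gamma=\emptyset$ your model is exactly the freely generated $\vdash$-model $\freemodel[{\U[]}]$ that the paper invokes immediately before the theorem statement.
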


Furthermore, in \cite{MardarePP:LICS16} several interesting examples of quantitative equational theories have been proposed. The one we will focus on later in this paper is the so called interpolative barycentric equational theory (\cf~\S10 in~\cite{MardarePP:LICS16}).

%%%%%%%%%%%%%%%%%%%%%%%%%%%%%%%%%%%%%%%%%%%

\section{The Algebra of Probabilistic Behaviors} \label{sec:AMC}

Recall from the introduction that the aim of the paper is to study the quantitative algebraic properties of two different behavioral pseudometrics on Markov chains, namely the probabilistic bisimilarity distance and the total variation distance. This will be done by employing the framework of quantitative algebras and their equational theories.
For the moment we will focus only on the purely algebraic part, leaving to later sections the definition of the pseudometrics and the quantitative equational theories.

In this section we present the algebra of open Markov chains. Open Markov chains extend the familiar notion of discrete-time labelled Markov chain with ``open'' states taken from a fixed countable set $\X$ of names ranged over by $X, Y, Z, \ldots \in \X$. Names indicate states at which the behavior of the Markov chain can be extended by substitution of another Markov chain, in a way which will be made precise later.

\subsection{Open Markov Chains}
In what follows we fix a countable set $\Labels$ of labels, ranged over by $a, b, c, \ldots \in \Labels$.
Recall that $\D(M)$ denotes the set of \emph{finitely supported} discrete sub-probability distributions over a set $M$.
\begin{defi}[Open Markov Chain]  \label{def:OMC}
An \emph{open Markov chain} $\M = (M, \tau)$ consists of a set $M$ of \emph{states} and a \emph{transition probability function} $\tau \colon M \to \D((\Labels \times M) \uplus \X)$.
\end{defi}
Intuitively, if $\M$ is in a state $m \in M$, then with probability $\tau(m)(a, n)$ it emits $a \in \Labels$ and moves  to state $n \in M$, or it moves with probability $\tau(m)(X)$ to a name $X \in \X$ without emitting any label. 
A state $m \in M$ with probability zero of emitting any label and moving to any name, is called \emph{terminating}. 
A name $X \in \X$ is said to be \emph{unguarded} in a state $m \in M$, if $\tau(m)(X) > 0$. 
Clearly, open Markov chains where all names are guarded in each state (i.e., for all $m \in M$, $\tau(m)(\X) = 0$) are just standard labelled sub-probabilistic Markov chains.

A \emph{pointed open Markov chain}, denoted by $(\M, m)$, is an open Markov chain $\M = (M, \tau)$ with a distinguished \emph{initial} state $m \in M$. 

\smallskip
Hereafter, we will use $\M = (M, \tau)$, $\N = (N, \theta)$ to range over open Markov chains and $(\M,m)$, 
$(\N, n)$ to range over the set $\OMC$ of pointed open Markov chains. To ease the reading, we will often refer to the constituents of $\M$ and $\N$ implicitly, so that we will try to keep 
this notation consistent as much as possible along the paper.

For the definition of the algebra we will need to consider open Markov chains up to \emph{probabilistic bisimilarity}.
Next we recall its definition, due to Larsen and Skou~\cite{LarsenS89}%
\footnote{The original definition of probabilistic bisimulation was given for classical labelled Markov chains. Definition~\ref{def:bisimulation} is a straightforward generalization of the same concept for open Markov chains.}.
\begin{defi}[Bisimulation] \label{def:bisimulation}
An equivalence relation $R \subseteq M \times M$ is a \emph{bisimulation} on $\M$ if whenever $m \mathrel{R} m'$, then, for all $a \in \Labels$, $X \in \X$ and $C \in M/_R$, 
\begin{enumerate}[label={(\roman*)}]
  \item \label{bisimVars} $\tau(m)(X) = \tau(m')(X)$,
  \item \label{bisimCont} $\tau(m)(\{a\} \times C) = \tau(m')(\{a\} \times C)$.
\end{enumerate}
Two states $m, m' \in M$ are \emph{bisimilar} w.r.t.\ $\M$, written $m \bisim_{\M} m'$, if there exists a bisimulation relation on $\M$ relating them. 
\end{defi}
Intuitively, two states are bisimilar if they have the same probability of \ref{bisimVars} moving to a name $X \in \X$ and \ref{bisimCont} emitting a label $a \in \Labels$ and moving to the same 
bisimilarity class.

\smallskip
In the following we consider bisimilarity between pointed open Markov chains. Two pointed open Markov chains $(\M, m), (\N,n) \in \OMC$ are bisimilar, written $(\M,n) \bisim (\N,n)$, if $m$ and $n$ are bisimilar w.r.t.\ the disjoint union of $\M$ and $\N$ (denoted by $\M \oplus \N$) defined as expected. 
One can readily see that ${\bisim} \subseteq \OMC \times \OMC$ is an equivalence.

%%%%%%%%%%%%%%%%%%%%%%%%%%%%%%%%%%%%%%%%%%%%%%%%%%%%%%%

\subsection{An Algebra of Open Markov Chains} \label{sec:AOMC}
Next we turn to a simple algebra of pointed Markov chains. The signature of the algebra is defined as follows,
\begin{align*}
  \O = 
    &\set{X \colon 0}{ X \in \X} \cup {} \tag{\sc names} \\
    &\set{\pref{a}{(\cdot)} \colon 1 }{ a \in \Labels } \cup {}  \tag{\sc prefix} \\
    & \set{+_e \colon 2 }{ e \in [0,1]} \cup {}  \tag{\sc probabilistic choice} \\
    & \set{ \rec{X}{} \colon 1}{ X \in \X } \,,
    	\tag{\sc recursion}
\end{align*}
consisting of a constant $X$ for each name in $\X$; a prefix $\pref{a}{\cdot}$ and a recursion $\rec{X}{}$ unary operators, for each $a \in \Labels$ and $X \in \X$; and a probabilistic choice $+_e$ binary operator for each $e \in [0,1]$. For $t \in \TT[\O]{M}$, $\fn{t}$ denotes the set of free names in $t$, where the notions of \emph{free} and \emph{bound name} are defined in the standard way, with $\rec{X}{}$ acting as a binding construct. A term is \emph{closed} if it does not contain any free name. Throughout the paper we consider two terms as syntactically identical if they are identical up to renaming of their bound names ($\alpha$-equivalence). For $t, s_1, \dots, s_n \in \TT[\O]{M}$ and an $n$-vector $\ol{X} = (X_1,\dots, X_n)$ of distinct names, $t[\ol{s}/\ol{X}]$ denotes the simultaneous \emph{capture avoiding substitution} of $X_i$ in $t$ with $s_i$, for $i=1,\dots,n$.  
A name $X$ is \emph{guarded}\footnote{This notion, coincides with the one in~\cite{StarkS00}, though our definition may seem more involved due to the fact that we allow the probabilistic choice operators $+_e$ with $e$ ranging in the closed interval $[0,1]$.} in a term $t$ if every free occurrence of $X$ in $t$ occurs within a context the following forms: $\pref{a}{[\cdot]}$, $s +_1 [\cdot]$, or $[\cdot] +_0 s$.
 
Since from now on we will only refer to terms constructed over the signature $\O$, we will simply write $\TT{M}$ and $\TT{}$, in place of $\TT[\O]{M}$ and $\TT[\O]{\emptyset}$, respectively.

\smallskip
To give the interpretation of the operators in $\O$, we define an operator $\UU$ on open Markov chains, taking $\M$ to the open Markov chain $\UU(\M) = (\TT{M}, \tauTT[\M])$, where the transition probability function
$\tauTT[\M]$ is defined as the least solution over the complete partial order of the set of all functions mapping elements in $\TT{M}$ to a $[0,1]$-valued functions from $(\Labels \times \TT{M}) \uplus \X$, ordered point-wise of the recursive equation
\begin{equation*}
 \tauTT[\M] = \PP_\M\big(\tauTT[\M]\big) \,.
\end{equation*}
The functional operator $\PP_\M$ is defined by structural induction on $\TT{M}$, for arbitrary functions 
$\theta \colon \TT{M} \to [(\Labels \times \TT{M}) \uplus \X \to [0,1]]$, as follows:
\begin{align*}
\begin{aligned}
  \PP_\M(\theta)(m) &= \tau(m) \\
  \PP_\M(\theta)(X) &= \ind{\{X\}}
\end{aligned}
&&
\begin{aligned}
   \PP_\M(\theta)(\pref{a}{t}) &= \ind{\{(a,t)\}} \\
  \PP_\M(\theta)(t +_e s) &= e \theta(t) + (1-e) \theta(s) \\
  \PP_\M(\theta)(\rec{X}{t}) &= \theta(t[\rec{X}{t}/X]) \,,
\end{aligned}
\end{align*}
where $\ind{E}$ denotes the characteristic function of the set $E$.

The existence of the least solution is guaranteed by Tarski's fixed point theorem, since $\PP_\M$
is a monotone operator over the complete partial order defined above.
Notice that, requiring $\tauTT[\M]$ to be the \emph{least} solution is essential for ensuring it to be a proper transition probability function, \ie, that for all $t \in \TT{M}$, $\tauTT[\M](t) \in \D((\Labels \times \TT{M}) \uplus \X)$.
We would also like to point out that, for all $X \in \X$, the above definition renders $\rec{X}{X}$ a terminating state in $\UU(\M)$, that is, $\mu_\M(\rec{X}{X})(\Labels \times \TT{M}) = 0$ and $\mu_\M(\rec{X}{X})(\X) = 0$.

\begin{rem}
The definition of $\mu_\M$ corresponds essentially to the transition probability of the operational semantics of probabilistic processes given by Stark and Smolka in~\cite{StarkS00}. The only difference with their semantics is that the one above is defined over generic terms in $\TT{M}$ rather that just in $\TT{}$. Moreover, our formulation simplifies theirs by skipping the definition of a labelled transition system. 
For a detailed discussion about the well-definition of $\mu_\M$ we refer the interested reader to~\cite{AcetoEI02} and \cite{StarkS00}. 
\end{rem}

\begin{defi}[Universal open Markov chain]
Let $\M$ be an open Markov chain.
The \emph{universal open Markov chain} w.r.t.\ $\M$ is given by $\UU(\M)$.
\end{defi}
The reason why it is called universal will be clarified soon. As for now, just notice that $\UU(\M_\emptyset)$, where $\M_\emptyset = (\emptyset, \tau_\emptyset)$ is the open Markov chain with empty transition function, has $\TT{}$ as the set of states and that its transition probability function corresponds to the one defined in~\cite{StarkS00}. To ease the notation we will denote $\UU(\M_\emptyset)$ as $\UU = (\TT{}, \tauTT{})$.

\medskip
Finally we are ready to define the $\O$-algebra of pointed open Markov chains.

For arbitrary pointed open Markov chains $(\M, m)$, $(\N, n)$ and $n$-ary operator $f \in \O$, define 
$f^\textsf{omc} \colon \OMC^n \to \OMC \in \Sigma^\textsf{omc}$ as follows:
\begin{align*}
%\begin{aligned}
  X^\textsf{omc} &= (\UU, X) \,, \\
  (\pref{a}{(\M,m)})^\textsf{omc} &= (\UU(\M), \pref{a}{m}) \,,
%\end{aligned}
%&&
%\begin{aligned}
\\
  (\M,m) \mathbin{+_e^\textsf{omc}} (\N, n) &= (\UU(\M \oplus \N), m +_e n) \,, \\
  (\rec{X}{(\M,m)})^\textsf{omc} &= (\UU(\M^*_{X,m}), \rec{X}{m}) \,,
%\end{aligned}
\end{align*}
where, for $\M = (M,\tau)$, $\M^*_{X,m}$ denotes the open Markov chain $(M, \tau^*_{X,m})$ with transition function defined, for all $m' \in M$ and $E \subseteq (\Labels \times M) \uplus \X$, as
\begin{equation*}
\tau^*_{X,m}(m')(E) = \tau(m')(X) \tau(m)(E \setminus \{X \}) + \tau(m')(X^c) \tau(m')(E \setminus \{X\}) \,.
\end{equation*}
where $X^c = ((\Labels \times M) \uplus \X)\setminus \{X\}$.
Intuitively, $\tau^*_{X,m}$ modifies $\tau$ by removing the name $X \in \X$ from the support of $\tau(m')$ and replacing it with the probabilistic behavior of $m$.

\begin{defi} \label{def:aOMC}
The \emph{algebra of open pointed Markov chains} is $(\OMC, \O^\textsf{omc})$.
\end{defi}

%\textcolor{red}{Conguence for bisimilarity
%\begin{lem}[Conguence]
%Let $f \colon k \in \O$. Then, if for all $1 \leq i \leq k$, $(\M_i,m_i) \bisim (\N_i,n_i)$, then
%$f^\textsf{omc}((\M_1,m_1), \dots, (\M_n,m_k)) \bisim f^\textsf{omc}((\N_1,n_1), \dots, (\N_n,n_k))$.
%\end{lem}
%}

The (initial) semantics of terms $t \in \TT{}$ as pointed open Markov chains is given via the 
$\Sigma$-homomorphism of algebras $\denot{\cdot} \colon \TT{} \to \OMC$, defined by induction 
on terms as follows
\begin{align*}
\begin{aligned}
\denot{X} &= X^\textsf{omc} \, \\
\denot{\pref{a}{t}} &= (\pref{a}{\denot{t}})^\textsf{omc} \,,
\end{aligned}
&&
\begin{aligned}
\denot{t +_e s} &= \denot{t} \mathbin{+_e^\textsf{omc}} \denot{s} \,, \\
\denot{\rec{X}{t}} &= (\rec{X}{\denot{t}})^\textsf{omc} \,.
\end{aligned}
\tag{\sc semantics}
\end{align*}

\begin{exa} \label{ex:semanticsTerms}
To clarify the semantics of $\O$-terms, we show a step-by-step construction 
of the pointed open Markov chain $\denot{\rec{X}{(\pref{a}{X} +_{\frac{1}{2}} Z)}}$.
We start by giving the semantics of the sub-terms $Z$, $\pref{a}{X}$, and 
$\pref{a}{X} +_{\frac{1}{2}} Z$.
\def\skiph{1}
\def\skipv{1.3}
\begin{align*}
% Semantics of a variable
\denot{Z} = 
\tikz[labels, baseline={(current bounding box.center)}]{ 
  \draw (0,0) node[state, initial] (z) {$Z$}; 
  \draw ($(z)+(down:\skipv)$) node[vars] (zvar) {$Z$};
  \path[-latex, font=\scriptsize]
    (z) edge node[right] {$1$} (zvar);
}
&&
% Semantics of the prefix
\denot{a.X} = 
\tikz[labels, baseline={(current bounding box.center)}]{ 
  \draw (0,0) node[state, initial] (aX) {$\pref{a}{X}$}; 
  \draw ($(aX)+(down:\skipv)$) node[state] (x) {$X$};
  \draw ($(x)+(down:\skipv)$) node[vars] (xvar) {$X$};
  \path[-latex, font=\scriptsize]
    (aX) edge node[right] {$a,1$} (x)
    (x) edge node[right] {$1$} (xvar);
}
&&
% Semantics of the probabilistic choice
\denot{\pref{a}{X} +_{\frac{1}{2}} Z} = 
\tikz[labels, baseline={(current bounding box.center)}]{ 
  \draw (0,0) node[state, initial] (choice) {$\pref{a}{X} +_{\frac{1}{2}} Z$}; 
  \draw ($(choice)+(down:\skipv)+(left:\skiph)$) node[state] (x) {$X$};
  \draw ($(choice)+(down:\skipv)+(right:\skiph)$) node[vars] (zvar) {$Z$};
  \draw ($(x)+(down:\skipv)$) node[vars] (xvar) {$X$};
  \path[-latex, font=\scriptsize]
    (choice) edge node[left] {$a,\frac{1}{2}$} (x)
    (choice) edge node[right] {$\frac{1}{2}$} (zvar)
    (x) edge node[right] {$1$} (xvar);
}
\end{align*}
Note that by the definition of the semantic, the pointed Markov chains associated with each term
has the set of terms $\TT{}$ as set of states (hence a countable state space). For the sake of readability, the pointed Markov chains above are presented using the usual graphical representation where only the states reachable from the initial one are shown. White-colored nodes represent the states; grey-colored ones names; and initial states are marked in bold. 

The semantics of $\rec{X}{(\pref{a}{X} +_{\frac{1}{2}} Z)}$ is obtained from 
$\denot{\pref{a}{X} +_{\frac{1}{2}} Z}$ as 
follows:
\begin{align*}
% Semantics of recusion
\denot{\rec{X}{\pref{a}{X} +_{\frac{1}{2}} Z}} = 
(\rec{X}{\denot{\pref{a}{X} +_{\frac{1}{2}} Z}})^\textsf{omc} =
\tikz[labels, baseline={(current bounding box.center)}]{ 
  \draw (0,0) node[state, initial] (rec) {$\rec{X}{(\pref{a}{X} +_{\frac{1}{2}} Z)}$}; 
  \draw ($(rec)+(down:\skipv)+(left:\skiph)$) node[state] (x) {$X$};
  \draw ($(rec)+(down:\skipv)+(right:\skiph)$) node[vars] (zvar) {$Z$};
  \path[-latex, font=\scriptsize]
    (rec) edge node[left] {$a,\frac{1}{2}$} (x)
    (rec) edge node[right] {$\frac{1}{2}$} (zvar)
    (x) edge[loop below] node[left] {$a,\frac{1}{2}$} (x)
    (x) edge node[below] {$\frac{1}{2}$} (zvar);
}
\quad\bisim\quad
\tikz[labels, baseline={([yshift=-1.7ex]current bounding box.center)}]{ 
  \draw (0,0) node[state, initial] (m) {$m$}; 
  \draw ($(m)+(down:\skipv)$) node[vars] (zvar) {$Z$};
  \path[-latex, font=\scriptsize]
    (m) edge node[right] {$\frac{1}{2}$} (zvar)
    (m) edge[loop right] node[right] {$a,\frac{1}{2}$} (m);
}
\end{align*}
It is important remarking that by definition of the interpretation of the recursion operator,
for each $t \in \TT{M}$, the states $\rec{X}{t}$ and $X$ are always bisimilar in 
$\denot{\rec{X}{t}}$ (\cf\ picture). 
Hence, the semantics of $\rec{X}{\pref{a}{X} +_{\frac{1}{2}} Z}$ is bisimilar to the pointed 
open Markov chain with initial state $m$ depicted on the right hand side of 
$\denot{\rec{X}{\pref{a}{X} +_{\frac{1}{2}} Z}}$.
\qed
\end{exa}

The next result states that it is totally equivalent to reason about the equivalence of the behavior of 
$\denot{t}$ and $\denot{s}$ by just considering bisimilarity between the corresponding states $t$ and 
$s$ in the universal open Markov chain $\UU$.
\begin{thm}[Universality] \label{th:universal}
For all $t \in \TT{}$, $\denot{t} \bisim (\UU, t)$.
\end{thm}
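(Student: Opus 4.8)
The plan is to prove the statement by structural induction on $t \in \TT{}$, observing that every subterm of a term in $\TT{}$ is again in $\TT{}$ (names are constants of $\O$ and $\rec{X}{}$ merely binds them), so the induction hypothesis is available on the immediate subterms. The enabling fact I would isolate first is that $\UU(\M)$ \emph{conservatively extends} $\M$: by the clause $\PP_\M(\theta)(m) = \tau(m)$, each state $m \in M$ retains its $\M$-behaviour inside $\UU(\M)$, and since $\tau(m)$ is supported on $(\Labels \times M) \uplus \X$ its successors are again in $M$ (or names, which become self-looping states). Hence any bisimulation $B$ witnessing a hypothesis on $\M \oplus \N$ lifts to a bisimulation on $\UU(\M) \oplus \N$ once one adjoins the identity on the shared name-states. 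This lets me reuse, in each inductive case, the bisimulation provided by the hypothesis and merely add the single new pair created by the outermost operator.

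The base case $t = X$ is immediate, since $\denot{X} = X^\textsf{omc} = (\UU, X)$ by definition. For $t = \pref{a}{s}$, writing $\denot{s} = (\M,m)$ with hypothesis $(\M,m) \bisim (\UU, s)$ via $B$, we have $\denot{\pref{a}{s}} = (\UU(\M), \pref{a}{m})$; by the clause $\PP_\M(\theta)(\pref{a}{t}) = \ind{\{(a,t)\}}$ the state $\pref{a}{m}$ emits $a$ and moves to $m$ with probability $1$, exactly as $\pref{a}{s}$ moves to $s$ in $\UU$, so $B$ augmented with $(\pref{a}{m}, \pref{a}{s})$ is a bisimulation. The case $t = s_1 +_e s_2$, with $\denot{s_i} = (\M_i, m_i)$, is analogous through $\PP_\M(\theta)(t +_e s) = e\,\theta(t) + (1-e)\,\theta(s)$: I take the union of the two hypothesis bisimulations on $\UU(\M_1 \oplus \M_2) \oplus \UU$ and adjoin $(m_1 +_e m_2,\, s_1 +_e s_2)$, the only delicate point being to close the union into an equivalence so that the $e$-weighted successor masses agree on common classes — precisely the standard argument that probabilistic bisimilarity is a congruence for convex combination.

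The real work is the recursion case $t = \rec{X}{s}$, where $\denot{\rec{X}{s}} = (\UU(\M^*_{X,m}), \rec{X}{m})$ with $(\M,m) = \denot{s}$ and hypothesis $(\M,m) \bisim (\UU, s)$ via $B$. Two consequences of the clause $\PP_\M(\theta)(\rec{X}{t}) = \theta(t[\rec{X}{t}/X])$ organise the argument: syntactically, $\rec{X}{s}$ and its unfolding $s[\rec{X}{s}/X]$ share their one-step behaviour in $\UU$, hence are bisimilar; and since the metavariable $m$ has no free occurrence of the name $X$, the same clause gives $\tauTT[\M^*_{X,m}](\rec{X}{m}) = \tau^*_{X,m}(m)$, so $\rec{X}{m} \bisim m$ inside $\UU(\M^*_{X,m})$ (the phenomenon illustrated in Example~\ref{ex:semanticsTerms}). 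The candidate relation on $\UU(\M^*_{X,m}) \oplus \UU$ is $R = \set{(m', u[\rec{X}{s}/X])}{m' \mathrel{B} u}$, closed into an equivalence and under bisimilarity; the pair we need, $(\rec{X}{m}, \rec{X}{s})$, arises from $(m,s) \in B$ together with the two bisimilarities just noted. To verify $R$ I would prove a substitution lemma: $\tauTT{}(u[\rec{X}{s}/X])$ is obtained from $\tauTT{}(u)$ by pushing $[\rec{X}{s}/X]$ through the successors and redirecting the mass that $\tauTT{}(u)$ places on the name $X$ to $\rec{X}{s}$; dually, $\tau^*_{X,m}$ redirects the $X$-mass of $\tau(m')$ to the behaviour of $m$. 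Matching these two redirections is exactly what $(m, \rec{X}{s}) \in \bar R$ encodes, while the non-$X$ parts correspond through $B$ and the substitution lemma.

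I expect the recursion case to be the main obstacle, for two reasons. First, the substitution lemma for $\tauTT{}$: showing that the least-fixed-point transition function commutes with capture-avoiding substitution $[\rec{X}{s}/X]$ requires care with $\alpha$-renaming and nested binders $\rec{Y}{\cdot}$, and one must check compatibility with the finitely-supported sub-distribution structure. Second, closing the fixed point: the correspondence ``redirect $X$ to $m$'' versus ``redirect $X$ to $\rec{X}{s}$'' is self-referential, so one must confirm that $R$ is genuinely closed under transitions rather than merely plausible — concretely, that every state reachable from $\rec{X}{m}$ in $\UU(\M^*_{X,m})$ has the form $m'$ with $m' \mathrel{B} u$ for a suitable $u$, with matching image $u[\rec{X}{s}/X]$ in $\UU$. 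The prefix and choice cases, by contrast, are routine once conservativity of $\UU(\M)$ over $\M$ is established.
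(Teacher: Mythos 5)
Your proposal is correct and follows the same overall skeleton as the paper's proof: structural induction on $t$, an immediate base case, prefix and probabilistic choice handled by extending the inductive-hypothesis bisimulations (your explicit ``conservativity'' of $\UU(\M)$ over $\M$ is exactly the fact the paper dismisses as routine), and all real work concentrated in the recursion case. Where you diverge is in how that case is packaged. The paper factors it through the intermediate pointed chain $(\rec{X}{(\UU,t')})^\textsf{omc}$: first $(\UU, \rec{X}{t'}) \bisim (\rec{X}{(\UU,t')})^\textsf{omc}$, which is precisely where your ``substitution versus redirect'' correspondence lives (unfolding $\tauTT(\rec{X}{t'}) = \tauTT(t'[\rec{X}{t'}/X])$ on one side, $\tau^*_{X,t'}$ on the other), and then $(\rec{X}{(\UU,t')})^\textsf{omc} \bisim (\rec{X}{\denot{t'}})^\textsf{omc}$, i.e.\ that the semantic recursion operator preserves bisimilarity, which is where your lifting of $B$ through $\tau^*_{X,m}$ lives; the conclusion follows by transitivity. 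Your relation $R = \set{(m', u[\rec{X}{s}/X])}{m' \mathrel{B} u}$ is literally the relational composition of the two bisimulations these steps would construct, verified in one shot. The ingredients are therefore the same on either route --- in particular the substitution lemma for the least-fixed-point transition function, with the $\alpha$-renaming and approximant-induction subtleties you flag, is unavoidable in the paper's first step too, just left implicit. What the paper's factorization buys is two independently meaningful, separately checkable lemmas (a universality statement for recursion over $\UU$, and congruence of $(\rec{X}{\cdot})^\textsf{omc}$ w.r.t.\ $\bisim$), each verified against a single correspondence; what your direct construction buys is avoiding the intermediate model, at the price of a relation that must track the redirect on the $\M$-side and the substitution on the $\UU$-side simultaneously --- which is exactly why you need closure under bisimilarity and the ``self-referential'' check you identify. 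Both organizations are sound, and yours is no less detailed than the paper's own sketch.
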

\begin{proof}[Proof (sketch)] The proof of $\denot{t} \bisim (\UU, t)$ is by induction on $t$. The base case is trivial. The cases for the prefix and probabilistic choice operations are completely routine from the definition of the interpretations and the operator $\UU \colon \OMC \to \OMC$ (in each case a bisimulation can be constructed from those given by the inductive hypothesis).
The only nontrivial case is when $t = \rec{X}{t'}$. The proof carries over in two steps. First one shows that
$(\UU, \rec{X}{t'}) \bisim (\rec{X}{(\UU,t')})^\textsf{omc}$; then, by using the inductive hypothesis $\denot{t'} \bisim (\UU, t')$, that $(\rec{X}{(\UU,t')})^\textsf{omc} \bisim (\rec{X}{\denot{t'}})^\textsf{omc}$. Since $\denot{\rec{X}{t'}} = (\rec{X}{\denot{t'}})^\textsf{omc}$, by transitivity of the bisimilarity relation $\denot{\rec{X}{t'}} \bisim (\UU, \rec{X}{t'})$.
\end{proof}

\begin{rem} \label{rmk:souness&completenessStarkSmolka}
We already noted that the universal open Markov chain $\UU$ corresponds to the operational semantics of probabilistic expressions given by Stark and Smolka~\cite{StarkS00}. In the light of Theorem~\ref{th:universal}, the soundness and completeness results for axiomatic equational system w.r.t.\ probabilistic bisimilarity over probabilistic expressions given in~\cite{StarkS00}, can be moved without further efforts to the class of open Markov chains of the form $\denot{t}$.
\end{rem}

%%%%%%%%%%%%%%%%%%%%%%%%%%%%%%%%%%%%%%%%%%%

\section{Axiomatization of the Bisimilarity Distance} 
\label{sec:axiomatizationbisim}

In this section we present the probabilistic bisimilarity distance of Desharnais et al.~\cite{DesharnaisGJP04}, that we use to define a quantitative algebra of open Markov chains.
After that, we define a quantitative deduction system which we prove to be sound and complete w.r.t.\ 
probabilistic bisimilarity distance.

We will see that for obtaining these results we cannot directly use the quantitative algebraic
framework of Mardare, Panangaden, and Plotkin~\cite{MardarePP:LICS16} recalled in 
Section~\ref{sec:QDS}, because the recursive operator does not satisfy the conditions 
required to applying their general equational theory. The divergences from~\cite{MardarePP:LICS16} in our development of a quantitative equational theory for the bisimilarity distance over open 
Markov chains will be explained at length as soon as we introduce them.

\subsection{The Probabilistic Bisimilarity Distance}

The notion of probabilistic bisimilarity can be lifted to a pseudometric by means of a straightforward extension to open Markov chains of the probabilistic bisimilarity distance of Desharnais et al.~\cite{DesharnaisGJP04}. 
For more details about its original definition and properties, we refer the interested reader to~\cite{DesharnaisGJP04,BreugelW:icalp01}. 

This distance is based on the \emph{Kantorovich (pseudo)metric} between discrete probability distributions $\mu, \nu \in \Delta(A)$ over a finite set $A$ with underlying (pseudo)metric $d$, defined as  
\begin{equation*}
 \K[d]{\mu,\nu} = \min \set{\sum_{x,y \in A}d(x,y) \cdot \omega(x,y)}{ \omega \in \coupling{\mu}{\nu} } \,.
 \tag{\sc Kantorovich metric}
\end{equation*}
where $\coupling{\mu}{\nu}$ denotes the set of \emph{couplings} for $(\mu, \nu)$, \ie, distributions $\omega \in \Delta(A \times A)$ on the cartesian product $A \times A$ such that, for all $E \subseteq A$, $\omega(E \times A) = \mu(E)$ and $\omega(A \times E) = \nu(E)$.

\begin{rem}
The definition of $\K{}$ is tailored on probability distributions, whereas later we will use it on sub-probability distributions. In these situations one usually interpret sub-probabilities $\mu$ in $A$ as full-probabilities 
$\mu^*$ in $A_\bot$ (i.e., $A$ extended with a bottom element $\bot$ that is assumed to be at maximum distance from all elements $a \in A$) uniquely defined as $\mu^*(E) = \mu(E)$, for all $E \subseteq A$, and $\mu^*(\bot) = 1 - \mu(A)$.
\end{rem}

\begin{defi}[Bisimilarity Distance] \label{def:bisimdist}
Let $\M = (M,\tau)$ be an open Markov chain. The \emph{probabilistic bisimilarity pseudometric} 
$\dist[\M] \colon M \times M \to [0,1]$ on $\M$ is the least fixed-point of the following functional 
operator on $1$-bounded pseudometrics (ordered point-wise),
\begin{equation*}
  \Psi_\M(d)(m,m') =  \K[\Lambda(d)]{\tau^*(m),\tau^*(m')}
  \tag{\sc Kantorovich Operator}
\end{equation*}
where $\Lambda(d)$ is the greatest 1-bounded pseudometric on $\big((\Labels \times M) \uplus \X \big)_\bot$ such that, for all $a \in \Labels$ and $m,m' \in M$, $\Lambda(d)((a,m),(a,m')) = d(m,m')$.
\end{defi}

The well definednees of $\dist[\M]$ is ensured by monotonicity of $\Psi_\M$ (Lemma~\ref{lem:PsiContinuous}) and Knaster-Tarski fixed-point theorem once it is noticed that the set of $1$-bounded pseudometrics with point-wise order $d \sqsubseteq d'$ iff $d(m,n) \leq d'(m,n)$, for all $m,n \in M$, is a complete partial order.

Hereafter, whenever $\M$ is clear from the context we will simply write $\dist$ and $\Psi$ in place of $\dist[\M]$ and $\Psi_\M$, respectively.

\begin{exa} \label{ex:bisimidist}
To better understand how the functional operator $\Psi$ works, we look at a simple 
example of computation of the probabilistic bisimilarity distance $\dist$ between two states.
Consider the open Markov chain and coupling $\omega$ for the 
transition probability distribution $(\tau(m),\tau(n))$ of the states $m$ and $n$ given as follows:
\def\skiph{1}
\def\skipv{1.3}
\begin{align*}
\tikz[labels, baseline={(current bounding box.center)}]{ 
  \draw (0,0) node[state] (m) {$m$}; 
  \draw ($(m)+(right:\skiph)$) node[state] (n) {$n$};
  \draw ($(m)!0.5!(n)+(down:\skipv)$) node[vars] (zvar) {$Z$};
  \path[-latex, font=\scriptsize]
    (m) edge[bend right] node[left] {$\frac{1}{2}$} (zvar)
    (m) edge[loop left] node[left] {$a,\frac{1}{2}$} (m)
    (n) edge[bend left] node[right] {$\frac{2}{3}$} (zvar)
    (n) edge[loop right] node[right] {$a,\frac{1}{3}$} (n);
}
&&
\begin{aligned}
  \omega(u,v) = 
  \begin{cases}
    \frac{1}{3} & \text{if $u = (a,m)$ and $v = (a,n)$} \\
    \frac{1}{6} & \text{if $u = (a,m)$ and v = $Z$} \\
    \frac{1}{2} & \text{if $u = v = Z$ } \\
    0 & \text{otherwise} \,.
  \end{cases}
\end{aligned}
\end{align*}
By definition, $\dist(m,n) = \Psi(\dist)(m,n) = \K[\Lambda(\dist)]{\tau^*(m),\tau^*(n)}$. 
Since $\tau(m)$ and $\tau(n)$ are fully-probability distributions, we apply the direct definition
of Kantorovich distance to get 
\begin{align*}
  \dist(m,n) 
  &= \K[\Lambda(\dist)]{\tau(m),\tau(n)} \\
  &\leq \frac{1}{3} \Lambda(\dist)((a,m), (a,n)) + 
  	\frac{1}{6} \Lambda(\dist)((a,m), Z) +
	\frac{1}{3} \Lambda(\dist)(Z, Z) \,.
\end{align*}
By definition of $\Lambda$, we have that $\Lambda(\dist)((a,m), (a,n)) = \dist(m,n)$, 
$\Lambda(\dist)((a,m), Z) = 1$, and $\Lambda(\dist)(Z, Z) = 0$. 
 
By an easy analysis of the inequality obtained above, one may readily notice that the coupling
$\coupling{\tau(m)}{\tau(n)}$ that minimizes the distance between $m$ and $n$ is the one maximizing
the probability mass on the pairs $((a,m),(a,m))$ and $(Z,Z)$. Since $\omega$ is already 
doing so, we have that $\dist(m,n) = \frac{1}{3} \dist(m,n) + \frac{1}{6}$, thus the distance is $\dist(m,n) = \frac{1}{4}$.
\qed
\end{exa}

By the next lemma and Kleene fixed-point theorem, the bisimilarity distance can be alternatively 
characterized as $\dist = \bigsqcup_{n \in \naturals} \Psi^n(\mathbf{0})$, where $\mathbf{0}$ is the bottom element of the set of $1$-bounded pseudoemetrics ordered point-wise (i.e., the constant $0$ pseudometric).
\begin{lem} \label{lem:PsiContinuous}
$\Psi$ is monotonic and $\omega$-continuous, \ie, for any countable increasing sequence 
$d_0 \sqsubseteq d_1 \sqsubseteq d_2 \sqsubseteq \dots $, it holds $\bigsqcup_{i \in \naturals} \Psi(d_i) = \Psi(\bigsqcup_{i \in \naturals} d_i)$.
\end{lem}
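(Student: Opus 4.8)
The plan is to prove monotonicity and $\omega$-continuity of $\Psi$ by reducing both properties to corresponding properties of the Kantorovich operator $\K{}$ and of the lifting $\Lambda$. The overall strategy is compositional: since $\Psi_\M(d)(m,m') = \K[\Lambda(d)]{\tau^*(m),\tau^*(m')}$ is built by first applying $\Lambda$ to the pseudometric $d$ and then computing a Kantorovich distance, it suffices to establish that each of these two building blocks behaves well with respect to the pointwise order and to suprema of increasing chains.

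**First I would** dispatch monotonicity. Suppose $d \sqsubseteq d'$, i.e.\ $d(m,m') \leq d'(m,m')$ for all $m,m' \in M$. By the defining property of $\Lambda$ as the greatest $1$-bounded pseudometric agreeing with $d$ on the pairs $((a,m),(a,m'))$, one checks that $d \sqsubseteq d'$ implies $\Lambda(d) \sqsubseteq \Lambda(d')$ as pseudometrics on $\big((\Labels \times M)\uplus\X\big)_\bot$; this is immediate on the generating pairs, and the greatest-pseudometric characterization transports the inequality to all pairs. Then I would observe that $\K[e]{\mu,\nu}$ is monotone in the underlying (pseudo)metric $e$: if $e \sqsubseteq e'$, then for every coupling $\omega \in \coupling{\mu}{\nu}$ we have $\sum_{x,y} e(x,y)\,\omega(x,y) \leq \sum_{x,y} e'(x,y)\,\omega(x,y)$, and taking minima over the (common) set of couplings preserves the inequality. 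Composing these two monotonicities yields $\Psi(d) \sqsubseteq \Psi(d')$.

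**For $\omega$-continuity**, given an increasing chain $d_0 \sqsubseteq d_1 \sqsubseteq \cdots$ with supremum $d = \bigsqcup_i d_i$, monotonicity already gives $\bigsqcup_i \Psi(d_i) \sqsubseteq \Psi(d)$, so the real content is the reverse inequality $\Psi(\bigsqcup_i d_i) \sqsubseteq \bigsqcup_i \Psi(d_i)$. I would first argue that $\Lambda$ commutes with suprema of increasing chains, $\Lambda(\bigsqcup_i d_i) = \bigsqcup_i \Lambda(d_i)$, which again reduces to checking agreement on the generating pairs (where it is pointwise) and using the greatest-pseudometric characterization. The crux is then to show that $\K[-]{\mu,\nu}$ preserves suprema of increasing chains of the cost metric. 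Here I would fix the states $m,m'$, write $e_i = \Lambda(d_i)$ and $e = \bigsqcup_i e_i$, and exploit the fact that the minimum in $\K[e]{\mu,\nu}$ is attained, for each $i$, at some optimal coupling $\omega_i \in \coupling{\mu^*(m)}{\mu^*(m')}$.

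**The hard part will be** this last continuity step for $\K{}$, because the supremum and the minimization do not obviously commute. The essential ingredient that makes it work is \emph{finiteness of the support}: since $\M$ is relevant only through finitely-supported distributions $\tau^*(m)$, the set $\coupling{\mu^*(m)}{\mu^*(m')}$ of couplings is a compact subset of a finite-dimensional simplex, and the objective $\omega \mapsto \sum_{x,y} e(x,y)\,\omega(x,y)$ depends continuously on $e$. I would extract a coupling $\omega_\infty$ that is a cluster point of the optimal couplings $\omega_i$ (available by compactness) and show it is simultaneously near-optimal for all the $e_i$ in the limit, giving $\K[e]{\mu,\nu} \leq \sup_i \K[e_i]{\mu,\nu}$. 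Concretely, for a cluster point $\omega_\infty$ one estimates $\sum_{x,y} e(x,y)\,\omega_\infty(x,y) = \sup_i \sum_{x,y} e_i(x,y)\,\omega_\infty(x,y) \leq \sup_i \K[e_i]{\mu,\nu}$, using monotone convergence of the finite sums $\sum_{x,y} e_i(x,y)\,\omega_\infty(x,y)$ to $\sum_{x,y} e(x,y)\,\omega_\infty(x,y)$ and the fact that each term is bounded below by the minimum $\K[e_i]{\mu,\nu}$. Assembling the commutation of $\Lambda$ with suprema and this continuity of $\K{}$ then yields $\Psi(\bigsqcup_i d_i) \sqsubseteq \bigsqcup_i \Psi(d_i)$, completing the proof.
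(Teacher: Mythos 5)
Your monotonicity argument and the commutation of $\Lambda$ with suprema of increasing chains are both fine, and your overall route is genuinely different from the paper's: the paper obtains $\omega$-continuity indirectly, by proving that $\Psi$ is non-expansive in the supremum norm, $\norm{\Psi(d') - \Psi(d)} \leq \norm{d' - d}$, and then invoking \cite[Theorem~1]{Breugel12}, which asserts that such operators on the lattice of $1$-bounded pseudometrics are $\omega$-continuous; you instead attempt a direct, self-contained argument via compactness of the coupling polytope. That strategy can be made to work and is arguably more elementary, but your execution of its central step has a real gap.

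The problem is the pivotal estimate. Writing $e_i = \Lambda(d_i)$ and $e = \bigsqcup_i e_i$, you claim
\begin{equation*}
\sup_{i} \sum_{x,y} e_i(x,y)\,\omega_\infty(x,y) \;\leq\; \sup_i \K[e_i]{\mu,\nu}
\end{equation*}
on the grounds that ``each term is bounded below by the minimum $\K[e_i]{\mu,\nu}$''. This justification is backwards: minimality gives $\sum_{x,y} e_i(x,y)\,\omega_\infty(x,y) \geq \K[e_i]{\mu,\nu}$, and a lower bound on each term cannot bound a supremum from above. Indeed, for an arbitrary coupling $\omega_\infty$ the displayed inequality is false in general (take the constant chain $e_i = e$ and any non-optimal coupling). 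What makes it true is precisely the cluster-point property of $\omega_\infty$, which your sketch never actually exploits. The repair: pick a subsequence $\omega_{i_k} \to \omega_\infty$ of the optimal couplings; fix $j$ and note that for $i_k \geq j$, monotonicity of the chain together with optimality of $\omega_{i_k}$ \emph{for its own cost} $e_{i_k}$ gives $\sum_{x,y} e_j(x,y)\,\omega_{i_k}(x,y) \leq \sum_{x,y} e_{i_k}(x,y)\,\omega_{i_k}(x,y) = \K[e_{i_k}]{\mu,\nu} \leq \sup_i \K[e_i]{\mu,\nu}$; let $k \to \infty$, using that the left-hand side is a continuous linear functional of the coupling on a finite-dimensional simplex, to obtain $\sum_{x,y} e_j(x,y)\,\omega_\infty(x,y) \leq \sup_i \K[e_i]{\mu,\nu}$; finally take the supremum over $j$ (your monotone-convergence observation) and use closedness of $\coupling{\mu}{\nu}$ to conclude $\K[e]{\mu,\nu} \leq \sum_{x,y} e(x,y)\,\omega_\infty(x,y) \leq \sup_i \K[e_i]{\mu,\nu}$. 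The missing idea, in short, is the cross-evaluation of the smaller cost $e_j$ against the later optimal couplings $\omega_{i_k}$; without it your chain of inequalities does not close.
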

\begin{proof}
Monotonicity of $\Psi$ follows from the monotonicity of $\K[]{}$ and $\Lambda$. 
$\omega$-continuity follows from \cite[Theorem~1]{Breugel12} by showing that $\Psi$ is non-expansive, \ie, for all $d,d' \colon M \times M \to [0,1]$, $\norm{\Psi(d') - \Psi(d)} \leq \norm{d' - d}$, where $\norm{f} = \sup_{x} |f(x)|$ is the supremum norm. It suffices to prove that for all $d \sqsubseteq d'$ and $m, m' \in M$, $\Psi(d')(m,m') - \Psi(d)(m,m') \leq \norm{d' - d}$:
\begin{align*}
&\Psi(d')(m,m') - \Psi(d)(m,m') \\
&= \K[\Lambda(d')]{\tau^*(m),\tau^*(m')} - \K[\Lambda(d)]{\tau^*(m),\tau^*(m')}  \tag{by def.\ $\Psi$} \\
\intertext{by choosing $\omega \in \coupling{\tau^*(m)}{\tau^*(m')}$ so that $\K[\Lambda(d)]{\tau^*(m),\tau^*(m')} = \sum_{x,y} \Lambda(d)(x,y) \cdot \omega(x,y)$,}
&= \textstyle \K[\Lambda(d')]{\tau^*(m),\tau^*(m')} - \sum_{x,y} \Lambda(d)(x,y) \cdot \omega(x,y) \\
& \leq \textstyle \sum_{x,y} (\Lambda(d')(x,y) \cdot \omega(x,y)) - \sum_{x,y} (\Lambda(d)(x,y) \cdot \omega(x,y))\tag{by def.\ of $\K[\Lambda(d')]{}$} \\
&= \textstyle \sum_{x,y} (\Lambda(d')(x,y) - \Lambda(d)(x,y)) \cdot \omega(x,y) \tag{by linearity} \\
\intertext{and since,  for all $(x, y) \notin E = \set{((a,n),(a,n'))}{ a\in \Labels, n,n' \in M}$, $\Lambda(d')(x,y) = \Lambda(d)(x,y)$,}
&= \textstyle \sum_{(x,y) \in E} (\Lambda(d')(x,y) - \Lambda(d)(x,y)) \cdot \omega(x,y) \\
&\leq \textstyle  \sum_{(x,y) \in E} \norm{d' - d} \cdot \omega(x,y) \tag{by def.\ $\Lambda$} \\
&\leq \norm{d' - d} \,. \tag{by linearity and $\omega(E) \leq 1$}
\end{align*}
\end{proof}

Next we show that $\dist$ is indeed a lifting of the probabilistic bisimilarity to pseudometrics.
\begin{lem} \label{lem:bisimdist}
$\dist(m,m') = 0$ iff $m \bisim m'$.
\end{lem}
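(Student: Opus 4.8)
The statement is an ``iff'', so I would prove the two inclusions separately, but both rest on a single pivotal fact about the Kantorovich distance: for finitely supported distributions $\mu,\nu$ and a pseudometric $d$, one has $\K[d]{\mu,\nu} = 0$ \emph{if and only if} $\mu$ and $\nu$ assign equal mass to every equivalence class of $\kernel{d}$. The ``only if'' part follows because a zero-cost coupling $\omega$ must be supported on $\kernel{d}$, so no mass can cross between kernel classes; the ``if'' part follows by gluing together, within each kernel class, an arbitrary coupling of the (equal) restricted masses. To apply this I would first record the shape of $\kernel{\Lambda(d)}$: since $\Lambda(d)$ is the \emph{greatest} $1$-bounded pseudometric with $\Lambda(d)((a,n),(a,n')) = d(n,n')$, distinct labels, distinct names, and $\bot$ are all forced to distance $1$, so the kernel classes of $\Lambda(d)$ are exactly the blocks $\{a\}\times C$ (for $C$ a $\kernel{d}$-class of $M$, fixed label $a$), the singletons $\{X\}$ for $X \in \X$, and $\{\bot\}$.

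For the direction $\dist(m,m') = 0 \Rightarrow m \bisim m'$, I would show that the kernel $R = \kernel{\dist}$ is itself a bisimulation. It is an equivalence relation because $\dist$ is a pseudometric. Now suppose $m \mathrel{R} m'$, i.e.\ $\dist(m,m') = 0$. Since $\dist$ is a fixed point of $\Psi$, this means $\K[\Lambda(\dist)]{\tau^*(m),\tau^*(m')} = 0$, so by the pivot $\tau^*(m)$ and $\tau^*(m')$ agree on every $\kernel{\Lambda(\dist)}$-class. Reading this off the singleton classes $\{X\}$ gives $\tau(m)(X) = \tau(m')(X)$, which is condition~\ref{bisimVars}; reading it off the blocks $\{a\}\times C$ with $C \in M/_R$ gives $\tau(m)(\{a\}\times C) = \tau(m')(\{a\}\times C)$ (the $\bot$-extension adds mass only at $\bot$), which is condition~\ref{bisimCont}. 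Hence $R$ is a bisimulation relating $m$ and $m'$, so $m \bisim m'$.

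For the converse, fix a bisimulation $R$ with $m \mathrel{R} m'$ and define the $1$-bounded pseudometric $d_R$ by $d_R(n,n') = 0$ if $n \mathrel{R} n'$ and $d_R(n,n') = 1$ otherwise (this is a pseudometric precisely because $R$ is an equivalence). I would prove that $d_R$ is a \emph{pre-fixpoint}, i.e.\ $\Psi(d_R) \sqsubseteq d_R$. The inequality is trivial on pairs outside $R$ (the right-hand side is $1$), so it suffices to check that $\K[\Lambda(d_R)]{\tau^*(n),\tau^*(n')} = 0$ whenever $n \mathrel{R} n'$. But the bisimulation conditions~\ref{bisimVars} and~\ref{bisimCont} say exactly that $\tau^*(n)$ and $\tau^*(n')$ agree on every $\kernel{\Lambda(d_R)}$-class, so by the pivot the Kantorovich distance is $0$. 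Since $\dist$ is the least fixed point of the monotone operator $\Psi$ (equivalently the least pre-fixpoint, by Knaster--Tarski, or via the Kleene chain $\dist = \bigsqcup_n \Psi^n(\mathbf{0})$ with $\mathbf{0} \sqsubseteq d_R$), we conclude $\dist \sqsubseteq d_R$, whence $\dist(m,m') \leq d_R(m,m') = 0$.

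The main obstacle is the backward half of the pivot, namely manufacturing the explicit zero-cost coupling of $\tau^*(n)$ and $\tau^*(n')$ witnessing $\K[\Lambda(d_R)]{\tau^*(n),\tau^*(n')} = 0$: one must distribute the equal class-masses $\tau(n)(\{a\}\times C) = \tau(n')(\{a\}\times C)$ into a coupling supported on $(\{a\}\times C)\times(\{a\}\times C)$, treat the name-mass $\tau(n)(X) = \tau(n')(X)$ diagonally, and handle the $\bot$-extension so that all marginals come out exactly right. The bookkeeping is routine once the class structure of $\kernel{\Lambda(d_R)}$ is pinned down, but it is the one place where the sub-probabilistic (rather than full-probabilistic) setting and the careful identification of $\Lambda(d_R)$ as the \emph{greatest} admissible pseudometric genuinely have to be used.
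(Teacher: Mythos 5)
Your proof is correct and follows essentially the same route as the paper: the forward direction shows $\kernel{\dist}$ is a bisimulation, and the converse shows $d_R$ is a pre-fixpoint of $\Psi$ and invokes the least-fixed-point property. The only difference is that your ``pivotal fact'' (zero Kantorovich distance iff equal mass on every $\kernel{d}$-class), which you prove directly via the support of an optimal coupling and a class-by-class gluing, is exactly what the paper imports as \cite[Lemma 3.1]{FernsPP04}.
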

\begin{proof}
We prove the two implications separately.
($\Leftarrow$) It suffices to show that the relation $R = \set{ (m,m') }{ \dist(m,m') = 0}$ (\ie, $\kernel{\dist}$) is a bisimulation. Clearly, $R$ is an equivalence, and also $\kernel{\Lambda(d)}$ is so. 
Assume $(m,m') \in R$. By definition of $\Psi$, we have that $\K[\Lambda(\dist)]{\tau^*(m),\tau^*(m')} = 0$. By \cite[Lemma 3.1]{FernsPP04}, for all $\kernel{\Lambda(d)}$-equivalence classes
$D \subseteq ((\Labels \times M) \uplus \X)_\bot$, $\tau^*(m)(D) = \tau^*(m')(D)$. By definition of $\Lambda$, this implies that, for all $a \in \Labels$, $X \in \X$ and $C \in M/_R$, $\tau(m)(X) = \tau(m')(X)$ and, moreover, $\tau(m)(\{a\} \times C) = \tau(m')(\{a\} \times C)$.
($\Rightarrow$) 
Let $R \subseteq M \times M$ be a bisimulation on $\M$, and define $d_R \colon M \times M \to [0,1]$ by $d_R(m,m') = 0$ if $(m,m') \in R$ and $d_R(m,m') = 1$ otherwise. We show that $\Psi(d_R) \sqsubseteq d_R$. If $(m,m') \notin R$, then $d_R(m,m') = 1 \geq \Psi(d_R)(m,m')$. If 
$(m,m') \in R$, then for all $a \in \Labels$, $X \in \X$ and $C \in M/_R$, $\tau(m)(X) = \tau(m')(X)$,
$\tau(m)(\{a\} \times C) = \tau(m')(\{a\} \times C)$.
This implies that for all $\kernel{\Lambda(d_R)}$-equivalence class
$D \subseteq ((\Labels \times M) \uplus \X)_\bot$, $\tau^*(m)(D) = \tau^*(m')(D)$. 
By \cite[Lemma 3.1]{FernsPP04}, we have $\K[\Lambda(d_R)]{\tau^*(m),\tau^*(m')} = 0$. This implies that $\Psi(d_R) \sqsubseteq d_R$. Since $\bisim$ is a bisimulation, $\Psi(d_\sim) \sqsubseteq d_\sim$, so that, by Tarski's fixed point theorem, $\dist \sqsubseteq d_\sim$. By definition of $d_\sim$ and $\dist \sqsubseteq d_\sim$, $m \bisim m'$ implies $\dist(m,m') = 0$.
\end{proof}

The bisimilarity distance can alternatively be obtained as $\dist = \bigsqcap_{k \in \naturals} \Tilde{\Psi}^k(\mathbf{1})$, \ie, as the 
$\omega$-limit of the decreasing sequence $\mathbf{1} \sqsupseteq \Tilde{\Psi}(\mathbf{1}) \sqsupseteq \Tilde{\Psi}^2(\mathbf{1}) \sqsupseteq \dots$ of the operator
\begin{equation*}
  \Tilde{\Psi}(d)(m,m') =
  \begin{cases}
    0 &\text{if $m \bisim m'$,} \\
    \Psi(d)(m,m') &\text{otherwise.}
  \end{cases}
\end{equation*}
where $\mathbf{1}$ is the top element of the set of $1$-bounded pseudometrics ordered point-wise
(i.e., the pseudometric that assigns distance $1$ to all distinct elements).

\begin{lem} \label{lem:uniquefix}
$\Tilde{\Psi}$ is monotone and $\omega$-cocontinuous, \ie, for any countable decreasing sequence $d_0 \sqsupseteq d_1 \sqsupseteq d_2 \sqsupseteq \dots $, it holds $\bigsqcap_{i \in \naturals} \Psi(d_i) = \Psi(\bigsqcap_{i \in \naturals} d_i)$. Moreover, $\dist = \bigsqcap_{k \in \naturals} \Tilde{\Psi}^k(\mathbf{1})$.
\end{lem}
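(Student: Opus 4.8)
The plan is to prove Lemma~\ref{lem:uniquefix} in three parts: monotonicity of $\Tilde{\Psi}$, its $\omega$-cocontinuity, and finally the characterization $\dist = \bigsqcap_{k} \Tilde{\Psi}^k(\mathbf{1})$. Monotonicity is the easy part: if $d \sqsubseteq d'$, then on pairs $(m,m')$ with $m \bisim m'$ both sides are $0$, and on the remaining pairs $\Tilde{\Psi}(d)(m,m') = \Psi(d)(m,m') \leq \Psi(d')(m,m') = \Tilde{\Psi}(d')(m,m')$ by monotonicity of $\Psi$ (Lemma~\ref{lem:PsiContinuous}). So monotonicity of $\Tilde{\Psi}$ reduces immediately to that of $\Psi$.

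For $\omega$-cocontinuity, I would fix a decreasing chain $d_0 \sqsupseteq d_1 \sqsupseteq \dots$ and argue pointwise. On a pair $(m,m')$ with $m \bisim m'$, every $\Tilde{\Psi}(d_i)(m,m')$ equals $0$, and $\Tilde{\Psi}(\bigsqcap_i d_i)(m,m')$ equals $0$ as well, so both sides of the claimed identity agree. On a pair with $m \not\bisim m'$, the equation becomes $\bigsqcap_i \Psi(d_i)(m,m') = \Psi(\bigsqcap_i d_i)(m,m')$, so it suffices to establish $\omega$-cocontinuity of $\Psi$ itself on this pair. The key ingredient here is the \emph{non-expansivity} of $\Psi$ already proved inside Lemma~\ref{lem:PsiContinuous}: since $\Psi$ is non-expansive w.r.t.\ the supremum norm, $\norm{\Psi(d_i) - \Psi(\bigsqcap_j d_j)} \leq \norm{d_i - \bigsqcap_j d_j}$, and as the $d_i$ converge to $\bigsqcap_j d_j$ in the supremum norm (on a finite state space the pointwise infimum of a decreasing chain is also the uniform limit), the $\Psi(d_i)$ converge uniformly to $\Psi(\bigsqcap_j d_j)$. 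By monotonicity this limit coincides with $\bigsqcap_i \Psi(d_i)$, giving cocontinuity. \textbf{I expect this to be the main obstacle}, mainly because it requires care in matching the pointwise-infimum notion of the cpo with the metric (uniform) convergence that non-expansivity controls; the finiteness of $M$ is what bridges the two, so I would make that hypothesis explicit.

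For the fixed-point characterization, I would first observe that $\Tilde{\Psi}$ and $\Psi$ share the same least fixed point. By Lemma~\ref{lem:bisimdist}, $\dist(m,m') = 0$ exactly when $m \bisim m'$, so $\dist$ is a prefixed point of $\Tilde{\Psi}$ (on bisimilar pairs $\Tilde{\Psi}(\dist) = 0 = \dist$, on the others $\Tilde{\Psi}(\dist) = \Psi(\dist) = \dist$); conversely any fixed point of $\Tilde{\Psi}$ whose kernel contains $\bisim$ is a fixed point of $\Psi$. Now since $\Tilde{\Psi}$ is monotone and $\omega$-cocontinuous, and $\mathbf{1}$ is the top element with $\Tilde{\Psi}(\mathbf{1}) \sqsubseteq \mathbf{1}$, the decreasing chain $\mathbf{1} \sqsupseteq \Tilde{\Psi}(\mathbf{1}) \sqsupseteq \dots$ has $\bigsqcap_k \Tilde{\Psi}^k(\mathbf{1})$ as a fixed point by the dual of Kleene's theorem, and this infimum is in fact the \emph{greatest} fixed point of $\Tilde{\Psi}$ below $\mathbf{1}$. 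The final step is to show this greatest fixed point equals $\dist$. This follows because $\Tilde{\Psi}$ forces every bisimilar pair to distance $0$, which pins down the kernel and collapses the gap between greatest and least fixed points: any fixed point $d$ of $\Tilde{\Psi}$ satisfies $d = \Psi(d)$ on non-bisimilar pairs and $d = 0$ on bisimilar ones, so $d$ is squeezed between $\dist$ and itself, yielding $\bigsqcap_k \Tilde{\Psi}^k(\mathbf{1}) = \dist$.
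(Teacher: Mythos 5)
Your monotonicity argument, your reduction of $\omega$-cocontinuity to that of $\Psi$ on non-bisimilar pairs, and your identification of $\bigsqcap_{k} \Tilde{\Psi}^k(\mathbf{1})$ as the greatest fixed point all follow the same skeleton as the paper, which likewise observes that both $\dist$ and this infimum are fixed points of $\Tilde{\Psi}$. But the last step of your third part --- the only step with real content --- is not a proof. From ``any fixed point $d$ of $\Tilde{\Psi}$ equals $\Psi(d)$ off $\bisim$ and $0$ on $\bisim$'' one gets that $d$ is a fixed point of $\Psi$, hence $d \sqsupseteq \dist$, since $\dist$ is the \emph{least} fixed point of $\Psi$; that is the trivial direction and yields only $\bigsqcap_{k} \Tilde{\Psi}^k(\mathbf{1}) \sqsupseteq \dist$. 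The phrase ``$d$ is squeezed between $\dist$ and itself'' never produces the needed upper bound $d \sqsubseteq \dist$, and kernel considerations alone cannot produce it: $\Psi$ genuinely has many fixed points above $\dist$ (for two distinct states each looping to itself with probability $1$ on a label $a$, assigning any value $c \in [0,1]$ to that pair gives a fixed point), so one must show that forcing $0$ on bisimilar pairs \emph{propagates} through the Kantorovich lifting. This is exactly the missing idea, and it is where the paper does its work: assuming two fixed points $d \sqsubset d'$, it takes $R$ to be the set of pairs realizing the maximal gap $d'(m,m') - d(m,m') = \norm{d' - d}$, notes ${R} \cap {\bisim} = \emptyset$ because both fixed points vanish on bisimilar pairs, and then reuses the coupling estimate of Lemma~\ref{lem:PsiContinuous}: the chain $\norm{d'-d} = \Psi(d')(m,m') - \Psi(d)(m,m') \leq \sum_{(x,y) \in E} (\Lambda(d')(x,y) - \Lambda(d)(x,y)) \cdot \omega(x,y) \leq \norm{d'-d}$, where $E$ collects the pairs $((a,n),(a,n'))$ and $\omega$ is an optimal coupling, can hold only if $\omega$ is supported on $R$-related successors with matching labels; hence $R$ is a bisimulation, contradicting ${R} \cap {\bisim} = \emptyset$. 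Without this (or an equivalent) argument, your third part is circular.

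There is a second, independent problem: your cocontinuity argument explicitly assumes $M$ finite in order to turn the pointwise infimum of a decreasing chain into a uniform limit, and that hypothesis is not available where the lemma is used. The completeness proof (Theorem~\ref{th:completeness}) applies Lemma~\ref{lem:uniquefix} to the universal open Markov chain $\UU$, whose state set $\TT{}$ is countably infinite, and on an infinite state space a decreasing chain of pseudometrics can converge pointwise to its infimum while remaining at supremum distance $1$ from it; so non-expansivity of $\Psi$ does not let you pass the infimum through $\Psi$ in the way you propose. The paper instead derives $\omega$-(co)continuity from non-expansivity via \cite[Theorem~1]{Breugel12}, which carries no finiteness assumption; restricting the lemma to finite $M$ would leave the completeness theorem unsupported.
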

\begin{proof}
Monotonicity and $\omega$-cocontinuity follow similarly to Lemma~\ref{lem:PsiContinuous} and \cite[Theorem~1]{Breugel12}.
By $\omega$-cocontinuity $\bigsqcap_{k \in \naturals} \Tilde{\Psi}^k(\mathbf{1})$ is a fixed point. 
By Lemma~\ref{lem:bisimdist} and $\dist = \Psi(\dist)$, also $\dist$ is a fixed point of $\Tilde{\Psi}$.
We show that they coincide by proving that $\Tilde{\Psi}$ has a unique fixed point.

Assume that $\Tilde{\Psi}$ has two fixed points $d$ and $d'$ such that $d \sqsubset d'$. Define
$R \subseteq M \times M$ as $m \mathrel{R} m'$ iff $d'(m,m') - d(m,m') = \norm{d' - d}$. 
By the assumption made on $d$ and $d'$ we have that $\norm{d' - d} > 0$ and ${R} \cap {\bisim} = \emptyset$. Consider arbitrary $m,m' \in M$ such that $m \mathrel{R} m'$, then
\begin{align*}
\norm{d' - d} &= d'(m,m') - d(m,m') \\
&= \Tilde\Psi(d')(m,m') - \Tilde\Psi(d)(m,m') \tag{by $d = \Tilde{\Psi}(d)$ and $d' = \Tilde{\Psi}(d')$} \\
&= \Psi(d')(m,m') - \Psi(d)(m,m') \tag{by $m \not\bisim m'$ and def.\ $\Tilde\Psi$} \\
&\leq \textstyle \sum_{(x,y) \in E}(\Lambda(d')(x,y) - \Lambda(d)(x,y))\cdot \omega(x,y) \tag{as proved in Lemma~\ref{lem:PsiContinuous}} \,,
\end{align*}
where we recall that $E = \set{((a,n),(a,n'))}{ a\in \Labels, n,n' \in M}$.

Observe that $(\Lambda(d') - \Lambda(d))((a,n),(a,n')) = d'(n) - d(n') \leq \norm{d' - d}$, for all $n,n' \in M$ and $a \in \Labels$.
Since $\norm{d' - d} > 0$ the inequality 
$$\norm{d' - d} \leq \textstyle\sum_{(x,y) \in E}(\Lambda(d')(x,y) - \Lambda(d)(x,y))\cdot \omega(x,y) \leq \norm{d' - d}$$ 
holds only if the support of $\omega$ is included in $E_R = \set{((a,n),(a,n'))}{ a\in \Labels \text{ and } n \mathrel{R} n'}$. Since the argument holds for arbitrary $m,m' \in M$ such that $m \mathrel{R} m'$, we have that $R$ is a bisimulation, which is in contradiction with the initial assumptions.
\end{proof}

\subsection{A Quantitative Algebra of Open Markov Chains}

We turn the algebra of pointed open Markov chains $(\OMC, \O^\textsf{omc})$ 
given in Section~\ref{sec:AOMC} into a ``quantitative algebra'' by endowing it with the probabilistic 
bisimilarity pseudometric of Desharnais et al.
Our definition, however, does not comply with the non-expansivity conditions that are required 
for the interpretations of the algebraic operators by Definition~\ref{def:quantitativealgebra}. 
Indeed we show that the recursion operator fails to be non-expansive.

We conclude the section by extending the universality result of Theorem~\ref{th:universal} to the 
quantitative setting.

\medskip 
In Definition~\ref{def:bisimdist}, the bisimilarity pseudometric $\dist[\M]$ is defined over the states of 
a given open Markov chain $\M$. This can be extended to distance 
$\dist[\OMC] \colon \OMC \times \OMC \to [0,1]$ over the set $\OMC$ of open Markov 
chains by simply computing the bisimilarity distance between the initial states on the disjoint 
union of the two open Markov chains, \ie 
\begin{equation*}
  \dist[\OMC]((\M,m), (\N,n)) = \dist[\M \oplus \N](m,n) \,.
\end{equation*}

\begin{defi} \label{def:qaOMC}
The \emph{quantitative algebra of open Markov chains} is $(\OMC, \O^\textsf{omc}, \dist[\OMC])$.
\end{defi}
It is important to remark that the algebraic structure we just defined \emph{is not} a quantitative algebra
in the sense of~\cite{MardarePP:LICS16} (\cf~Definition~\ref{def:quantitativealgebra}), because as we
show in Example~\ref{ex:recNonexpansive} the interpretation of the recursion operator fails to be non-expansive.

\begin{rem} \label{rem:relaxedAlgebra}
The use of the term ``quantitative algebra'' in Defintion~\ref{def:qaOMC} is clearly an abuse of terminology. 
Perhaps, we should call such structures ``\emph{relaxed quantitative algebras}'' to emphasize the
fact that the interpretations of the operators do not need to be non-expansive. For the sake of 
readability, however, we decided to omit the adjective ``relaxed'' since, as it will be showed in
Sections~\ref{sec:soundness} and \ref{sec:completeness}, this detail will not cause troubles for 
the soundness and completeness results.
\end{rem}

\begin{exa}[Recursion is not non-expansive!] \label{ex:recNonexpansive}
We show an example where the recursion operator $\rec{X}{}$ fails to be non-expansive w.r.t.\ the
bisimilarity distance.

Let $0 < \varepsilon < \frac{1}{2}$ and consider the two pointed open Markov chains depicted below.
\def\skiph{2}
\def\skipv{1.3}
\begin{align*}
% Semantics of recusion
(\M,m) = 
\tikz[labels, baseline={(current bounding box.center)}]{ 
  \draw (0,0) node[state, initial] (m) {$m$};
  \draw ($(m)+(down:\skipv)$) node[state] (u) {$u$}; 
  \draw ($(m.east)+(right:\skiph)$) node[state] (n) {$n$};
  \draw ($(n)+(down:\skipv)$) node[vars] (x) {$X$};
  \path[-latex, font=\scriptsize]
    (m) edge node[above] {$b,\frac{1}{2}$} (n)
    (m) edge node[left] {$a,\frac{1}{2}$} (u)
    (n) edge[loop right] node[right] {$b,\frac{1}{2}$} (n)
    (n) edge node[right] {$a,\frac{1}{2}$} (u)
    (u) edge node[below] {$1$} (x);
}
&&
(\M',m') = 
\tikz[labels, baseline={(current bounding box.center)}]{ 
  \draw (0,0) node[state, initial] (m) {$m'$};
  \draw ($(m)+(down:\skipv)$) node[state] (u) {$u'$}; 
  \draw ($(m.east)+(right:\skiph)$) node[state] (n) {$n'$};
  \draw ($(n)+(down:\skipv)$) node[vars] (x) {$X$};
  \path[-latex, font=\scriptsize]
    (m) edge node[above] {$b,\frac{1}{2}-\varepsilon$} (n)
    (m) edge node[left] {$a,\frac{1}{2}+\varepsilon$} (u)
    (n) edge[loop right] node[right] {$b,\frac{1}{2}$} (n)
    (n) edge node[right] {$a,\frac{1}{2}$} (u)
    (u) edge node[below] {$1$} (x);
}
\end{align*}
By a straightforward computation one can readily show that the bisimilarity distance between 
$(\M,m)$ and $(\M',m')$ is $\dist((\M,m),(\M',m')) = \varepsilon$.
Let now consider the application of the operator $(\rec{X}{})^\textsf{omc}$ on these two pointed 
open Markov chains. It turns out that the resulting chains have the following 
behavior: 
\begin{align*}
% Semantics of recusion
(\rec{X}{(\M,m)})^\textsf{omc} \bisim
\tikz[labels, baseline={([yshift=-1.5ex]current bounding box.center)}]{ 
  \draw (0,0) node[state, initial] (m) {$\rec{X}{m}$}; 
  \draw ($(m)+(down:\skipv)$) node[state] (n) {$n$};
  \path[-latex, font=\scriptsize]
    (m) edge[bend left] node[right] {$b,\frac{1}{2}$} (n)
    (m) edge[loop above] node[right,xshift=1mm] {$a,\frac{1}{2}$} (m)
    (n) edge[bend left] node[left] {$a,\frac{1}{2}$} (m)
    (n) edge[loop below] node[right] {$b,\frac{1}{2}$} (n);
}
&&
(\rec{X}{(\M',m')})^\textsf{omc} \bisim
\tikz[labels, baseline={([yshift=-1.5ex]current bounding box.center)}]{ 
  \draw (0,0) node[state, initial] (m) {$\rec{X}{m'}$}; 
  \draw ($(m)+(down:\skipv)$) node[state] (n) {$n'$};
  \path[-latex, font=\scriptsize]
    (m) edge[bend left] node[right] {$b,\frac{1}{2}-\varepsilon$} (n)
    (m) edge[loop above] node[right,xshift=1mm] {$a,\frac{1}{2}+\varepsilon$} (m)
    (n) edge[bend left] node[left] {$a,\frac{1}{2}$} (m)
    (n) edge[loop below] node[right] {$b,\frac{1}{2}$} (n);
}
\end{align*}
By an easy calculation, independently of the value $\varepsilon \in (0,\frac{1}{2})$, we obtain that
\begin{equation*}
  \dist((\rec{X}{(\M,m)})^\textsf{omc}, (\rec{X}{(\M',m')})^\textsf{omc}) = 1 \,.
\end{equation*}
Since $\varepsilon < 1$, $\dist((\M,m),(\M',m')) \not\geq \dist((\rec{X}{(\M,m)})^\textsf{omc}, (\rec{X}{(\M',m')})^\textsf{omc})$.
This proves that 
$(\rec{X}{})^\textsf{omc}$ fails to be non-expansive w.r.t.\ the bisimilarity distance.
\qed
\end{exa}

By Theorem~\ref{th:universal}, we know that one can equivalently reason about bisimilarity  
between the semantics of terms by simply considering bisimilarity for the corresponding terms as states in the universal 
open Markov chain $\UU$. The next result states that the situation is similar when one needs to 
compute the distance between the semantics of terms. 
\begin{thm}[Quantitative universality] \label{th:universaldist}
Let $t,s \in \TT{}$. Then $\dist[\OMC](\denot{t},\denot{s}) = \dist[\UU](t,s)$.
\end{thm}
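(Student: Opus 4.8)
The plan is to reduce the claim $\dist[\OMC](\denot{t},\denot{s}) = \dist[\UU](t,s)$ to the already-established Universality Theorem (Theorem~\ref{th:universal}). First I would unfold the definition of $\dist[\OMC]$: by definition, $\dist[\OMC](\denot{t},\denot{s}) = \dist[\denot{t} \oplus \denot{s}]$ evaluated at the initial states of $\denot{t}$ and $\denot{s}$. The goal is then to show that computing the bisimilarity distance on the disjoint union of the two semantic Markov chains $\denot{t}$ and $\denot{s}$ yields the same value as computing $\dist[\UU](t,s)$, i.e.\ the distance between $t$ and $s$ as states of the single universal chain $\UU = (\TT{}, \tauTT{})$.

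The key conceptual step is to observe that the bisimilarity distance $\dist[\M]$ depends only on the behavior of the chain up to bisimilarity. Concretely, I would first establish (or invoke as a standard property of $\dist$) the fact that \emph{bisimilar states have distance zero} (this is exactly Lemma~\ref{lem:bisimdist}) together with the stronger statement that the distance is invariant under bisimulation: if two pointed chains are bisimilar, their distances to any third chain coincide. More precisely, I would argue that whenever $(\M,m) \bisim (\M',m')$ and $(\N,n) \bisim (\N',n')$, then $\dist[\OMC]((\M,m),(\N,n)) = \dist[\OMC]((\M',m'),(\N',n'))$. This ``distance is a bisimulation invariant'' property can be proved by showing that a bisimulation between chains induces an isometry on the respective kernels, using the coupling-based characterization in Definition~\ref{def:bisimdist} and the fact that $\K[\cdot]{\cdot,\cdot}$ only sees probabilities assigned to bisimilarity classes (as exploited already in Lemma~\ref{lem:bisimdist} via~\cite[Lemma~3.1]{FernsPP04}).

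Granting this invariance, the proof concludes quickly. By Theorem~\ref{th:universal}, $\denot{t} \bisim (\UU, t)$ and $\denot{s} \bisim (\UU, s)$. Applying the invariance property with $(\M',m') = (\UU,t)$ and $(\N',n') = (\UU,s)$ gives
\begin{equation*}
  \dist[\OMC](\denot{t},\denot{s}) = \dist[\OMC]((\UU,t),(\UU,s)) \,.
\end{equation*}
It then remains to identify $\dist[\OMC]((\UU,t),(\UU,s))$ with $\dist[\UU](t,s)$. Here I would note that $\dist[\OMC]((\UU,t),(\UU,s)) = \dist[\UU \oplus \UU](t,s)$, and that the two canonical copies of $\UU$ inside $\UU \oplus \UU$ are bisimilar to the single $\UU$ via the obvious folding map; since the distance restricted to a single copy coincides with $\dist[\UU]$, we obtain $\dist[\UU \oplus \UU](t,s) = \dist[\UU](t,s)$. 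This last identification again uses bisimulation invariance of $\dist$, applied to the embeddings of $\UU$ into $\UU \oplus \UU$.

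The main obstacle is proving the bisimulation-invariance of the distance rigorously, i.e.\ that $\dist[\M](m,m') = \dist[\N](n,n')$ whenever $(\M,m)\bisim(\N,n)$ and $(\M,m')\bisim(\N,n')$. The delicate point is that $\Lambda(\dist)$ and the Kantorovich lifting operate on the state space $(\Labels \times M) \uplus \X$, so one must check that a bisimulation lifts coherently to these extended spaces and that optimal couplings transfer across bisimilar states without changing their cost. I expect this to follow from a fixed-point (coinductive) argument: one shows that the pseudometric obtained by transporting $\dist[\N]$ along the bisimulation is itself a fixed point (or pre-fixed point) of the Kantorovich operator $\Psi_\M$, and then invokes minimality/uniqueness of $\dist[\M]$ as in Lemma~\ref{lem:uniquefix}. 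Everything else in the argument is a routine chase through the definitions.
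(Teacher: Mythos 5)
Your top-level argument is correct and rests on the same two ingredients as the paper's proof: Theorem~\ref{th:universal} (to trade $\denot{t}$ for $(\UU,t)$ and $\denot{s}$ for $(\UU,s)$) and Lemma~\ref{lem:bisimdist} (bisimilar states are at distance zero). Where you diverge is in how these are combined. The paper's entire proof is a triangle-inequality computation: since $\dist$ is a pseudometric and, by Theorem~\ref{th:universal} and Lemma~\ref{lem:bisimdist}, $\dist(\denot{t},(\UU,t)) = 0 = \dist((\UU,s),\denot{s})$, one gets
\begin{equation*}
\dist[\OMC](\denot{t},\denot{s}) \;\leq\; \dist(\denot{t},(\UU,t)) + \dist((\UU,t),(\UU,s)) + \dist((\UU,s),\denot{s}) \;=\; \dist((\UU,t),(\UU,s)) \,,
\end{equation*}
and the symmetric argument gives the reverse inequality. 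Note that the ``bisimulation invariance'' property you single out as the main obstacle is an immediate corollary of exactly these two facts: if $(\M,m)\bisim(\M',m')$ and $(\N,n)\bisim(\N',n')$, then two applications of the triangle inequality plus Lemma~\ref{lem:bisimdist} yield $\dist[\OMC]((\M,m),(\N,n)) \leq \dist[\OMC]((\M',m'),(\N',n'))$ and conversely. No coinductive machinery is needed; your plan front-loads the whole difficulty into a lemma that dissolves in two lines.

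The genuinely problematic part of your proposal is the suggested proof of that invariance lemma. ``Transporting $\dist[\N]$ along the bisimulation'' is not well defined: a bisimulation is a relation, not a function, so a state of $\M$ may be related to no state of $\N$ (states unreachable from $m$ need not be matched at all) or to several, and the transported value is not determined on all of $M \times M$. Moreover, even granting a well-defined transported pseudometric $d^*$ that is a fixed point of $\Psi_\M$, minimality of the \emph{least} fixed point only gives $\dist[\M] \sqsubseteq d^*$, one of the two inequalities; and the uniqueness statement you appeal to (Lemma~\ref{lem:uniquefix}) concerns the modified operator $\Tilde{\Psi}$, not $\Psi$ itself, whose fixed point is in general not unique --- that non-uniqueness is precisely why $\Tilde{\Psi}$ is introduced. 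So, as sketched, this step would need substantial repair, repair that the short argument above makes unnecessary. On the plus side, you are more careful than the paper in one respect: the identification $\dist[\OMC]((\UU,t),(\UU,s)) = \dist[\UU](t,s)$, i.e.\ passing from $\UU \oplus \UU$ back to a single copy of $\UU$, is hidden in the paper behind ``def.\ $\dist$'', whereas your folding-map remark makes explicit that some locality of the fixed-point distance with respect to disjoint unions is being used there (and, in fact, already in asserting that $\dist[\OMC]$ satisfies the triangle inequality at all).
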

\begin{proof}
The equality $\dist[\OMC](\denot{t},\denot{s}) = \dist[\UU](t,s)$ follows by Lemma~\ref{lem:bisimdist} 
and Theorem~\ref{th:universal}.
\begin{align*}
 \dist[\OMC](\denot{t},\denot{s})
 &= \dist(\denot{t},\denot{s}) \tag{def. $\dist$} \\ 
 &\leq \dist(\denot{t}, (\UU,t)) + \dist((\UU,t),(\UU,s)) + \dist((\UU, s),\denot{s}) \tag{triangular ineq.} \\
 &= \dist((\UU,t),(\UU,s)) \tag{Theorem~\ref{th:universal} \& Lemma~\ref{lem:bisimdist}} \\
 &= \dist[\UU](t,s) \,. \tag{def. $\dist$}
\end{align*}
By a similar argument we also have $\dist[\OMC](\denot{t},\denot{s}) \geq \dist[\UU](t,s)$, hence the thesis.
\end{proof}

%%%%%%%%%%%%%%%%%%%%%%%%%%%%%%%%%%%%%%%%%%%%%%%%%%
\subsection{A Quantitative Deduction System}
\label{sec:quantAxioms}
Now we present a quantitative deduction system which will be later shown to be sound and complete
w.r.t.\ the probabilistic bisimilarity distance of Desharnais et al.. 
The deduction system we propose will not be a quantitative deduction system in the sense of~\cite{MardarePP:LICS16}, because it does not satisfies the (\Nexp) axiom of non-expansivity 
for the signature operators (\cf\ Section~\ref{sec:QDS}).  

\medskip
The quantitative deduction system ${\vdash} \subseteq 2^{\E[\Sigma]} \times \E[\Sigma]$ of type $\O$ 
that we consider satisfies the axioms (\Refl), (\Symm), (\Triang), (\Max), (\Arch) and rules (\Subst), (\Cut) (\Assum) from Section~\ref{sec:QDS} and the following additional axioms
\begin{align*} 
(\Bone)\;\; 
& \vdash t +_1 s \equiv_0 t \,, \\
(\Btwo)\;\; 
& \vdash t +_e t \equiv_0 t \,, \\
(\SC)\;\; 
& \vdash t +_e s \equiv_0 s +_{1-e} t \,, \\
(\SA)\;\; 
& \vdash (t +_e s) +_{e'} u \equiv_0 t +_{ee'} (s +_{\frac{e' - ee'}{1 - ee'}} u) \,, \text{ for $e,e' \in [0,1)$} \,, \\ 
%% recursion
(\Unfold)\;\; & \vdash \rec{X}{t} \equiv_0 t[\rec{X}{t} / X] \,, \\
(\Unguard)\;\; & \vdash \rec{X}{(t +_e X)} \equiv_0 \rec{X}{t} \,, \\
(\Fix)\;\; & \{ s \equiv_0 t[s / X] \} \vdash s \equiv_0 \rec{X}{t} \,, \text{ for $X$ guarded in $t$} \,, \\
(\Cong)\;\; & \{ t \equiv_0 s \} \vdash \rec{X}{t} \equiv_0 \rec{X}{s} \,,  \\[1ex]
%% distance
(\Top)\;\; & \vdash t \equiv_1 s \,, \\
(\Pref)\;\; & \{ t \equiv_\e s \} \vdash \pref{a}{t} \equiv_{\e} \pref{a}{s} \,, \\
(\Conv)\;\; 
& \{ t \equiv_\e s, t' \equiv_{\e'} s' \} \vdash t +_e t' \equiv_{\e''} s +_e s' \,, \text{ for $\e'' \geq e \e + (1-e) \e'$} \,. 
\end{align*}
Note that the axiom (\Nexp) is not included in the definition.

\begin{rem} \label{rem:relaxedDedSys}
The use of the term ``quantitative deduction system'' is again an abuse of terminology. To emphasize the
fact that the axiom (\Nexp) is not required to be satisfied, we should perhaps have used 
the word ``\emph{relaxed quantitative deduction system}'' to mark the difference w.r.t.\ the quantitative
algebraic framework of Mardare, Panangaden, and Plotkin~\cite{MardarePP:LICS16}. 
However the omission of the adjective ``relaxed'' will not cause troubles in the further development of 
the paper.
\end{rem}

(\Bone), (\Btwo), (\SC), (\SA) are the axioms of \emph{barycentric algebras} (a.k.a.\ 
\emph{abstract convex sets}) due to M.\ H.\ Stone~\cite{Stone49}, used here to axiomatize the convex set of probability distributions. (\SC) stands for \emph{skew commutativity} and (\SA) for \emph{skew associativity}.
Barycentric algebras are \emph{entropic} in the sense that all operations $+_e$ are affine maps, that is, 
for all $e,d \in [0,1]$ we have the entropic identity
\begin{equation*}
  \vdash (t +_e s) +_d (t' +_e s') \equiv_0 (t +_d t') +_e (s +_d s') \,.
  \tag{\Entr}
\end{equation*}
If $t = s$, by (\Btwo) the above reduces to the distributivity law 
\begin{equation*}
  \vdash u +_d (t' +_e s')  \equiv_0 (u +_d t') +_e (u +_d s') \,.
  \tag{\Distr}
\end{equation*}
Although the entropic identity (\Entr) can be verified by direct deduction, a simpler proof for it that 
does not use a direct approach can be found in~\cite[Lemma 2.3]{KeimelP15}.

The axioms (\Unfold), (\Unguard), (\Fix), (\Cong) are the recursion axioms of Milner~\cite{Milner84}, used 
here to axiomatize the coinductive behavior of open Markov chains. (\Unfold) and (\Fix) state that, whenever $X$ is guarded in a term $t$, $\rec{X}{t}$ is \emph{the unique solution} of the recursive 
equation $s \equiv_0 t[s / X]$.
The axiom (\Unguard) deals with unguarded recursive behavior, and (\Cong) states the 
congruential properties of the recursion operator.

As opposed to the axioms described so far, which are essentially equational, the last three are
the only truly quantitative one characterizing the quantitative deduction system we have just introduced. 
(\Top) states that the distance between terms is bounded by $1$; (\Pref) is the non-expansivity for the 
prefix operator; and (\Conv) is the \emph{interpolative barycentric axiom} of Mardare, Panangaden, 
and Plotkin introduced in~\cite{MardarePP:LICS16} for axiomatizing the Kantorovich 
distance between finitely-supported probability distributions (\cf\ \S10 in~\cite{MardarePP:LICS16}).

Note that for $\e = \e'$, (\Conv) reduces to non-expansivity for the operator $+_e$:
\begin{equation*}
  \{ t \equiv_\e s, t' \equiv_{\e} s' \} \vdash t +_e t' \equiv_{\e} s +_e s' \,,
\end{equation*}
and non-expansivity always entails congruence for the operator. 
Indeed, for $\e = \e' = 0$ in (\Pref) and (\Conv) we obtain 
\begin{align*}
  \{ t \equiv_0 s \} &\vdash \pref{a}{t} \equiv_{0} \pref{a}{s} \,, 
  \tag{\Pref-0}
  \\
  \{ t \equiv_0 s, t' \equiv_0 s' \} &\vdash t +_e t' \equiv_0 s +_e s' \,,
  \tag{\Conv-0}
\end{align*}
corresponding to the congruence for the prefix and probabilistic choice operators, respectively.

It is important to remark that the quantitative deduction system presented above subsumes
the equational deduction system of Stark and Smolka~\cite{StarkS00} that axiomatizes probabilistic
bisimilarity. 

We conclude this section by recalling some historical notes from~\cite{KeimelP15,CIS-300992} 
about the axioms of barycentric algebras and recursion.
\begin{rem}[Historical note]
The first axiomatization of convex sets can be traced back to M.\ H.\ Stone~\cite{Stone49}. 
Independently, Kneser~\cite{Kneser52} gave a similar axiomatization. Stone's and Kneser's axioms where
not restricted to convex sets arising in vector spaces over the reals but, by requiring an additional 
cancellation axiom, they axiomatized convex sets 
\emph{embeddable} in vector spaces over linearly ordered skewed fields.
W.\ D.\ Neumann~\cite{Neumann1970} seems to be the first to have looked at a truly equational theory of convex sets.
He remarked that barycentric algebras may be very different from convex sets in vector spaces. 
Indeed $\vee$-semilattices are an example of barycentric algebra by interpreting $+_e$ as $\vee$,
for all $0 < e < 1$, and $+_1$ and $+_0$ as left and right projections, respectively.

The axioms (\Bone), (\Btwo), (\SC), (\SA) that we use in this work are due to \v{S}wirszcz~\cite{Swirszcz}
and have been reproduced by Romanowska and Smith~\cite{modes}, who actually introduced the terminology
\emph{barycentric algebra} for an abstract convex set.

Early attempts to prove equational properties of recursive definitions in various specific contexts
include the work of de Bakker~\cite{deBakker:recursive}, Manna and Vuillemin~\cite{MannaV72}, and
Kahn~\cite{Kahn74}. Since then, the general study of recursion equations has been pursued 
under several guises: as recursive applicative program schemes~\cite{Courcelle1974}, 
$\mu$-calculus~\cite{BloomE94}, and perhaps 
most notably as the iteration theories~\cite{BloomE93} of Bloom and \'Esik.

The axioms (\Unfold), (\Unguard), (\Fix), (\Cong) first appeared in Milner~\cite{Milner84}, who
was certainly aware of de Bakker's work (\cf~\cite{milner1975}). The same axioms have
been used by Stark and Smolka in their development of an equational deduction 
system for probabilistic bisimilarity~\cite{StarkS00}. An equivalent axiomatization to that 
in~\cite{StarkS00} appeared in~\cite{AcetoEI02}, where recursion is extended to vector terms
and the recursive axioms were replaced by the axioms of iteration algebras, 
a.k.a.\ Conway equations~\cite{BloomE93},
\begin{align*}
  \vdash \rec{X}{t[s/X]} &\equiv_0 t[\rec{X}{s[t/X]}/X] 
  \tag{Composition Identity} \\
  \vdash \rec{X}{t[X/Y]} &\equiv_0 \rec{X}{\rec{Y}{t}}
  \tag{Diagonal Identity}
\end{align*}
capturing the equational properties of the fixed point operations in a purely equational way. 
Note that the composition identity reduces to (\Unfold) when taking $s = X$.
\end{rem}

\subsection{Soundness} 
\label{sec:soundness}

In this section we show the soundness of our quantitative deduction system w.r.t.\ the bisimilarity distance
between pointed open Markov chains.

\medskip
Recall that, by Theorem~\ref{th:universaldist}, it is totally equivalent to reason about the distance 
between $\denot{t}$ and $\denot{s}$ by just considering the bisimilarity distance between the states 
corresponding to the terms $t$ and $s$ in the universal open Markov chain $\UU$. 
Hence hereafter, whenever we refer to 
the distance between terms in $\TT{}$ we will use $\dist[\UU]$, often simply denoted as $\dist$.
Similarly, $\models_\OMC t \equiv_\e s$ is equivalent to $\models_\UU t \equiv_\e s$, and it 
will be denoted just by $\models t \equiv_\e s$.

\begin{thm}[Soundness] \label{th:soundness}
For arbitrary $t, s \in \TT{}$, if $\vdash t \equiv_\e s$ then $\models t \equiv_\e s$.
\end{thm}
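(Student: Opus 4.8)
The plan is to prove soundness by showing that each axiom and rule of the deduction system is individually sound with respect to the bisimilarity distance $\dist$, and then to argue that soundness is preserved under the deduction machinery. Concretely, I would proceed by rule induction on the derivation of $\vdash t \equiv_\e s$: I must verify that whenever a conclusion is derived, the semantic statement $\dist(\denot{t},\denot{s}) \le \e$ holds, assuming inductively that all premises are sound. By Theorem~\ref{th:universaldist} it suffices to work with $\dist = \dist[\UU]$ on terms directly, so the obligations reduce to checking $\dist(t,s) \le \e$ for each axiom and each rule.

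First I would dispatch the metric axioms and classical deduction rules. The axioms (\Refl), (\Symm), (\Triang), (\Max), (\Arch) are sound simply because $\dist$ is a $1$-bounded pseudometric: reflexivity, symmetry, the triangle inequality, monotonicity in $\e$, and the Archimedean/limit property all hold by Definition~\ref{def:bisimdist} and the completeness of $[0,1]$. The rules (\Subst), (\Cut), (\Assum) are sound by the standard argument, noting that substitution respects the semantics $\denot{\cdot}$. For the equational (i.e.\ $\equiv_0$) axioms (\Bone), (\Btwo), (\SC), (\SA), (\Unfold), (\Unguard), (\Fix), (\Cong), I would use Lemma~\ref{lem:bisimdist}: since $\dist(t,s)=0$ iff $t \bisim s$, these axioms are sound precisely because the corresponding bisimilarities hold, which is exactly the content of Stark and Smolka's soundness result for probabilistic bisimilarity (\cf\ Remark~\ref{rmk:souness&completenessStarkSmolka}).

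The genuinely quantitative obligations are (\Top), (\Pref), and (\Conv). Soundness of (\Top) is immediate since $\dist$ is $1$-bounded. For (\Pref) I would show $\dist(\pref{a}{t},\pref{a}{s}) \le \dist(t,s)$ by unfolding the Kantorovich operator $\Psi$: the states $\pref{a}{t}$ and $\pref{a}{s}$ have transition distributions $\ind{\{(a,t)\}}$ and $\ind{\{(a,s)\}}$, so the unique coupling puts all mass on the pair $((a,t),(a,s))$, and $\Lambda(\dist)((a,t),(a,s)) = \dist(t,s)$ by definition of $\Lambda$. For (\Conv), the key is the interpolative barycentric axiom: given couplings witnessing $\dist(t,s) \le \e$ and $\dist(t',s') \le \e'$, I would build a coupling for $(\tau^*(t +_e t'),\tau^*(s +_e s'))$ by taking the $e$-weighted convex combination of the two witness couplings (using that $\tau^*$ of a $+_e$-term is the convex combination $e\,\tau^*(\cdot)+(1-e)\,\tau^*(\cdot)$), yielding an upper bound of $e\e + (1-e)\e'$ on the resulting Kantorovich distance, hence on $\dist(t +_e t', s +_e s')$.

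The main obstacle, and the delicate point worth dwelling on, is the soundness of (\Fix). Because the recursion operator is \emph{not} non-expansive (Example~\ref{ex:recNonexpansive}), I cannot appeal to any generic congruence or metric-preservation property for recursion; soundness here must rest entirely on the fact that (\Fix) is an $\equiv_0$ statement and that, when $X$ is guarded in $t$, the term $\rec{X}{t}$ is the \emph{unique} solution up to bisimilarity of $s \bisim t[s/X]$. Thus I would argue: the hypothesis $s \equiv_0 t[s/X]$ gives $\dist(s,t[s/X])=0$, i.e.\ $s \bisim t[s/X]$ by Lemma~\ref{lem:bisimdist}; guardedness of $X$ ensures the unfolding map admits a unique bisimilarity class of fixed points (the content of the equational (\Fix)/(\Unfold) axioms in~\cite{StarkS00}), so $s \bisim \rec{X}{t}$, whence $\dist(s,\rec{X}{t})=0$, establishing $\models s \equiv_0 \rec{X}{t}$. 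The care required is to confine recursion to the $\e=0$ fragment throughout and never invoke non-expansivity for $\rec{X}{}$, which is exactly where the framework of~\cite{MardarePP:LICS16} would fail.
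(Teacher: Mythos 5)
Your proposal is correct and follows essentially the same route as the paper's proof: the metric axioms are dispatched by $\dist$ being a $1$-bounded pseudometric, the $\equiv_0$ axioms (including (\Fix)) via Lemma~\ref{lem:bisimdist} together with Stark--Smolka's soundness theorem, (\Top) from $1$-boundedness, (\Pref) by unfolding $\Psi$ on the Dirac transition distributions, and (\Conv) by taking the $e$-weighted convex combination of optimal couplings. Your more detailed discussion of (\Fix) merely spells out what the paper obtains in one line by citing~\cite{StarkS00}, so it is the same argument, not a different one.
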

\begin{proof}
We must show that each axiom and deduction rule of inference is valid. The axioms (\Refl), (\Symm), (\Triang), (\Max), and (\Arch) are sound since $\dist$ is a pseudometric (Lemma~\ref{lem:bisimdist}). The soundness of the classical logical deduction rules (\Subst), (\Cut), and (\Assum) is immediate.
By Lemma~\ref{lem:bisimdist}, the kernel of $\dist$ is $\bisim$. Hence the axioms of barycentric algebras (\Bone), (\Btwo), (\SC), and (\SA) along with (\Unfold), (\Unguard), (\Cong), and (\Fix) follow directly by the soundness theorem proven in~\cite{StarkS00} (\cf\ Remark~\ref{rmk:souness&completenessStarkSmolka}).

The soundness for the axiom (\Top) is immediate consequence of the fact that $\dist$ is $1$-bounded.
To prove the soundness of (\Pref) it suffices to show that $\dist(t,s) \geq \dist(\pref{a}{t}, \pref{a}{s})$.
\begin{align*}
  \dist(\pref{a}{t}, \pref{a}{s})
  &= \K[\Lambda(\dist)]{\tauTT^*(\pref{a}{t}), \tauTT^*(\pref{a}{s})} \tag{$\dist$ fixed-point \& def.\ $\Psi$} \\
  &= \K[\Lambda(\dist)]{\ind{\{(a,t)\}}, \ind{\{(a,s)\}}} \tag{def.\ $\tauTT$ \& $\PP_\UU$} \\
  &= \Lambda(\dist)((a,t), (a,s)) \tag{def.\ $\K[]{}$} \\
  &= \dist(t,s) \,. \tag{def.\ $\Lambda$}
\end{align*}
The soundness of (\Conv) follows by proving $e \dist(t,s) + (1-e) \dist(t',s') \geq \dist(t +_e t', s +_e s')$.
\begin{align*}
  e \dist(t,s) &+ (1-e) \dist(t',s') \\
  &= e \Psi(\dist)(t,s) + (1-e) \Psi(\dist)(t',s')  \tag{$\dist$ fixed point} \\
  &= e \K[\Lambda(\dist)]{\tauTT^*(t),\tauTT^*(s)} + (1-e) \K[\Lambda(\dist)]{\tauTT^*(t'),\tauTT^*(s')}  \tag{def.\ $\Psi$}
\intertext{then, for $\omega \in \coupling{\tauTT^*(t)}{\tauTT^*(s)}$ and $\omega' \in \coupling{\tauTT^*(t')}{\tauTT^*(s')}$ optimal couplings for $\K[\Lambda(\dist)]{}$, and by noticing that $e \omega + (1-e) \omega' \in \coupling{e\tauTT^*(t)+(1-e)\tauTT^*(t')}{e\tauTT^*(s)+(1-e)\tauTT^*(s')}$ we have}
  &= \textstyle e \sum_{x,y} \Lambda(\dist)(x,y) \cdot \omega(x,y) + (1-e) \sum_{x,y} \Lambda(\dist)(x,y) \cdot \omega'(x,y) \\
  &= \textstyle \sum_{x,y} \Lambda(\dist)(x,y) \cdot (e \cdot \omega(x,y) + (1-e) \cdot \omega'(x,y)) \tag{linearity} \\
  &\geq \K[\Lambda(\dist)]{e \tauTT^*(t) + (1-e) \tauTT^*(t'),e \tauTT^*(s) + (1-e) \tauTT^*(s')} 
    \tag{def.\ $\K[]{}$ and above} \\
  &= \K[\Lambda(\dist)]{\PP_\UU(\tauTT)^*(t +_e t'),\PP_\UU(\tauTT)^*(s +_e s')} \tag{def.\ $\PP_\UU$} \\
  &= \K[\Lambda(\dist)]{\tauTT^*(t +_e t'),\tauTT^*(s +_e s')} \tag{$\tauTT$ fixed-point} \\
  &= \dist(t +_e t', s +_e s') \tag{def.\ $\Psi$ \& $\dist$ fixed-point}
\end{align*}
The above concludes the proof.
\end{proof}

\subsection{Completeness} 
\label{sec:completeness}

This section is devoted to proving the completeness of our quantitative deduction system 
w.r.t.\ the bisimilarity distance between pointed open Markov chains.

\medskip
For the sake of readability it will be convenient to introduce the following notation for \emph{formal sums 
of terms} (or \emph{convex combinations of terms}). For $n \geq 1$, $t_1,\ldots, t_n \in \TT{}$ 
terms, and $e_1, \dots, e_n \in [0,1]$ positive reals such that $\sum_{i = 1}^n e_i = 1$, we define
\begin{equation*}
  \sum_{i=1}^n e_i \cdot t_i =
  \begin{cases}
    t_1 & \text{if $e_1 = 1$} \\
    t_1 +_{e_1} \left( \sum_{i=2}^n \frac{e_i}{1-e_1} \cdot t_i \right) & \text{otherwise} \,.
  \end{cases}
\end{equation*}

\medskip
Following the pattern of~\cite{Milner84,StarkS00}, the completeness theorem hinges on a couple of important transformations.
The first of these is the standard de Beki\v{c}-Scott construction of solutions of simultaneous recursive definitions. References for this theorem may be found in de Bakker~\cite{deBakker:recursive}, 
who seems the first to use it to support a proof rule for ``program equivalence''.
This is embodied in the next theorem, which is~\cite[Theorem 5.7]{Milner84}.
\begin{thm}[Unique Solution of Equations] \label{th:uniquesolution}
Let $\ol{X} = (X_1,\dots,X_k)$ and $\ol{Y} = (Y_1,\dots,Y_h)$ be distinct names, and $\ol{t} = (t_1,\dots,t_k)$ terms with free names in $(\ol{X},\ol{Y})$ in which each $X_i$ is guarded. Then there exist terms $\ol{s} = (s_1,\dots,s_k)$ with free names in $\ol{Y}$ such that 
\begin{align*}
&\vdash s_i \equiv_0 t[\ol{s} / \ol{X}] \,, && \text{for all $i \leq k$.}
\end{align*}
Moreover, if for some terms $\ol{u} = (u_1,\dots,u_k)$ with free variables in $\ol{Y}$, $\vdash u_i \equiv_0 t[\ol{u} / \ol{X}]$, for all $i \leq k$, then $\vdash s_i \equiv_0 u_i$, for all $i \leq k$.
\end{thm}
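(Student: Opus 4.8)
This is the de Bekič–Scott construction for simultaneous recursion, so my plan is to prove both the existence of solutions $\ol{s}$ and their uniqueness by induction on the number $k$ of equations, reducing the simultaneous system to nested single-variable recursions built from the $\rec{X}{}$ operator.

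For the existence part, I would proceed by induction on $k$. The base case $k=1$ is immediate: since $X_1$ is guarded in $t_1$, the term $s_1 := \rec{X_1}{t_1}$ is a solution, because $(\Unfold)$ gives $\vdash \rec{X_1}{t_1} \equiv_0 t_1[\rec{X_1}{t_1}/X_1]$, and $s_1$ has its free names in $\ol{Y}$ as required. For the inductive step, I would isolate the last equation: set $s_k := \rec{X_k}{t_k}$, treating $X_1,\dots,X_{k-1}$ as parameters (part of the ``$\ol{Y}$'' names), so that $s_k$ is a term with free names among $(X_1,\dots,X_{k-1},\ol{Y})$ and, by $(\Unfold)$, solves $\vdash s_k \equiv_0 t_k[s_k/X_k]$. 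Substituting this $s_k$ for $X_k$ into the remaining terms $t_1,\dots,t_{k-1}$ yields a system of $k-1$ equations in the names $X_1,\dots,X_{k-1}$ (over parameters $\ol{Y}$) in which each $X_i$ is still guarded; applying the induction hypothesis produces terms $s_1,\dots,s_{k-1}$ with free names in $\ol{Y}$. The only care needed is to check that back-substituting $s_k$ does not destroy guardedness of the remaining $X_i$, and that folding everything together via $(\Unfold)$, $(\Cong)$, and $(\Subst)$ recovers $\vdash s_i \equiv_0 t_i[\ol{s}/\ol{X}]$ for every $i \le k$.

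For the uniqueness part, given another solution vector $\ol{u}$ with $\vdash u_i \equiv_0 t_i[\ol{u}/\ol{X}]$, I would again induct on $k$ and lean on $(\Fix)$, which asserts that $\rec{X}{t}$ is \emph{the} unique solution of $s \equiv_0 t[s/X]$ when $X$ is guarded in $t$. In the single-variable case, $\vdash u_1 \equiv_0 t_1[u_1/X_1]$ together with $(\Fix)$ directly gives $\vdash u_1 \equiv_0 \rec{X_1}{t_1} = s_1$. For the inductive step, I would first show that the last component $u_k$ satisfies the fixed-point equation for $s_k = \rec{X_k}{t_k}$ in the presence of the other $u_i$'s, deduce $\vdash u_k \equiv_0 s_k[\,\ol{u}/\ol{X}\,]$ by $(\Fix)$ and $(\Subst)$, and then match the remaining $u_1,\dots,u_{k-1}$ against $s_1,\dots,s_{k-1}$ by the induction hypothesis, propagating the established equalities using $(\Cut)$ and $(\Cong)$.

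The main obstacle I anticipate is purely bookkeeping around substitution and guardedness: ensuring that when $X_k$ is replaced by $\rec{X_k}{t_k}$ throughout the other equations, the resulting terms still have each $X_i$ ($i<k$) guarded, and that the capture-avoiding simultaneous substitution $t_i[\ol{s}/\ol{X}]$ composes correctly with the single substitutions used at each inductive stage. Since this is exactly~\cite[Theorem 5.7]{Milner84} and the relevant axioms $(\Unfold)$, $(\Unguard)$, $(\Fix)$, $(\Cong)$ coincide with Milner's recursion axioms (all purely $\equiv_0$, hence insensitive to the quantitative layer), I would cite Milner's proof for the combinatorial details and only verify that every deductive step used there is available in our system — which it is, since our $\equiv_0$ fragment subsumes Stark and Smolka's equational system by Remark~\ref{rmk:souness&completenessStarkSmolka}.
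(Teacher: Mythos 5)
Your proposal is correct and matches the paper's treatment: the paper gives no independent proof of this theorem, importing it directly as~\cite[Theorem 5.7]{Milner84} (the de Beki\v{c}--Scott construction), which is exactly the induction on $k$ via $(\Unfold)$, $(\Fix)$, $(\Cong)$, and $(\Subst)$ that you sketch before likewise deferring to Milner for the substitution/guardedness bookkeeping. Since the recursion axioms used are purely equational ($\equiv_0$) statements available in the quantitative system, your verification that Milner's argument transfers is all that is needed.
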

%\begin{proof}
%By induction on $k \geq 1$, exactly as in~\cite[Theorem 5.7]{Milner84} and \cite[Theorem 1]{StarkS00}.
%\end{proof}

The second transformation provides a deducible normal form for terms. This
result is embodied in the following theorem, which is \cite[Theorem 5.9]{StarkS00}%
\footnote{The formulation given here is slightly simpler than the original one in~\cite{StarkS00}, since 
our deduction system satisfies the axiom (\Bone), which is not included in the equational deduction
system of~\cite{StarkS00}.}.
\begin{thm}[Equational Characterization] \label{th:eqcharacterization}
For any term $t$, with free names in $\ol{Y}$, there exist terms $t_1, \ldots, t_k$ with free names in $\ol{Y}$, such that
$\vdash t \equiv_0 t_1$ and 
\begin{align*}
  \vdash t_i \equiv_0 \sum_{j=1}^{h(i)} p_{ij} \cdot s_{ij} + \sum_{j=1}^{l(i)} q_{ij} \cdot Y_{g(i,j)} \,,
  && \text{for all $i \leq k$} \,,
\end{align*}
where the terms $s_{ij}$ and names $Y_{g(i,j)}$ are enumerated without repetitions, and $s_{ij}$ is either $\rec{X}{X}$ or has the form $\pref{a_{ij}}{t_{f(i,j)}}$.
\end{thm}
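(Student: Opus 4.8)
The plan is to prove the statement by structural induction on the term $t$, following the pattern of Milner~\cite{Milner84} and Stark and Smolka~\cite{StarkS00}. Intuitively, the claimed normal form exhibits $t$ as the entry point $t_1$ of a finite system of guarded equations, where each equation $t_i \equiv_0 \sum_{j} p_{ij}\cdot s_{ij} + \sum_{j} q_{ij}\cdot Y_{g(i,j)}$ describes one state of a Markov chain: the $s_{ij}$ are the guarded summands (each a prefix $\pref{a_{ij}}{t_{f(i,j)}}$ pointing back into the system, or the terminating process $\rec{X}{X}$), while the $Y_{g(i,j)}$ are the unguarded free names. The base case $t = Y$ (a free name) is immediate, taking $k=1$ and $t_1 = Y$, and the terminating summand $\rec{X}{X}$ accounts for states with no outgoing behaviour.

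For the prefix case $t = \pref{a}{t'}$, I would take the system $t'_1,\dots,t'_{k'}$ given by the induction hypothesis for $t'$, keep it as $t_2,\dots,t_{k'+1}$, and prepend the entry term $t_1 := \pref{a}{t'_1}$, a single guarded summand of weight $1$ pointing to $t_2 = t'_1$; the required equality $\vdash \pref{a}{t'}\equiv_0 t_1$ is the instance of (\Pref) with $\e=0$ applied to $\vdash t'\equiv_0 t'_1$. For the probabilistic choice $t = t'+_e t''$, the instance of (\Conv) with $\e=\e'=0$ gives $\vdash t'+_e t''\equiv_0 t'_1 +_e t''_1$; since $t'_1$ and $t''_1$ are themselves convex sums, the barycentric axioms (\Bone), (\Btwo), (\SC), (\SA) --- together with the derived laws (\Entr) and (\Distr) --- let me flatten $t'_1 +_e t''_1$ into a single convex combination, which becomes the new entry equation. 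Merging the two subsystems and coalescing repeated summands via (\Btwo), so that the $s_{ij}$ and $Y_{g(i,j)}$ are enumerated without repetition, yields the normal form.

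The recursion case $t = \rec{X}{t'}$ is the crux. Starting from the induction hypothesis for $t'$ and applying (\Cong) to $\vdash t'\equiv_0 t'_1$, it suffices to bring $\rec{X}{t'_1}$ to normal form. Here $X$ occurs as a \emph{free name} in the system of $t'$, and since names only ever appear unguarded, every occurrence of $X$ is an unguarded summand. The strategy is: first use (\Unguard) to delete the unguarded $X$-summand from the entry equation (in the degenerate case where $X$ carries all the weight, $t'_1 \equiv_0 X$ and $\rec{X}{t'_1}$ is terminating, i.e.\ the single summand $\rec{X}{X}$); then, once $X$ is guarded in the entry equation, fold $X$ into the system by reading $\rec{X}{t'_1}$ as the unique solution of the guarded equation it satisfies, via (\Unfold), (\Fix) and the Unique Solution of Equations theorem (Theorem~\ref{th:uniquesolution}). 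This replaces the free name $X$ throughout by a guarded reference to the new entry term, so that $X$ no longer occurs free and all internal references are again guarded.

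The main obstacle is precisely the elimination of the \emph{unguarded} occurrences of the freshly bound name $X$: applying (\Unguard) removes them only at the top of the entry equation, whereas $X$ may still appear unguarded in the other (deeper) equations of the system, where turning $X$ into a reference to the entry would create an unguarded internal transition and break the normal form. Resolving this amounts to solving the finite convex sub-system of unguarded transitions, which is where the barycentric axioms and the uniqueness part of Theorem~\ref{th:uniquesolution} do the real work; the finiteness of the system, together with the fact that probability mass is either consumed by guarded summands or routed to genuine names, guarantees that this resolution terminates and produces provably equal, guarded equations. A final pass coalesces repeated summands with (\Btwo) to meet the without-repetition requirement.
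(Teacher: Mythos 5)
The paper does not actually prove this theorem: it imports it wholesale as Theorem~5.9 of Stark and Smolka~\cite{StarkS00}, the justification being that the quantitative deduction system subsumes their equational one ---all of their axioms reappear here as $\equiv_0$-axioms, and the congruence rules they need are recovered as (\Pref-0) and (\Conv-0)--- so every derivation of theirs can be replayed verbatim; the footnote merely records that (\Bone) simplifies the shape of the normal form. Your proposal instead re-proves the result by structural induction, i.e.\ it reconstructs the Milner/Stark--Smolka argument hiding behind the citation. That is a legitimate and genuinely different (more self-contained) route, and your base, prefix, and probabilistic-choice cases are correct as sketched: the paper's route buys economy and delegates the delicate recursion case to a published proof, while yours buys independence from~\cite{StarkS00} at the price of having to verify that case in full.

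The recursion case, however, is exactly where your write-up stops short of a proof. You correctly isolate the obstacle ---after binding $X$, the unguarded occurrences of $X$ in the \emph{non-entry} equations become unguarded references to the entry term--- but the resolution you offer (``solving the finite convex sub-system of unguarded transitions'' via the uniqueness part of Theorem~\ref{th:uniquesolution}, plus a termination argument) is both vague and heavier than necessary. The standard resolution is a single substitution pass. Since summands are enumerated without repetition, the entry equation carries at most one unguarded $X$-summand; rearrange it to the outermost position with (\SC)/(\SA), then apply (\Cong) and (\Unguard) to get $\vdash \rec{X}{t'_1} \equiv_0 \rec{X}{u}$ where $X$ is guarded in $u$, and apply (\Unfold) to obtain the new entry equation, whose right-hand side contains no $X$ and no unguarded system term. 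Now set $E_i = t'_i[\rec{X}{u}/X]$: by (\Subst), each non-entry equation becomes a convex sum whose only offending summand is an unguarded occurrence of the entry term $E_1$; replacing that summand by the already-cleaned entry right-hand side (using (\Conv-0)) and merging duplicates with (\Btwo)/(\SA)/(\SC) restores the normal form. No sub-system has to be solved iteratively, no termination argument is needed, and Theorem~\ref{th:uniquesolution} plays no role here ---in both \cite{Milner84} and \cite{StarkS00} unique solvability serves the completeness proof, not the equational characterization. As written, your crux step is asserted rather than carried out, so you should either spell out this substitution pass or fall back on the paper's citation.
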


The last lemma relates the proposed deduction system with the Kantorovich distance
between probability distributions. So far this is the only transformation embodying the use of the interpolative
barycentric axiom (\Conv) to deduce quantitative information on terms.
\begin{lem} \label{lem:deduceKantorovich}
Let $d$ be a $1$-bounded pseudometric over $\TT{}$ and $\mu,\nu \in \Delta(\TT{})$ probability measures
with supports $\mathit{supp}(\mu) = \{ t_1, \dots, t_k \}$ and $\mathit{supp}(\nu) = \{ s_1, \dots, s_r \}$. 
Then 
\begin{align*}
  \set{t_i \equiv_\e s_u}{\e \geq d(t_i,s_u)}
  \vdash
  \sum_{i = 1}^k \mu(t_i) \cdot t_i \equiv_{\e'} \sum_{u = 1}^r \nu(s_u) \cdot s_u \,,
  &&
  \text{for all $\e' \geq \K[d]{\mu,\nu}$} \,.
\end{align*}
\end{lem}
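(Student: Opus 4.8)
The plan is to route the optimal transport plan for $\K[d]{\mu,\nu}$ through the interpolative barycentric axiom (\Conv). Since $A = \mathit{supp}(\mu)\cup\mathit{supp}(\nu)$ is finite, the minimum in the definition of $\K[d]{\mu,\nu}$ is attained by some coupling $\omega\in\coupling{\mu}{\nu}$, so that $\K[d]{\mu,\nu}=\sum_{i,u} d(t_i,s_u)\,\omega(t_i,s_u)$, and the marginal laws give $\sum_u\omega(t_i,s_u)=\mu(t_i)$ and $\sum_i\omega(t_i,s_u)=\nu(s_u)$. Write $\Gamma=\set{t_i\equiv_\e s_u}{\e\geq d(t_i,s_u)}$ for the hypothesis set; since the indices of quantitative equations are rationals, this means $\Gamma\vdash t_i\equiv_\alpha s_u$ by (\Assum) for every \emph{rational} $\alpha\geq d(t_i,s_u)$. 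Accordingly I will only aim for conclusions with rational index and recover the real threshold at the very end.

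First I would regroup both formal sums along $\omega$. Working purely equationally with the barycentric axioms (\Bone), (\Btwo), (\SC), (\SA) --- under which the formal-sum notation is invariant under reordering of summands and under merging repeated occurrences of a term with additive weights (idempotence (\Btwo)) --- one proves
\[
\vdash\ \sum_{i=1}^k\mu(t_i)\cdot t_i\ \equiv_0\ \sum_{(i,u)\in\mathit{supp}(\omega)}\omega(t_i,s_u)\cdot t_i
\quad\text{and}\quad
\vdash\ \sum_{u=1}^r\nu(s_u)\cdot s_u\ \equiv_0\ \sum_{(i,u)\in\mathit{supp}(\omega)}\omega(t_i,s_u)\cdot s_u,
\]
the first because on the right $t_i$ carries total weight $\sum_u\omega(t_i,s_u)=\mu(t_i)$, and the second symmetrically from $\sum_i\omega(t_i,s_u)=\nu(s_u)$. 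This is the standard well-definedness of convex combinations inside a barycentric algebra.

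The crux is the next step, an induction on $n=|\mathit{supp}(\omega)|$. Enumerate $\mathit{supp}(\omega)$ as pairs $1,\dots,n$ with weights $w_1,\dots,w_n>0$ summing to $1$, and write $a_j,b_j,d_j$ for the corresponding $t_i,s_u,d(t_i,s_u)$. I claim that for every rational $\eta>\sum_j w_j d_j$ one has $\Gamma\vdash\sum_j w_j\cdot a_j\equiv_\eta\sum_j w_j\cdot b_j$. For $n=1$ this is $\Gamma\vdash a_1\equiv_\eta b_1$ with rational $\eta>d_1$, which holds by (\Assum). For $n\geq2$ the sum is $a_1+_{w_1}\big(\sum_{j\geq2}\frac{w_j}{1-w_1}\cdot a_j\big)$ with $0<w_1<1$; given rational $\eta>w_1 d_1+(1-w_1)\sum_{j\geq2}\frac{w_j}{1-w_1}d_j$, density of the rationals lets me pick rationals $\alpha>d_1$ and $\beta>\sum_{j\geq2}\frac{w_j}{1-w_1}d_j$ with $w_1\alpha+(1-w_1)\beta\leq\eta$. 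The inductive hypothesis bounds the tails within $\beta$, (\Assum) gives $a_1\equiv_\alpha b_1$, and (\Conv) with mixing parameter $e=w_1$ (chained under $\Gamma$ via (\Cut)) yields the two sums within $\eta$. Taking $n=|\mathit{supp}(\omega)|$ gives $\Gamma\vdash\sum_{(i,u)}\omega(t_i,s_u)\cdot t_i\equiv_\eta\sum_{(i,u)}\omega(t_i,s_u)\cdot s_u$ for every rational $\eta>\K[d]{\mu,\nu}$.

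Finally I would assemble the pieces. Composing the two $\equiv_0$ rewrites of the second paragraph with the $\equiv_\eta$ just obtained, via (\Triang) (and (\Cut) to chain under $\Gamma$), gives $\Gamma\vdash\sum_i\mu(t_i)\cdot t_i\equiv_\eta\sum_u\nu(s_u)\cdot s_u$ for every rational $\eta>\K[d]{\mu,\nu}$; one application of (\Arch) then closes the gap to $\eta=\K[d]{\mu,\nu}$, so the conclusion holds for every rational $\e'\geq\K[d]{\mu,\nu}$, as required. The main obstacle is the inductive (\Conv) step: making the rational error budget $\eta$ telescope down to the optimal transport cost $\sum_{(i,u)}\omega(t_i,s_u)\,d(t_i,s_u)$ while keeping every index rational. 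The regrouping of the first step is routine but must be handled carefully (degenerate weight-$1$ summands and reordering), and it is the only place where the purely equational barycentric axioms enter.
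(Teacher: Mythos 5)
Your proof is correct, but it takes a genuinely different route from the paper's. The paper proves the lemma by well-founded induction on the support sizes of the \emph{marginals} $(\mu,\nu)$: it peels the point $t_1$ off $\mu$, splits $\nu$ accordingly into $\nu_1(s_u)=\tilde\omega(t_1,s_u)/\mu(t_1)$ and $\nu_2(s_u)=(\nu(s_u)-\tilde\omega(t_1,s_u))/(1-\mu(t_1))$ using the optimal coupling $\tilde\omega$, proves the exact decomposition $\K[d]{\mu,\nu}= \mu(t_1)\K[d]{\mu_1,\nu_1}+(1-\mu(t_1))\K[d]{\mu_2,\nu_2}$ (both inequalities, by splitting and re-gluing couplings), establishes the corresponding deducible splittings of the two formal sums, and then applies the inductive hypothesis and (\Conv). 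You instead pass to a \emph{common refinement}: rewriting both formal sums as combinations indexed by $\mathit{supp}(\omega)$ with identical weights $w_j=\omega(t_i,s_u)$, so that a single straightforward induction on $|\mathit{supp}(\omega)|$, peeling one pair per step with (\Conv), yields the bound $\sum_j w_j\,d(a_j,b_j)$, which equals $\K[d]{\mu,\nu}$ by optimality of $\omega$ alone. What your route buys is that the paper's step (2) --- the additive decomposition of the Kantorovich distance, the most delicate measure-theoretic bookkeeping in the paper's proof --- disappears entirely; moreover your explicit handling of rational indices (working with strict bounds and closing the gap by (\Arch)) repairs a point the paper glosses over, since its induction quietly passes possibly irrational thresholds through an axiom scheme whose indices must be rational. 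What it costs is the up-front barycentric regrouping lemma (reordering and merging repeated summands under (\Bone), (\Btwo), (\SC), (\SA)): you defer it as standard, which is defensible --- it is the well-definedness of convex combinations in the free barycentric algebra --- but a fully self-contained write-up would need an induction of roughly the same tedium as the paper's proof of its equation \eqref{eq:splitting}, which is the instance-by-instance form of the same fact that the paper chooses to prove explicitly.
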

\begin{proof}
We proceed by well-founded induction on the strict preorder 
\begin{align*}
(\mu,\nu) \prec (\mu',\nu') && \text{iff} &&
\left\{\begin{array}{c} 
\text{($\mathit{supp}(\mu) \subset \mathit{supp}(\mu')$ 
and $\mathit{supp}(\nu) \subseteq \mathit{supp}(\nu')$)} \,\phantom{.}\\
\text{or} \\
\text{($\mathit{supp}(\mu) \subseteq \mathit{supp}(\mu')$ 
and $\mathit{supp}(\nu) \subset \mathit{supp}(\nu')$)} \,.
\end{array}\right.
\end{align*}

(Base case: $\mathit{supp}(\mu) = \{t_1\}$ and $\mathit{supp}(\mu') = \{ s_1 \}$). 
In this case $\mu = \ind{\{t_1\}}$ and $\nu = \ind{\{s_1\}}$. Thus the proof
follows by (\Assum), by noticing that $\K[d]{\mu, \nu} = d(t_1,s_1)$.

(Inductive step: $\mathit{supp}(\mu) = \{ t_1, \dots, t_k \}$ and $\mathit{supp}(\nu) = \{ s_1, \dots, s_r \}$) Assume without loss of generality that $k > 1$ (if $k = 1$, then $r > 1$ and we proceed
dually). The proof is structured as follows. We find suitable $e \in (0,1)$ and $\mu_1,\mu_2,\nu_1, \nu_2 \in \Delta(\TT{})$, such that 
\begin{enumerate}
  \item \label{step1} $(\mu_1,\nu_1) \prec (\mu,\nu)$ and $(\mu_2,\nu_2) \prec (\mu,\nu)$; 
  \item \label{step2} $\K[d]{\mu,\nu} = e \K[d]{\mu_1,\nu_1} + (1-e) \K[d]{\mu_2,\nu_2}$;
  \item \label{step3} and the following are deducible
  \begin{align}
  \textstyle
  \vdash \sum_{i = 1}^k \mu(t_i) \cdot t_i &\equiv_0 
  \textstyle
  	\big( \sum_{i = 1}^k \mu_1(t_i) \cdot t_i \big) +_e \big( \sum_{i = 1}^k \mu_2(t_i) \cdot t_i \big) \,,
  \label{eq:decuceSplitting1}
  \\
  \textstyle
  \vdash \sum_{u = 1}^r \nu(s_u) \cdot s_u &\equiv_0
  \textstyle
  	\big( \sum_{u = 1}^r \nu_1(s_u) \cdot s_u \big) +_e \big( \sum_{u = 1}^r \nu_2(s_u) \cdot s_u \big) \,.
  \label{eq:decuceSplitting2}
  \end{align}
\end{enumerate}
By \eqref{step1} and the inductive hypothesis, we have that, for $j \in \{1,2 \}$
\begin{align*}
  \set{t_i \equiv_\e s_u}{\e \geq d(t_i,s_u)}
  \vdash
  \sum_{i = 1}^k \mu_j(t_i) \cdot t_i \equiv_{\e'} \sum_{u = 1}^r \nu_j(s_u) \cdot s_u \,,
  &&
  \text{for all $\e' \geq \K[d]{\mu_j,\nu_j}$} \,.
\end{align*}
From the above, \eqref{step3}, and (\Conv) we deduce
\begin{align*}
  \set{t_i \equiv_\e s_u}{\e \geq d(t_i,s_u)}
  \vdash
  \sum_{i = 1}^k \mu(t_i) \cdot t_i \equiv_{\e'} \sum_{u = 1}^r \nu(s_u) \cdot s_u \,,
  &&
  \text{for all $\e' \geq \kappa$} \,.
\end{align*}
where $\kappa = e \K[d]{\mu_1,\nu_1} + (1-e) \K[d]{\mu_2,\nu_2}$. Then, the proof follows from \eqref{step2}.

In the following we provide the definitions for $e \in (0,1)$ and  $\mu_1,\mu_2,\nu_1, \nu_2 \in \Delta(\TT{})$, then in turn we prove \eqref{step1}, \eqref{step2}, and \eqref{step3}. Let $e = \mu(t_1)$. Note that
since $\mathit{supp}(\mu) = \{ t_1, \dots, t_k\}$ and $k > 1$, we have that $\mu(t_1) \in (0,1)$.
Let $\tilde\omega \in \coupling{\mu}{\nu}$ be the minimal coupling for $\K[d]{\mu,\nu}$, \ie, the one realizing 
the following equality 
(\cf\ the definition of $\K[d]{}$)
\begin{equation}
\textstyle
  \K[d]{\mu,\nu} = \sum_{i \leq k,\, u \leq r} d(t_i,s_u) \cdot \tilde\omega(t_i,s_u) \,,
  \label{eq:optimalcoupling}
\end{equation} 
and, for $2 \leq i \leq k$, $1 \leq u \leq r$, define
\begin{align*}
 \mu_1(t_1) = 1 \,,
 &&
 \mu_2(t_i) = \frac{\mu(t_i)}{1-\mu(t_1)} \,,
 &&
 \nu_1(s_u) = \frac{\tilde\omega(t_1,s_u)}{\mu(t_1)} \,,
 && 
 \nu_2(s_u) = \frac{\nu(s_u) - \tilde\omega(t_1,s_u)}{1-\mu(t_1)} \,.
\end{align*}
Note that $\mathit{supp}(\mu_1) = \{t_1\}$, $\mathit{supp}(\mu_2) = \{t_2, \dots, t_k\}$ and
$\mathit{supp}(\nu_1), \mathit{supp}(\nu_2) \subseteq \mathit{supp}(\nu)$. 
It is easy to show that, since $\tilde\omega \in \coupling{\mu}{\nu}$, the above are 
well-defined probability distributions.

%%% Proof of STEP 1
\eqref{step1} It follows directly by definition of $\prec$, $\mathit{supp}(\mu_1) = \{t_1\}$, 
$\mathit{supp}(\mu_2) = \{t_2, \dots, t_k\}$, $\mathit{supp}(\mu) = \{ t_1, \dots, t_k \}$, and
$\mathit{supp}(\nu_1), \mathit{supp}(\nu_2) \subseteq \mathit{supp}(\nu)$. 

%%% Proof of STEP2
\eqref{step2}
Define the measures $\tilde\omega_1, \tilde\omega_2$ with
supports $ \mathit{supp}(\tilde\omega_1) = \set{(t_1,s_u)}{1 \leq u \leq r}$ and $\mathit{supp}(\tilde\omega_2) = \set{(t_i,s_u)}{2 \leq i \leq k,\, 1 \leq u \leq r}$ as follows
\begin{align*}
  \tilde\omega_1(t_1,s_u) = \frac{\tilde\omega(t_1,s_u)}{\mu(t_1)} \,;
  &&
  \tilde\omega_2(t_i,s_u) = \frac{\tilde\omega(t_i,s_u)}{1 - \mu(t_1)}\,, 
  &&
  \text{for $2 \leq i \leq k$ and $1 \leq u \leq r$.} 
\end{align*}
By the fact that $\tilde\omega \in \coupling{\mu}{\nu}$, one easily get 
that $\tilde\omega_1 \in \coupling{\mu_1}{\nu_1}$ and $\tilde\omega_2 \in \coupling{\mu_2}{\nu_2}$. From this, the following inequality holds:
\begin{align*}
 \K[d]{\mu,\nu}
 &= \sum_{i,\, u} d(t_i,s_u) \cdot \tilde\omega(t_i,s_u) 
 \tag{by Equation~\ref{eq:optimalcoupling}}
 \\
 &= \mu(t_1) \left( \sum_{i,\, u} d(t_i,s_u) \cdot \tilde\omega_1(t_i,s_u) \right) +
      (1-\mu(t_1) ) \left( \sum_{i,\, u} d(t_i,s_u) \cdot \tilde\omega_2(t_i,s_u) \right) \\
 &\geq \mu(t_1) \K[d]{\mu_1,\nu_1} + (1-\mu(t_1)) \K[d]{\mu_2,\nu_2} \,.
\end{align*}

Now, we prove also that the reverse inequality holds. 
Assume that, for $j \in \{1,2\}$, $\tilde\omega_j$ is the minimal 
coupling for $\K[d]{\mu_j,\nu_j}$. By the fact that $\mathit{supp}(\mu_1) = \{t_1\}$, 
$\mathit{supp}(\mu_2) = \{t_2, \dots, t_k\}$ is a partition of $\mathit{supp}(\mu)$, we can 
define the coupling $\omega \in \coupling{\mu}{\nu}$ as follows, for $2 \leq i \leq k$ and $1 \leq u \leq r$:
\begin{align*}
  \omega(t_1,s_u) = \mu(t_1) \cdot \tilde\omega_1(t_1,s_u) \,;
  &&
  \omega(t_i,s_u) = (1 - \mu(t_1)) \cdot \tilde\omega_2(t_i,s_u) \,.
\end{align*}
For this, the following inequality holds:
\begin{align*}
\mu(t_1) &\K[d]{\mu_1,\nu_1} + (1-\mu(t_1)) \K[d]{\mu_2,\nu_2} \\
&= \mu(t_1) \left( \sum_{i,\, u} d(t_i,s_u) \cdot \tilde\omega_1(t_i,s_u) \right) +
      (1-\mu(t_1) ) \left( \sum_{i,\, u} d(t_i,s_u) \cdot \tilde\omega_2(t_i,s_u) \right) \\
&= \sum_{i,\, u} d(t_i,s_u) \cdot \omega(t_i,s_u) \\
&\geq \K[d]{\mu,\nu} \,.
\end{align*}
Therefore, \eqref{step2} holds.

%%% Proof of STEP3
\eqref{step3} 
We start by showing \eqref{eq:decuceSplitting1}. Since $\mu(t_1) \in (0,1)$, the formal sum 
on the left-hand side of \eqref{eq:decuceSplitting1} is syntactically equivalent to
\begin{equation}
  \sum_{i = 1}^k \mu(t_i) \cdot t_i = 
  t_1 +_{\mu(t_1)} \left( \sum_{i = 2}^k \frac{\mu(t_i)}{1-\mu(t_1)} \cdot t_i \right) \,,
  \label{eq:sumMu}
\end{equation}
By (\Bone), (\SC), and the definitions of $\mu_1, \mu_2$ we easily obtain
\begin{align*}
  \vdash t_1 \equiv_0 \sum_{i = 1}^k \mu_1(t_i) \cdot t_i \,,
  &&
  \vdash \sum_{i = 2}^k \frac{\mu(t_i)}{1-\mu(t_1)} \cdot t_i \equiv_0 
  	\sum_{i = 1}^k \mu_2(t_i) \cdot t_i  \,.
\end{align*}
Thus \eqref{eq:decuceSplitting1} follows from the deductions above by applying (\Conv-0) to
\eqref{eq:sumMu}.
Next we prove \eqref{eq:decuceSplitting2} by showing that for any coupling $\omega \in \coupling{\mu}{\nu}$ the following is deducible:
\begin{equation}
%\textstyle
  \vdash \sum_{u = 1}^r \nu(s_u) \cdot s_u 
  \equiv_0
  \left( \sum_{u = 1}^r \frac{\omega(t_1,s_u)}{\mu(t_1)} \cdot s_u \right)
  +_{\mu(t_1)}
  \left( \sum_{u = 1}^r \frac{\nu(s_u) - \omega(t_1,s_u)}{1-\mu(t_1)} \cdot s_u \right) \,.
  \label{eq:splitting}
\end{equation}
We do this by induction on the size of the support of $\nu$. 
(Base case: $\mathit{supp}(\nu) = \{s_1\}$). Then, $\nu(s_1) = 1$ and $\omega(t_1,s_1) = \mu(t_1)$, 
so \eqref{eq:splitting} reduces to (\Btwo).
(Inductive step: $r > 1$ and $\mathit{supp}(\nu) = \{s_1,\dots,s_r\}$). Then $\nu(s_1) \in (0,1)$. 
Thus, the formal sum on the left-hand side of \eqref{eq:splitting} is syntactically
equivalent to
\begin{equation}
  \sum_{u = 1}^r \nu(s_u) \cdot s_u = 
  s_1 +_{\nu(s_1)} \left( \sum_{u = 2}^r \nu'(s_u) \cdot s_u \right) \,,
  \label{eq:firstSlice}
\end{equation}
where $\nu'(s_u) = \frac{\nu(s_u)}{1-\nu(s_1)}$, for $2 \leq u \leq r$.
Note that $\omega'(t_i,s_u) = \frac{\omega(t_i,s_u)}{1-\nu(s_1)}$, for $1\leq i \leq k$ and $2 \leq u \leq r$, is
a coupling in $\coupling{\mu}{\nu'}$ and that $\mathit{supp}(\nu') = \{s_2, \dots, s_r \}$. Thus, by inductive hypothesis on $\nu'$ we obtain
\begin{align*}
%\textstyle
  \vdash \sum_{u = 2}^r \nu'(s_u) \cdot s_u 
  &\equiv_0
  \left( \sum_{u = 2}^r \frac{\omega'(t_1,s_u)}{\mu(t_1)} \cdot s_u \right)
  +_{\mu(t_1)}
  \left( \sum_{u = 2}^r \frac{\nu'(s_u) - \omega'(t_1,s_u)}{1-\mu(t_1)} \cdot s_u \right)
  \\
  &=
  \left( \sum_{u = 2}^r \frac{\omega(t_1,s_u)}{\mu(t_1)(1- \nu(s_1))} \cdot s_u \right)
  +_{\mu(t_1)}
  \left( \sum_{u = 2}^r \frac{\nu(s_u) - \omega(t_1,s_u)}{(1-\mu(t_1))(1-\nu(s_1))} \cdot s_u \right)
\end{align*}
From this deduction and \eqref{eq:firstSlice}, by (\Dist), we obtain \eqref{eq:splitting}.
\end{proof}

Now we are ready to prove the main result of this section. The proof of completeness can be roughly 
sketched as follows. Given $t,s \in \TT{}$ such that $\dist(t,s) \leq \e$, to prove $\vdash t \equiv_\e s$
we first apply Theorem~\ref{th:eqcharacterization} to get their deducible equational normal forms as
formal sums. Then, for each $\alpha \in \naturals$, by applying (\Top) for the case $\alpha = 0$, and 
Lemma~\ref{lem:deduceKantorovich} and (\Pref) for $\alpha > 0$,
we deduce $\vdash t \equiv_\e s$, for all $\e \geq \Tilde\Psi^{\alpha}(\mathbf{1})(t,s)$.
Then, my (\Max) and (\Arch), $\vdash t \equiv_\e s$ follows by 
noticing that $\dist(t,s) = \bigsqcap_{\alpha \in \naturals} \Tilde{\Psi}^\alpha(\mathbf{1})(t,s)$
(Lemma~\ref{lem:uniquefix}).

\begin{thm}[Completeness]\label{th:completeness}
For arbitrary $t,s \in \TT{}$, if $\models t \equiv_\e s$, then $\vdash t \equiv_\e s$.
\end{thm}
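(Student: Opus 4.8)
The plan is to reduce the theorem to the finite approximants of the bisimilarity distance. By Lemma~\ref{lem:uniquefix} the distance is the $\omega$-limit of a decreasing chain, $\dist(t,s) = \bigsqcap_{\alpha \in \naturals} \Tilde\Psi^\alpha(\mathbf{1})(t,s)$, so the core of the argument is the following claim, proved by induction on the approximation level $\alpha$:
\[
\text{for all $\alpha \in \naturals$ and all $t,s \in \TT{}$:}\quad \e \geq \Tilde\Psi^\alpha(\mathbf{1})(t,s) \ \Longrightarrow\ \vdash t \equiv_\e s \,.
\]
Granting the claim, the theorem follows quickly. Recall that $\models t \equiv_\e s$ holds exactly when $\dist(t,s) \leq \e$. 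If $\e > \dist(t,s)$, then the decreasing chain $\Tilde\Psi^\alpha(\mathbf{1})(t,s)$ drops below $\e$ at some finite $\alpha$, and the claim yields $\vdash t \equiv_\e s$ directly. The boundary case $\e = \dist(t,s)$ is then obtained from the derivations $\vdash t \equiv_{\e'} s$ available for every $\e' > \dist(t,s)$ by one application of (\Arch).

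For the base case $\alpha = 0$ we have $\Tilde\Psi^0(\mathbf{1}) = \mathbf{1}$, so I would discharge $\vdash t \equiv_\e s$ by (\Refl) and (\Max) when $t = s$, and by (\Top) and (\Max) when $t \neq s$. For the inductive step, write $d = \Tilde\Psi^{\alpha-1}(\mathbf{1})$ and assume the claim at level $\alpha - 1$. If $t \bisim s$, then $\Tilde\Psi^\alpha(\mathbf{1})(t,s) = 0$; since the kernel of $\dist$ is bisimilarity (Lemma~\ref{lem:bisimdist}), the derivation $\vdash t \equiv_0 s$ is supplied by the completeness of the Stark--Smolka equational system carried over to $\TT{}$ (Remark~\ref{rmk:souness&completenessStarkSmolka}), and (\Max) handles larger $\e$. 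The interesting case is $t \not\bisim s$, where $\Tilde\Psi^\alpha(\mathbf{1})(t,s) = \K[\Lambda(d)]{\tauTT^*(t),\tauTT^*(s)}$.

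Here I would first invoke the Equational Characterization (Theorem~\ref{th:eqcharacterization}) to rewrite $t$ and $s$, provably at distance $0$, as convex combinations of \emph{component terms} of the form $\pref{a}{t'}$, names $Y$, or the terminating term $\rec{X}{X}$. These components are in bijection with the destinations $(a,t')$, $Y$, and $\bot$ supporting the full-probability distributions $\tauTT^*(t)$, $\tauTT^*(s)$, with the component $\rec{X}{X}$ absorbing the sub-probability deficit exactly as $\bot$ does. Transporting $\Lambda(d)$ along this bijection gives a $1$-bounded pseudometric $\hat d$ on the component terms with $\hat d(\pref{a}{t'},\pref{a}{s'}) = d(t',s')$ and value $1$ on all remaining distinct pairs, so that the term-level Kantorovich distance between the two convex combinations equals $\K[\Lambda(d)]{\tauTT^*(t),\tauTT^*(s)} = \Tilde\Psi^\alpha(\mathbf{1})(t,s)$. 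I then apply Lemma~\ref{lem:deduceKantorovich} with $\hat d$: its hypotheses $\set{u \equiv_\e v}{\e \geq \hat d(u,v)}$ over component terms are all derivable from the empty set --- matching prefixes $\pref{a}{t'} \equiv_\e \pref{a}{s'}$ from the induction hypothesis $\vdash t' \equiv_\e s'$ via (\Pref), identical components by (\Refl), and every distance-$1$ pair by (\Top). Discharging these hypotheses by (\Cut) produces a derivation of the equation between the two convex combinations at every error $\e \geq \Tilde\Psi^\alpha(\mathbf{1})(t,s)$, and composing with the two normal-form equalities by (\Triang) yields $\vdash t \equiv_\e s$, closing the induction.

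The main obstacle I anticipate is precisely this matching between the two Kantorovich computations: the operator $\Psi$ measures distance on the transition-destination space $\big((\Labels \times \TT{}) \uplus \X\big)_\bot$, whereas Lemma~\ref{lem:deduceKantorovich} speaks about convex combinations of \emph{terms}. Making the bridge rigorous requires representing termination (the element $\bot$) by a terminating term at maximal distance, verifying that the transported pseudometric $\hat d$ reproduces $\Lambda(d)$ exactly on the finite supports at hand, and checking that an optimal coupling on one side transports to an optimal coupling on the other so that the two Kantorovich values genuinely coincide. Once this correspondence is secured, the inductive discharge of hypotheses via (\Pref), (\Top) and (\Refl), together with the final limit passage via (\Arch) and the $\omega$-cocontinuity of Lemma~\ref{lem:uniquefix}, is routine.
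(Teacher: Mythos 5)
Your proposal is correct and takes essentially the same route as the paper's proof: reduction to the approximants $\Tilde{\Psi}^\alpha(\mathbf{1})$ via Lemma~\ref{lem:uniquefix} and (\Arch), normal forms from Theorem~\ref{th:eqcharacterization}, the Stark--Smolka completeness for the bisimilar case, and Lemma~\ref{lem:deduceKantorovich} with (\Pref), (\Refl), (\Top) to discharge the hypotheses --- where the ``bridge'' you flag between the destination space $\big((\Labels \times \TT{}) \uplus \X\big)_\bot$ and term-level convex combinations is exactly the paper's mapping $\gamma$ (with $\gamma(\bot) = \rec{Z}{Z}$), and the identification $\Tilde{\Psi}^\alpha(\mathbf{1})(t,s) = \Tilde{\Psi}^\alpha(\mathbf{1})(H,G)$ for the normal forms $H,G$ is justified, as in the paper, by triangle inequality since bisimilar pairs are at $\Tilde{\Psi}^\alpha(\mathbf{1})$-distance $0$ for $\alpha \geq 1$. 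The only (harmless) structural difference is that you quantify the inductive claim over all term pairs and invoke the equational characterization inside each inductive step, whereas the paper applies Theorem~\ref{th:eqcharacterization} once up front and runs the induction over the resulting finite closed family of normal-form terms.
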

\begin{proof}
Let $t,s \in \TT{}$ and $\e \in \Q$. We have to show that if $\dist(t,s) \leq \e$ then $\vdash t \equiv_\e s$.
The case $\e \geq 1$ trivially follows by (\Top) and (\Max). Let $\e < 1$.
By Theorem~\ref{th:eqcharacterization}, there exist terms $t_1, \ldots, t_k$ and $s_1, \ldots, s_r$ with free 
names in $\ol{X}$ and $\ol{Y}$, respectively, such that $\vdash t \equiv_0 t_1$, $\vdash s \equiv_0 s_1$, and   
\begin{align}
  &\textstyle
  \vdash t_i \equiv_0 \sum_{j=1}^{h(i)} p_{ij} \cdot t'_{ij} \,,
  & \text{for all $i \leq k$,} \label{eq:tterms} \\
  &\textstyle
  \vdash s_u \equiv_0 \sum_{v=1}^{n(u)} q_{uv} \cdot s'_{uv} \,,
  & \text{for all $u \leq r$,} \label{eq:sterms}
\end{align}
where the terms $t'_{ij}$ (resp.\ $s'_{uv}$) are enumerated without repetitions, and $t'_{ij}$ (resp.\ $s'_{uv}$) have either the form $\pref{a_{ij}}{t_{f(i,j)}}$ (resp.\ $\pref{b_{uv}}{s_{z(u,v)}}$), or $X_{g(i,j)}$ 
(resp.\ $Y_{w(u,v)}$), or $\rec{Z}{Z}$.

If we can prove that for all $\alpha \in \naturals$,
\begin{align}
  \vdash t_i \equiv_\e s_u \,, 
  &&
  \text{for all $i \leq k$, $u \leq r$, and $\e \geq \Tilde\Psi^{\alpha}(\mathbf{1})(t_i,s_u)$} \,,
  \label{eq:iteration}
\end{align}
then, by Lemma~\ref{lem:uniquefix} and (\Arch), we deduce $\vdash t_i \equiv_\e s_u$, for all $\e \geq \dist(t_i, s_u)$. Since $\vdash t \equiv_0 t_1$, $\vdash s \equiv_0 s_1$, by (\Triang), we deduce 
$\vdash t \equiv_\e s$, for all $\e \geq \dist(t,s)$, concluding the thesis.

\medskip
The reminder of the proof is devoted to prove \eqref{eq:iteration}. We do it 
by induction on $\alpha \in \naturals$.

(Base case: $\alpha = 0$) $\Tilde\Psi^0(\mathbf{1})(t_i,s_u) = \mathbf{1}(t_i,s_u)$. Since $\mathbf{1}(t_i,s_u) = 0$ whenever $t_i = s_u$ and $\mathbf{1}(t_i,s_u) = 1$ if  $t_i \neq s_u$, then $\eqref{eq:iteration}$ follows by the axioms (\Refl), (\Top) and (\Max).

(Inductive step: $\alpha > 0$). 
Recall that, by definition of $\Tilde{\Psi}$, we have the following:
\begin{equation*}
  \Tilde\Psi^{\alpha}(\mathbf{1})(t_i,s_u) = 
  \Tilde{\Psi}(\Tilde\Psi^{\alpha-1})(t_i,s_u) =
  \begin{cases}
    0 &\text{if $t_i \bisim_\UU s_u$,} \\
    \Psi(\Tilde\Psi^{\alpha-1})(t_i,s_u) &\text{otherwise} \,.
  \end{cases}
\end{equation*}
We consider the cases $t_i \bisim_\UU s_u$ and $t_i \not\bisim_\UU s_u$ separately.

Assume $t_i \bisim_\UU s_u$. Since our deduction system includes the one of Stark and Smolka, whenever $t_i \bisim_\UU s_u$, by completeness w.r.t.\ $\bisim_\UU$ (\cite[Theorem 3]{StarkS00}), we obtain $\vdash t_i \equiv_0 s_u$. By (\Max), $\vdash t_i \equiv_\e s_u$, for all $\e \geq \Tilde\Psi^{\alpha+1}(\mathbf{1})(t_i,s_u) = 0$.

Assume $t_i \not\bisim_\UU s_u$. 
Let $H,G$ be the formal sums on the right-hand side of \eqref{eq:tterms}, \eqref{eq:sterms}, 
respectively. Then, by definition of $\tauTT^*$, we have that, for $x,y \in (\Labels \times \TT{}) \uplus \X$,
\begin{align*}
  \tauTT^*(H)(x) = 
  \begin{cases}
  p_{ij}  & \text{if $\gamma(x) = t'_{ij}$, $j \leq h(i)$} \\
  0 &\text{otherwise} \,,
  \end{cases}
  &&
  \tauTT^*(G)(y) = 
  \begin{cases}
  q_{uv}  & \text{if $\gamma(y) = s'_{uv}$, $v \leq n(u)$} \\
  0 &\text{otherwise} \,,
  \end{cases}
\end{align*}
where $\gamma$ is the mapping such that
for all $t \in \TT{}$, $a \in \Labels$, and $X \in \X$,
\begin{align*}
\gamma((a, t)) = \pref{a}{t} \,, 
&& 
\gamma(X) = X \,,
&&
\gamma(\bot) = \rec{Z}{Z} \,.
\end{align*}
If we can prove that, for all $i \leq k$, $j \leq h(i)$, $u \leq r$, and $v \leq n(u)$,
\begin{align}
  \vdash t'_{ij} \equiv_\e s'_{uv} \,,
  &&
  \text{for all $\e \geq \Lambda(\Tilde\Psi^{\alpha-1}(\mathbf{1}))(\gamma(t'_{ij}),\gamma(s'_{uv}))$} \,,
  \label{eq:deduceLambda}
\end{align}
then, by Lemma~\ref{lem:deduceKantorovich}, we deduce 
\begin{align}
\vdash H \equiv_\e G \,,
&&
\text{for all $\e \geq \K[\Lambda(\Tilde\Psi^{\alpha-1}(\mathbf{1}))]{\tauTT^*(H),\tauTT^*(G)}$ } \,.
\label{eq:distHG}
\end{align}
Note that, by $t_i \not\bisim_\UU s_u$, \eqref{eq:tterms}, \eqref{eq:sterms}, and soundness 
of $\equiv_0$ w.r.t.\ $\bisim_\UU$ in \cite{StarkS00}, we have that 
$t_i \bisim_\UU H$, $s_u \bisim_\UU G$ and $H \not\bisim_\UU G$.
Therefore, by definition of $\Psi$ and triangular inequality
\begin{equation*}
\K[\Lambda(\Tilde\Psi^{\alpha-1}(\mathbf{1}))]{\tauTT^*(H),\tauTT^*(G)}
= \Tilde\Psi^{\alpha}(\mathbf{1})(H,G) 
= \Tilde\Psi^{\alpha}(\mathbf{1})(t_i,s_u) \,.
\end{equation*}
From \eqref{eq:distHG}, \eqref{eq:tterms}, \eqref{eq:sterms}, (\Triang), and the equality above we conclude \eqref{eq:iteration}.

\medskip
Next we prove \eqref{eq:deduceLambda}. 
The only interesting case is $t'_{i,j} = \pref{a}{t_{f(i,j)}}$ and $s'_{u,v} = \pref{a}{s_{z(u,v)}}$
---the others follow by using (\Refl), if $t'_{i,j} = s'_{u,v}$, (\Top) otherwise, and then (\Max).
By definition of $\gamma$ and $\Lambda$, we have 
$\Lambda(\Tilde\Psi^{\alpha-1}(\mathbf{1}))(\gamma(t'_{ij}),\gamma(s'_{uv})) = 
\Tilde\Psi^{\alpha-1}(\mathbf{1})(t_{f(i,j)}, s_{z(u,v)})$.
Now note that, by inductive hypothesis on $\alpha-1$, the following is deducible:
\begin{align*}
  \vdash t_{f(i,j)} \equiv_\e s_{z(u,v)} \,, 
  && 
  \text{for all $\e \geq \Tilde\Psi^\alpha(\mathbf{1})(t_{f(i,j)},s_{z(u,v)})$} \,,
\end{align*}
Therefore, \eqref{eq:deduceLambda} follows by the above and (\Pref).
\end{proof}

We conclude the section by showing a concrete example of deduction 
of the bisimilarity distance between two terms.
\begin{exa}
Consider the terms $t = \rec{X}{(\pref{a}{X} +_{\frac{1}{2}} Z)}$ and $s = \rec{Y}{(\pref{a}{Y} +_{\frac{1}{3}} Z)}$.
Similarly to Example~\ref{ex:semanticsTerms}, their pointed open Markov chain semantics are
\def\skiph{1}
\def\skipv{1.3}
\begin{align*}
% Semantics of t
\denot{t} \quad\bisim\quad
\tikz[labels, baseline={(current bounding box.center)}]{ 
  \draw (0,0) node[state, initial] (m) {$m$}; 
  \draw ($(m)+(down:\skipv)$) node[vars] (zvar) {$Z$};
  \path[-latex, font=\scriptsize]
    (m) edge node[right] {$\frac{1}{2}$} (zvar)
    (m) edge[loop right] node[right] {$a,\frac{1}{2}$} (m);
}
&&
% Semantics of s
\denot{s} \quad\bisim\quad
\tikz[labels, baseline={(current bounding box.center)}]{ 
  \draw (0,0) node[state, initial] (m) {$n$}; 
  \draw ($(m)+(down:\skipv)$) node[vars] (zvar) {$Z$};
  \path[-latex, font=\scriptsize]
    (m) edge node[right] {$\frac{2}{3}$} (zvar)
    (m) edge[loop right] node[right] {$a,\frac{1}{3}$} (m);
}
\end{align*} 
As shown in Example~\ref{ex:bisimidist} their bisimilarity distance is $\dist(t,s) = \frac{1}{4}$, hence
$\models t \equiv_\frac{1}{4} s$. 

Next we show how $\vdash t \equiv_\frac{1}{4} s$ can be deduced
by applying the axioms and rules of the quantitative deduction system proposed in Section~\ref{sec:quantAxioms}. For the sake of readability, the classical logical deduction rules (\Subst),
(\Cut), (\Assum) will be used implicitly, as well as (\Refl), (\Symm), and (\Triang).
By (\Unfold) and (\Fix) we can deduce 
\begin{align}
 \vdash t \equiv_0 \pref{a}{t} +_{\frac{1}{2}} Z
 &&\text{and}&&
 \vdash s \equiv_0 \pref{a}{s} +_{\frac{1}{3}} Z \,.
 \label{eq:EQNF}
\end{align}
Note that, $t$ and $s$ are now in the equational normal form of Theorem~\ref{th:eqcharacterization}.
The next step consists in applying (\Conv) to get information about the distance between the two 
probabilistic sums on the right-hand sides of the quantitative equations in \eqref{eq:EQNF}.
To do so, we first need to ``rearrange'' the sums in a way such that (\Conv) can actually be applied:
\begin{align}
\begin{aligned}
  \vdash \pref{a}{t} +_{\frac{1}{2}} Z  
  	&\equiv_0 (\pref{a}{t} +_{\frac{1}{3}} \pref{a}{t}) +_{\frac{1}{2}} Z  &&\text{(\Btwo)} \\
	&\equiv_0 \pref{a}{t} +_{\frac{1}{6}} (\pref{a}{t} +_{\frac{2}{5}} Z)  &&\text{(\SA)} \,,
\end{aligned}
&&
\begin{aligned}
  \vdash \pref{a}{s} +_{\frac{1}{3}} Z
  	&\equiv_0 Z +_{\frac{2}{3}} \pref{a}{s} &&\text{(\SC)} \\
  	&\equiv_0 (Z +_{\frac{1}{4}} Z) +_{\frac{2}{3}} \pref{a}{s}  &&\text{(\Btwo)} \\
	&\equiv_0 Z +_{\frac{1}{6}} (Z +_{\frac{3}{5}} \pref{a}{s})  &&\text{(\SA)}  \\
	&\equiv_0 Z +_{\frac{1}{6}} (\pref{a}{s} +_{\frac{2}{5}} Z)  &&\text{(\SC)} \,.  \\
\end{aligned}
\label{eq:KNF}
\end{align}
By (\Top), we deduce $\vdash \pref{a}{t} \equiv_1 Z$ and, by (\Refl), $\vdash Z \equiv_0 Z$. 
Hence, by (\Pref) and applying (\Conv) twice on $\pref{a}{t} +_{\frac{1}{6}} (\pref{a}{t} +_{\frac{2}{5}} Z)$
and $Z +_{\frac{1}{6}} (\pref{a}{s} +_{\frac{2}{5}} Z)$, we obtain the quantitative inference
\begin{equation}
  \{t \equiv_\e s \} \vdash 
  \pref{a}{t} +_{\frac{1}{6}} (\pref{a}{t} +_{\frac{2}{5}} Z)
  \equiv_{\frac{1}{3}\e + \frac{1}{6}}
  Z +_{\frac{1}{6}} (\pref{a}{s} +_{\frac{2}{5}} Z)
  \label{eq:KOper}
\end{equation}
Combining \eqref{eq:EQNF}, \eqref{eq:KNF}, \eqref{eq:KOper} we deduce the following
\begin{equation}
  \{t \equiv_\e s \} \vdash t \equiv_{\frac{1}{3}\e + \frac{1}{6}} s \,.
  \label{eq:psiOper}
\end{equation}
The above quantitative inference along with (\Top) $\vdash t \equiv_1 s$, can be thought of as 
a greatest fixed-point operator, taking an over approximation $\e \leq 1$ of the distance between 
$t$ and $s$ and refining it to $\frac{1}{3}\e + \frac{1}{6}$.
This interpretation is not bizarre, because it is exactly how the functional operator $\Tilde\Psi(d)$ operates 
on $t, s$ on a $1$-bounded pseudometric such that $d(t,s) = \e$.

The deduction of the distance $\dist(t,s) = \frac{1}{4}$ follows by proving that for all 
$1 \geq \delta > \frac{1}{4}$ we can deduce (in a finite number of steps!) $\vdash t \equiv_\delta s$ and then 
applying (\Arch).

The case $\delta = 1$ follows by (\Top). As for $1 > \delta > \frac{1}{4}$, notice that $T \colon [0,1] \to [0,1]$
defined as $T(\e) = \frac{1}{3}\e + \frac{1}{6}$ is a $\frac{1}{3}$-Lipschitz continuous map. Hence,
by Banach fixed-point theorem, $T$ has a unique fixed point, namely $\frac{1}{4}$, and the following inequality
holds for $q = \frac{1}{3}$
\begin{equation*}
  T^n(1) - \frac{1}{4} \leq \frac{q^n}{1- q} (T^0(1) - T^1(1)) \,.
\end{equation*}
The above reduces to $T^n(1) \leq \frac{1}{2} \cdot (\frac{1}{3})^{n-1}$. Hence by applying \eqref{eq:psiOper} 
$n$-times, starting from (\Top) $\vdash t \equiv_1 s$, for some integer $n \leq \log_3(\frac{1}{2 \delta}) + 1$, 
we deduce $\vdash t \equiv_{T^n(1)} s$ and we know that $T^n(1) \leq \delta$. Then, by (\Max), we 
deduce $\vdash t \equiv_{\delta} s$, \ie, the required deduction.
\qed
\end{exa}

%%%%%%%%%%%%%%%%%%%%%%%%%%%%%%%%%%%%%%%%%%%
\section{Axiomatization of the Discounted Bisimilarity Distance}
\label{sec:discountaxioms}

Next we describe how the deductive system in Section~\ref{sec:quantAxioms}
can be adapted to obtain soundness and completeness theorems w.r.t.\ the 
discounted bisimilarity distance of Deshainais et al.

\medskip
The distance $\dist$ that we considered so far is a special case (a.k.a. \emph{undiscounted}
bisimilarity distance) of the original definition by Deshainais et al.~\cite{DesharnaisGJP04}, 
which was parametric on a discount factor $\lambda \in (0,1]$. An equivalent definition of this distance
(due to van Breugel and Worrell~\cite{BreugelW:icalp01}) adapted to the case of open Markov chains 
is the following.
\begin{defi}[Discounted Bisimilarity Distance] \label{def:discbisimdist}
For $\lambda \in (0,1]$, the \emph{$\lambda$-discounted probabilistic bisimilarity pseudometric} 
$\dist[\M]^\lambda \colon M \times M \to [0,1]$ on $\M$ is the least fixed-point of the following functional 
operator on $1$-bounded pseudometrics (ordered point-wise),
\begin{equation*}
  \Psi_\M^\lambda(d)(m,m') =  \K[\Lambda^\lambda(d)]{\tau^*(m),\tau^*(m')}
  \tag{\sc $\lambda$-Kantorovich Operator}
\end{equation*}
where $\Lambda^\lambda(d)$ is the greatest 1-bounded pseudometric on $\big((\Labels \times M) \uplus \X \big)_\bot$ such that, for all $a \in \Labels$ and $t,s \in \TT{}$, $\Lambda(d)((a,t),(a,s)) = \lambda \cdot d(t,s)$.
\end{defi}

Clearly, for $\lambda = 1$ the above reduces to Definition~\ref{def:bisimdist}, hence $\dist = \dist^1$.
In~\cite[Theorem~6]{ChenBW12}, Chen et al.\ noticed that when $\lambda < 1$, $\Psi^\lambda$ is
a $\lambda$-Lipschitz continuous operator, \ie, for all $d,d' \colon M \times M \to [0,1]$, 
$\norm{\Psi^\lambda(d') - \Psi^\lambda(d)} \leq \lambda \norm{d' - d}$, where $\norm{f} = \sup_{x} |f(x)|$ is the supremum norm. So, by Banach fixed-point theorem $\dist^\lambda$ is the \emph{unique} fixed point
of $\Psi^\lambda$. 
For the same reason $\dist^\lambda = \bigsqcap_{\alpha \in \naturals} (\Psi^\lambda)^\alpha(\mathbf{1})$.
Moreover, the following also holds.
\begin{lem} \label{lem:discbisimkernel}
For any $\lambda < 1$, $\dist^\lambda(m,m') = 0$ iff $m \bisim m'$.
\end{lem}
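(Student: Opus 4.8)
The plan is to transcribe the proof of Lemma~\ref{lem:bisimdist} (the undiscounted case $\lambda = 1$), isolating the single place where the discount factor matters. The crucial observation is that discounting by a \emph{positive} factor leaves the kernel unchanged: for any $1$-bounded pseudometric $d$ and any $\lambda \in (0,1)$ one has $\kernel{\Lambda^\lambda(d)} = \kernel{\Lambda(d)}$. Indeed, $\Lambda^\lambda(d)((a,t),(a,s)) = \lambda \cdot d(t,s)$ vanishes exactly when $d(t,s) = 0$ because $\lambda > 0$, while on all remaining pairs of $\big((\Labels \times M) \uplus \X\big)_\bot$ the two pseudometrics agree by definition. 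Hence $\Lambda^\lambda(d)$ and $\Lambda(d)$ induce precisely the same equivalence classes, which is all that the Ferns–Panangaden–Precup characterization \cite[Lemma 3.1]{FernsPP04} ever sees. Everything else is a verbatim copy of the undiscounted argument with $\Psi$, $\Lambda$, $\dist$ replaced by $\Psi^\lambda$, $\Lambda^\lambda$, $\dist^\lambda$.

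For the ($\Leftarrow$) direction I would take a bisimulation $R$ on $\M$ and define $d_R(m,m') = 0$ if $(m,m') \in R$ and $d_R(m,m') = 1$ otherwise, as in Lemma~\ref{lem:bisimdist}, and show $\Psi^\lambda(d_R) \sqsubseteq d_R$. Outside $R$ this is immediate from $1$-boundedness. For $(m,m') \in R$, the bisimulation conditions give $\tau^*(m)(D) = \tau^*(m')(D)$ for every $\kernel{\Lambda(d_R)}$-class $D$; by the kernel identity these are exactly the $\kernel{\Lambda^\lambda(d_R)}$-classes, so \cite[Lemma 3.1]{FernsPP04} yields $\K[\Lambda^\lambda(d_R)]{\tau^*(m),\tau^*(m')} = 0$, i.e.\ $\Psi^\lambda(d_R)(m,m') = 0$. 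Taking $R = {\bisim}$ and using that $\dist^\lambda$ is the (unique, hence least) fixed point of $\Psi^\lambda$, Tarski's theorem gives $\dist^\lambda \sqsubseteq d_{\bisim}$, so $m \bisim m'$ implies $\dist^\lambda(m,m') = 0$.

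For the ($\Rightarrow$) direction I would show that $\kernel{\dist^\lambda}$ is a bisimulation. If $\dist^\lambda(m,m') = 0$, then since $\dist^\lambda = \Psi^\lambda(\dist^\lambda)$ we get $\K[\Lambda^\lambda(\dist^\lambda)]{\tau^*(m),\tau^*(m')} = 0$, so by \cite[Lemma 3.1]{FernsPP04} we have $\tau^*(m)(D) = \tau^*(m')(D)$ for every $\kernel{\Lambda^\lambda(\dist^\lambda)}$-class $D$; by the kernel identity these coincide with the $\kernel{\Lambda(\dist^\lambda)}$-classes, and unfolding the definition of $\Lambda$ recovers precisely conditions \ref{bisimVars} and \ref{bisimCont} of Definition~\ref{def:bisimulation}. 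Thus $\kernel{\dist^\lambda}$ is a bisimulation and $\dist^\lambda(m,m') = 0$ implies $m \bisim m'$.

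I do not expect a genuine obstacle here: the entire mathematical content is the kernel-preservation remark, after which both implications are literally those of Lemma~\ref{lem:bisimdist}. The only point meriting a line of care is that the Tarski step in the ($\Leftarrow$) direction requires $\Psi^\lambda$ to admit a least fixed point — which is already guaranteed above the lemma by the Banach argument of Chen et al., so no additional work is needed.
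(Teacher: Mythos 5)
Your proof is correct. The paper states Lemma~\ref{lem:discbisimkernel} without any proof, implicitly treating it as a routine adaptation of Lemma~\ref{lem:bisimdist}, and your write-up supplies exactly that adaptation: the kernel identity $\kernel{\Lambda^\lambda(d)} = \kernel{\Lambda(d)}$ for $\lambda > 0$ is indeed the only new ingredient needed (both liftings agree off the label pairs and $\lambda \cdot d(t,s) = 0$ iff $d(t,s) = 0$), after which both directions go through verbatim --- including the Tarski step, which is legitimate since $\dist^\lambda$ is defined in Definition~\ref{def:discbisimdist} as the \emph{least} fixed point of $\Psi^\lambda$ and hence lies below any prefixed point such as $d_{\bisim}$.
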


\subsection{A Quantitative Deduction System for the Discounted Case}
\label{sec:discQDS}

In this section we provide a quantitative deduction system that is proved to be sound and complete
w.r.t.\ the $\lambda$-discounted probabilistic bisimilarity distance. 

The quantitative deduction system ${\vdash_\lambda} \subseteq 2^{\E[\Sigma]} \times \E[\Sigma]$
that we propose contains the one presented in Section~\ref{sec:quantAxioms},
where we add the following axiom
\begin{align*}
(\dPref)\qquad & \{ t \equiv_\e s \} \vdash_\lambda \pref{a}{t} \equiv_{\e'} \pref{a}{s} \,, \; 
\text{ for $\e' \geq \lambda \e$} \,.
\end{align*}
Notice that, when $\lambda < 1$, (\dPref) and (\Max) imply (\Pref) ---hence one may 
remove (\Pref) from the definition, since is redundant.

\medskip
The proof of soundness and completeness follow essentially in the same way of
Theorems~\ref{th:soundness} and \ref{th:completeness}. 
In the reminder of the section we only highlight the parts where some adjustments 
are needed.

\medskip
Notice that due to Lemma~\ref{lem:discbisimkernel} a similar result to 
Theorem~\ref{th:universaldist} holds also in the discounted case, \ie,
for all $t,s \in \TT{}$, $\dist[\OMC]^\lambda(\denot{t},\denot{s}) = \dist[\UU]^\lambda(t,s)$. 
So that, as done previously, we will use $\dist[\OMC]^\lambda$ and 
$\dist[\UU]^\lambda$, interchangebly, often simply denoted as $\dist^\lambda$.
Similarly, $\models_\lambda t \equiv_\e s$ will stand for $\dist^\lambda \leq \e$.

\begin{thm}[$\lambda$-Soundness] \label{th:discSoundness}
For arbitrary $t, s \in \TT{}$, if $\vdash_\lambda t \equiv_\e s$ then $\models_\lambda t \equiv_\e s$.
\end{thm}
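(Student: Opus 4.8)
The plan is to mirror the structure of the soundness proof for the undiscounted system (Theorem~\ref{th:soundness}), verifying that every axiom and rule of $\vdash_\lambda$ is sound with respect to $\models_\lambda$, \ie\ with respect to $\dist^\lambda$. Since $\vdash_\lambda$ extends the system of Section~\ref{sec:quantAxioms} only by the new axiom (\dPref), the bulk of the work consists in re-running the earlier arguments with $\Psi^\lambda$ and $\Lambda^\lambda$ in place of $\Psi$ and $\Lambda$; the only genuinely new verification is for (\dPref). First I would dispatch the metric axioms (\Refl), (\Symm), (\Triang), (\Max), (\Arch) and the classical logical rules (\Subst), (\Cut), (\Assum): these are sound precisely because $\dist^\lambda$ is a $1$-bounded pseudometric, exactly as before, and (\Top) is sound because $\dist^\lambda$ is $1$-bounded.

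Next, the purely equational axioms (\Bone), (\Btwo), (\SC), (\SA), (\Unfold), (\Unguard), (\Cong), (\Fix) all assert (possibly conditional) $\equiv_0$ judgements, so their soundness concerns only the kernel of $\dist^\lambda$. Here I would invoke Lemma~\ref{lem:discbisimkernel}, which identifies $\kernel{\dist^\lambda}$ with $\bisim$, so that these axioms reduce to the Stark--Smolka soundness theorem precisely as in Theorem~\ref{th:soundness}. The interpolative axiom (\Conv) is handled by copying the computation from Theorem~\ref{th:soundness}, now with $\Psi^\lambda$ and $\Lambda^\lambda$: the only ingredients used are the definition of the Kantorovich operator, linearity of the objective, and the fact that a convex combination of optimal couplings is again a coupling, none of which depend on $\lambda$.

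Finally, for the new axiom (\dPref) I would compute directly
\begin{align*}
  \dist^\lambda(\pref{a}{t}, \pref{a}{s})
  &= \K[\Lambda^\lambda(\dist^\lambda)]{\tauTT^*(\pref{a}{t}), \tauTT^*(\pref{a}{s})} \\
  &= \K[\Lambda^\lambda(\dist^\lambda)]{\ind{\{(a,t)\}}, \ind{\{(a,s)\}}} \\
  &= \Lambda^\lambda(\dist^\lambda)((a,t),(a,s)) \\
  &= \lambda \cdot \dist^\lambda(t,s) \,,
\end{align*}
using that $\dist^\lambda$ is a fixed point of $\Psi^\lambda$, the definitions of $\tauTT$ and $\PP_\UU$, the direct definition of $\K[]{}$ on point masses, and finally the defining property of $\Lambda^\lambda$. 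From this, $\dist^\lambda(t,s) \leq \e$ yields $\dist^\lambda(\pref{a}{t},\pref{a}{s}) = \lambda\,\dist^\lambda(t,s) \leq \lambda\e \leq \e'$, giving soundness of (\dPref); since $\lambda \leq 1$, the same chain also establishes (\Pref), which remains in the system.

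I do not expect a serious obstacle: the argument is essentially a routine adaptation of Theorem~\ref{th:soundness}. The one point requiring care is ensuring the kernel identification of Lemma~\ref{lem:discbisimkernel} (which is stated for $\lambda < 1$, while for $\lambda = 1$ one falls back on Lemma~\ref{lem:bisimdist}) so that the equational axioms remain sound, and checking that (\dPref) exactly captures the $\lambda$-contraction that prefixing induces through $\Lambda^\lambda$.
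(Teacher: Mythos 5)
Your proposal is correct and follows essentially the same route as the paper: the paper's proof likewise reduces everything to Theorem~\ref{th:soundness} and verifies only (\dPref), via exactly the computation $\dist^\lambda(\pref{a}{t},\pref{a}{s}) = \K[\Lambda^\lambda(\dist^\lambda)]{\ind{\{(a,t)\}},\ind{\{(a,s)\}}} = \lambda\cdot\dist^\lambda(t,s)$ that you give. Your explicit care about the kernel identification (Lemma~\ref{lem:discbisimkernel} for $\lambda<1$ versus Lemma~\ref{lem:bisimdist} for $\lambda=1$) is a detail the paper leaves implicit, but it is the same argument.
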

\begin{proof}
The proof follows as Theorem~\ref{th:soundness}. We only need to check the soundness of (\dPref).
To do so it suffices to show that $\dist^\lambda(t,s) \geq \lambda \cdot \dist^\lambda(\pref{a}{t}, \pref{a}{s})$.
\begin{align*}
  \dist^\lambda(\pref{a}{t}, \pref{a}{s})
  &= \K[\Lambda^\lambda(\dist^\lambda)]{\tauTT^*(\pref{a}{t}), \tauTT^*(\pref{a}{s})} 
  \tag{$\dist^\lambda$ fixed-point \& def.\ $\Psi^\lambda$} \\
  &= \K[\Lambda^\lambda(\dist^\lambda)]{\ind{\{(a,t)\}}, \ind{\{(a,s)\}}} \tag{def.\ $\tauTT$ \& $\PP_\UU$} \\
  &= \Lambda^\lambda(\dist^\lambda)((a,t), (a,s)) \tag{def.\ $\K[]{}$} \\
  &= \lambda \cdot \dist^\lambda(t,s) \,. \tag{def.\ $\Lambda^\lambda$}
\end{align*}
\end{proof}

For the proof of completeness note that, since the quantitative deduction system $\vdash_\lambda$ is a conservative extension of $\vdash$, we have that Theorems~\ref{th:uniquesolution}, 
\ref{th:eqcharacterization}, and Lemma~\ref{lem:deduceKantorovich} still hold.
Moreover, for $\lambda < 1$, the proof of completeness is somehow simplified, since we do not need 
to introduce a new operator $\Tilde{\Psi}^\lambda$ to obtain the convergence to the $\lambda$-discounted
bisimilarity distance from above. Indeed, as noticed before $\dist^\lambda = \bigsqcap_{\alpha \in \naturals} (\Psi^\lambda)^\alpha(\mathbf{1})$.
\begin{thm}[$\lambda$-Completeness]\label{th:discCompleteness}
For arbitrary $t,s \in \TT{}$, if $\models_\lambda t \equiv_\e s$, then $\vdash_\lambda t \equiv_\e s$.
\end{thm}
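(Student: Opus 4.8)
The plan is to follow the proof of Theorem~\ref{th:completeness} almost verbatim, replacing $\Psi,\Lambda$ by $\Psi^\lambda,\Lambda^\lambda$ and the prefix axiom (\Pref) by (\dPref), after first disposing of the boundary case $\lambda=1$. When $\lambda=1$ we have $\dist^\lambda=\dist$ and ${\vdash}\subseteq{\vdash_\lambda}$, so $\models_\lambda t\equiv_\e s$ is just $\models t\equiv_\e s$ and the claim is immediate from Theorem~\ref{th:completeness}. So assume $\lambda<1$ and let $t,s\in\TT{}$ with $\dist^\lambda(t,s)\le\e$; the case $\e\ge 1$ follows from (\Top) and (\Max), hence suppose $\e<1$. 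As before, I would apply the Equational Characterization (Theorem~\ref{th:eqcharacterization}) to obtain named terms $t_1,\dots,t_k$ and $s_1,\dots,s_r$ with $\vdash_\lambda t\equiv_0 t_1$, $\vdash_\lambda s\equiv_0 s_1$ and $\vdash_\lambda t_i\equiv_0 H_i$, $\vdash_\lambda s_u\equiv_0 G_u$, where $H_i,G_u$ are the formal sums exhibiting the one-step behaviour. The crucial simplification is that, $\Psi^\lambda$ being a $\lambda$-contraction, $\dist^\lambda=\bigsqcap_{\alpha\in\naturals}(\Psi^\lambda)^\alpha(\mathbf{1})$ already converges \emph{from above}, as recalled before Lemma~\ref{lem:discbisimkernel}; consequently no auxiliary operator $\Tilde{\Psi}^\lambda$ is needed and the whole induction can be run directly on the iterates $(\Psi^\lambda)^\alpha(\mathbf{1})$, with no case split on bisimilarity.

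Concretely, I would prove by induction on $\alpha\in\naturals$ that $\vdash_\lambda t_i\equiv_\e s_u$ for all $i\le k$, $u\le r$ and $\e\ge(\Psi^\lambda)^\alpha(\mathbf{1})(t_i,s_u)$. The base case $\alpha=0$ uses (\Refl), (\Top) and (\Max) exactly as in Theorem~\ref{th:completeness}. For the inductive step I would first establish, for each pair of summands, that $\vdash_\lambda t'_{ij}\equiv_\e s'_{uv}$ whenever $\e$ dominates $\Lambda^\lambda\big((\Psi^\lambda)^{\alpha-1}(\mathbf{1})\big)$ at the corresponding points. The only non-trivial case is $t'_{ij}=\pref{a}{t_{f(i,j)}}$ and $s'_{uv}=\pref{a}{s_{z(u,v)}}$: here the definition of $\Lambda^\lambda$ gives the value $\lambda\cdot(\Psi^\lambda)^{\alpha-1}(\mathbf{1})(t_{f(i,j)},s_{z(u,v)})$, and the inductive hypothesis together with (\dPref)---which absorbs exactly the factor $\lambda$---yields the required quantitative equation. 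This is precisely the point where (\dPref) replaces (\Pref). Feeding these per-summand equations into Lemma~\ref{lem:deduceKantorovich} lifts them to $\vdash_\lambda H_i\equiv_\e G_u$ for every $\e\ge\K[\Lambda^\lambda((\Psi^\lambda)^{\alpha-1}(\mathbf{1}))]{\tauTT^*(H_i),\tauTT^*(G_u)}$, and by the definition of $\Psi^\lambda$ this bound equals $(\Psi^\lambda)^\alpha(\mathbf{1})(H_i,G_u)$.

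To close the induction I would transfer the estimate from $(H_i,G_u)$ back to $(t_i,s_u)$. Since $\Psi^\lambda(d)(m,m')$ depends on $m,m'$ only through the one-step distributions $\tauTT^*(m),\tauTT^*(m')$, and since the normal-form transformation of Theorem~\ref{th:eqcharacterization} leaves these distributions unchanged (so $\tauTT^*(t_i)=\tauTT^*(H_i)$ and $\tauTT^*(s_u)=\tauTT^*(G_u)$), it follows that $(\Psi^\lambda)^\alpha(\mathbf{1})(H_i,G_u)=(\Psi^\lambda)^\alpha(\mathbf{1})(t_i,s_u)$ for $\alpha\ge 1$. Combining this with $\vdash_\lambda t_i\equiv_0 H_i$, $\vdash_\lambda s_u\equiv_0 G_u$ and (\Triang) completes the inductive step. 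Finally, passing to the limit via (\Arch) and $\dist^\lambda=\bigsqcap_\alpha(\Psi^\lambda)^\alpha(\mathbf{1})$ gives $\vdash_\lambda t_i\equiv_\e s_u$ for all $\e\ge\dist^\lambda(t_i,s_u)$, and one more use of (\Triang) with $\vdash_\lambda t\equiv_0 t_1$, $\vdash_\lambda s\equiv_0 s_1$ delivers $\vdash_\lambda t\equiv_\e s$; soundness is Theorem~\ref{th:discSoundness}.

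The main obstacle, and the sole genuine departure from the undiscounted argument, is the matching step in the last paragraph. In Theorem~\ref{th:completeness} the equality of iterates on $(t_i,s_u)$ and $(H_i,G_u)$ came for free because $\Tilde{\Psi}$ sends bisimilar pairs to $0$, so a triangular estimate sufficed; here, as $(\Psi^\lambda)^\alpha(\mathbf{1})$ need not vanish on bisimilar-but-distinct terms at finite $\alpha$, that shortcut is unavailable. I therefore expect the delicate point to be justifying that the standard form preserves the one-step transition distribution \emph{exactly}, not merely up to bisimilarity, so that the contraction $\Psi^\lambda$ truly agrees on a term and its normal form; granting this, everything else transfers mechanically from Theorems~\ref{th:discSoundness} and~\ref{th:completeness}.
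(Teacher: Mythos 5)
Your route is the same as the paper's: its proof of this theorem is literally the recipe you follow (redo Theorem~\ref{th:completeness} with $\vdash_\lambda$ in place of $\vdash$, $\Psi^\lambda$ in place of $\Tilde{\Psi}$, and (\dPref) in place of (\Pref)), and your handling of the case $\lambda=1$, the base case, the per-summand step where (\dPref) absorbs the factor $\lambda$ coming from $\Lambda^\lambda$, the lifting via Lemma~\ref{lem:deduceKantorovich}, and the passage to the limit by (\Arch) all match it. You are also right---and sharper than the paper here---that the one step that does not transfer mechanically is the transfer of the deduced bound from the normal forms $(H_i,G_u)$ back to the named pairs $(t_i,s_u)$: in Theorem~\ref{th:completeness} that step rests on $\Tilde{\Psi}^\alpha(\mathbf{1})$ assigning $0$ to the bisimilar pairs $(t_i,H_i)$ and $(s_u,G_u)$, and $(\Psi^\lambda)^\alpha(\mathbf{1})$ has no such property at finite $\alpha$.

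However, the repair on which you hinge everything is exactly where your proof has a gap: the claim $\tauTT^*(t_i)=\tauTT^*(H_i)$ is not a consequence of Theorem~\ref{th:eqcharacterization} as stated. That theorem yields only $\vdash t_i\equiv_0 H_i$, and provable $0$-equality does not determine the one-step distribution: by (\Btwo) and congruence one has $\vdash \pref{a}{(u+_{1/2}u)}\equiv_0\pref{a}{u}$, yet $\tauTT(\pref{a}{(u+_{1/2}u)})=\ind{\{(a,\,u+_{1/2}u)\}}$ and $\tauTT(\pref{a}{u})=\ind{\{(a,u)\}}$ are distinct distributions. So the property you need is an additional invariant of the particular normal-form construction (the named terms $t_{f(i,j)}$ must be the actual syntactic one-step targets of $t_i$, with exactly the probabilities $p_{ij}$), which would have to be extracted from, or re-proved alongside, the construction behind Theorem~\ref{th:eqcharacterization}; as written, your argument is conditional on it. A repair that avoids this entirely is to run the induction on $(\Psi^\lambda)^\alpha(d_{\sim})$, where $d_{\sim}$ is the pseudometric from the proof of Lemma~\ref{lem:bisimdist} ($0$ on bisimilar pairs, $1$ elsewhere), rather than on $(\Psi^\lambda)^\alpha(\mathbf{1})$: the Kantorovich lifting sends pseudometrics to pseudometrics, and if $d$ vanishes on bisimilar pairs then so does $\Psi^\lambda(d)$ (bisimilar states admit a coupling supported on zero-cost pairs), so every iterate is a pseudometric vanishing on bisimilar pairs; the matching step then goes through by a genuine triangular inequality exactly as in Theorem~\ref{th:completeness}, the base case follows from Stark--Smolka completeness together with (\Top) and (\Max), and $\bigsqcap_{\alpha\in\naturals}(\Psi^\lambda)^\alpha(d_{\sim})=\dist^\lambda$ because, $\Psi^\lambda$ being a $\lambda$-contraction, its iterates from any starting point converge to its unique fixed point. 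Incidentally, your worry is well founded beyond the discounted case: the appeal to the ``triangular inequality'' in the matching step of Theorem~\ref{th:completeness} itself presupposes that $\Tilde{\Psi}^\alpha(\mathbf{1})$ is a pseudometric, which can fail (a counterexample can be built from bisimilar but syntactically distinct subterms), so either your exact-distribution invariant or the $d_{\sim}$-iteration is needed there as well.
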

\begin{proof}
The proof follows as in Theorem~\ref{th:completeness}. The only edits needed in are (i) syntactically replace $\vdash$ with $\vdash_\lambda$; (ii) replacing $\Tilde{\Psi}$ with $\Psi^\lambda$ in the proof of \eqref{eq:iteration}; and (iii) applying (\dPref) in place of (\Pref) for proving \eqref{eq:deduceLambda}.
\end{proof}

%%%%%%%%%%%%%%%%%%%%%%%%%%%%%%%%%%%%%%%%%%%

\section{A Quantitative Kleene's Theorem for Open Markov Chains}
\label{sec:expressibility}

In this last section we give a ``quantitative Kleene's theorem'' for pointed open Markov chains.
Specifically, we show that any (finite) pointed open Markov chains $(\M,m)$ can be represented 
up to bisimilarity as a $\O$-term $t_{(\M,n)}$ and, 
\textit{vice versa}, for any $\O$-term $t$, there exist a (finite) pointed 
open Markov chain bisimilar to $\denot{t}$. This establishes a representability theorem for finite open
Markov chains similar to the celebrated Kleene's theorem~\cite{Kleene56}, stating the correspondence
between \emph{regular expressions} and \emph{deterministic finite automata} (DFAs) up to language equivalence.

Even more interestingly, we show that by endowing the set of $\O$-terms with the pseudometric freely-generated  
by the quantitative deduction system presented in Section~\ref{sec:quantAxioms} 
(in a way which will be made precise
later) we get that the correspondence stated above is \emph{metric invariant}. 
We think of this result as `quantitative extension' of a Kleene's representation theorem for finite open 
Markov chains.

\subsection{Representability} \label{sec:represent}
We show that the class of expressible open Markov chains corresponds up to bisimilarity to the class of 
finite open Markov chains. Note that the results in this section can be alternatively obtained as in~\cite{SilvaBBR11} by observing that open Markov chains are coalgebras of a quantitative functor.

\medskip
A pointed Markov chain $(\M,m)$ is said \emph{expressible} if there exists a term $t \in \TT{}$ such that
$\denot{t} \bisim (\M,m)$.  
The next result is a corollary of Theorems~\ref{th:universal}, \ref{th:uniquesolution}, and \ref{th:soundness}.
\begin{cor} \label{cor:expressibility}
If $(\M, m)$ is finite then it is expressible. 
\end{cor}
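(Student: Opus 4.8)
The plan is to realize each state of $\M$ as a recursion variable, encode the one-step behavior of each state as a formal sum, and then invoke the de Beki\v{c}--Scott construction (Theorem~\ref{th:uniquesolution}) to obtain a term solving the resulting system of guarded equations; soundness and universality then identify its semantics with $(\M,m)$ up to bisimilarity.

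Concretely, write $\M = (M,\tau)$ with $M = \{m_1,\dots,m_k\}$ finite and, say, $m = m_1$. Let $Y_1,\dots,Y_h$ enumerate the names occurring unguarded in $\M$, and choose fresh distinct names $X_1,\dots,X_k$, one per state. Since each $\tau(m_i) \in \D((\Labels \times M)\uplus \X)$ is finitely supported, I form the formal sum (convex combination, in the sense of Section~\ref{sec:completeness})
\[
t_i \;=\; \sum_{(a,m_j) \in \mathit{supp}(\tau(m_i))}\!\! \tau(m_i)(a,m_j)\cdot(\pref{a}{X_j}) \;+\; \sum_{Y \in \mathit{supp}(\tau(m_i))}\!\! \tau(m_i)(Y)\cdot Y \;+\; p_i\cdot \rec{Z}{Z},
\]
where $p_i = 1 - \tau(m_i)((\Labels\times M)\uplus\X)$ is the terminating mass, so that the coefficients sum to $1$ and $t_i$ is a well-formed formal sum. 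Each $X_j$ occurs in $t_i$ only under a prefix $\pref{a}{(\cdot)}$, hence is guarded, and the free names of $t_i$ lie in $\{X_1,\dots,X_k,Y_1,\dots,Y_h\}$. By Theorem~\ref{th:uniquesolution} there exist terms $\ol{s} = (s_1,\dots,s_k)$ with free names in $\ol{Y}$ such that $\vdash s_i \equiv_0 t_i[\ol{s}/\ol{X}]$ for all $i \leq k$. I claim $t := s_1 \in \TT{}$ witnesses expressibility.

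By Theorem~\ref{th:universal} we have $\denot{s_i} \bisim (\UU, s_i)$, so it suffices to prove $(\UU, s_i) \bisim (\M, m_i)$ for each $i$. Applying soundness (Theorem~\ref{th:soundness}) to $\vdash s_i \equiv_0 t_i[\ol{s}/\ol{X}]$ and then Lemma~\ref{lem:bisimdist} yields $s_i \bisim_\UU t_i[\ol{s}/\ol{X}]$. From the definition of $\tauTT$ via $\PP_\UU$, the one-step behavior of the formal sum $t_i[\ol{s}/\ol{X}]$ is exactly $\tau(m_i)$ with each target $(a,m_j)$ replaced by $(a,s_j)$, each name $Y$ retained, and the deficit $p_i$ realized by the terminating state $\rec{Z}{Z}$. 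Consequently, the equivalence closure $R$ of $\{(s_i, m_i) \mid i \leq k\}$ on the disjoint union $\UU \oplus \M$ meets the conditions of Definition~\ref{def:bisimulation}: for each $i$, the states $s_i$ (through $t_i[\ol{s}/\ol{X}]$) and $m_i$ assign equal probability to every name and to every set $\{a\}\times C$ with $C$ an $R$-class, since $s_j$ and $m_j$ always lie in the same class. Hence $R$ is a bisimulation, giving $(\UU, s_i) \bisim (\M, m_i)$ and in particular $\denot{s_1} \bisim (\M, m)$.

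The main obstacle is the careful bookkeeping of the sub-probabilistic (terminating) mass via $\rec{Z}{Z}$, together with verifying that the syntactic unfolding $t_i[\ol{s}/\ol{X}]$ induces, through $\PP_\UU$, precisely the distribution $\tau(m_i)$ up to relabelling of targets, so that the candidate relation genuinely satisfies Definition~\ref{def:bisimulation}. Guardedness of the $X_i$, which is needed to apply Theorem~\ref{th:uniquesolution}, is immediate from placing every state variable under a prefix.
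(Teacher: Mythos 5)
Your proposal follows essentially the same route as the paper's proof: encode the one-step behaviour of each state as a formal sum over fresh guarded names, invoke Theorem~\ref{th:uniquesolution} to solve the resulting system, and transfer the provable equations $\vdash s_i \equiv_0 t_i[\ol{s}/\ol{X}]$ via soundness (Theorem~\ref{th:soundness}) and universality (Theorem~\ref{th:universal}) into a bisimulation between $\denot{s_1}$ and $(\M,m)$. Two minor remarks: you are actually more careful than the paper in accounting for the sub-probabilistic (terminating) mass via the summand $p_i \cdot \rec{Z}{Z}$, but your final relation---the equivalence closure of $\{(s_i,m_i)\}$ alone---should additionally be closed under $\bisim_{\UU}$ (or the argument split through intermediate terms and transitivity of $\bisim$, as the paper does via $\UU(\M)$), since $s_i$ and $t_i[\ol{s}/\ol{X}]$ are merely \emph{bisimilar} states of $\UU$ and need not have identical one-step distributions, so the conditions of Definition~\ref{def:bisimulation} are only guaranteed on unions of $\bisim$-classes.
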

\begin{proof}
We have to show that there exists $t \in \TT{}$ such that $\denot{t} \bisim (\M, m)$. 
Since the set of states $M = \{m_1,\dots, m_k\}$ is finite and, for each $m_i \in M$, $\tau(m_i)$ is finitely supported, then the sets of unguarded names $\{Y^i_{1},\dots, Y^i_{h(i)} \} = supp(\tau(m_i)) \cap \X$ and labelled transitions 
$\{\alpha^i_1, \dots, \alpha^i_{l(i)} \} = supp(m_i) \cap (\Labels \times M)$ of $m_i$ are finite. Let us associate with each $\alpha^i_j$ a name $X^i_j$, for all $i \leq k$ and $j \leq l(i)$. For each $i \leq k$, we define the terms
\begin{equation*}
  \textstyle
  t_i = \sum_{j = 1}^{l(i)} \tau(m_i)(\alpha^i_j) \cdot \pref{a^i_j}{X^i_j} + \sum_{j = 1}^{h(i)} \tau(m_i)(Y^i_j) \cdot Y^i_j \,,
\end{equation*}
where $\alpha^i_j = (a^i_j, m^i_j)$, for all $i \leq k$ and $i \leq l(i)$.
By Theorem~\ref{th:uniquesolution}, for $i \leq k$, there exists terms $\ol{s^i} = (s^i_1,\dots, s^i_{l(i)})$ such that $\vdash s_i \equiv_0 t_i[\ol{s^i}/\ol{X^i}]$, so that by soundness (Theorem~\ref{th:soundness}), $\denot{s_i} \bisim \denot{t_i[\ol{s^i}/\ol{X^i}]}$. Hence, by Theorem~\ref{th:universal}, we have 
$(\UU, s_i) \bisim (\UU, t_i[\ol{s^i}/\ol{X^i}])$.

Let $\ol{m^i} = (m^i_1, \dots, m^i_{l(i)})$ and $\ol{X^i} = (X^i_1, \dots, X^i_{l(i)})$, for $i \leq k$. It is a routine check to prove that the smallest equivalence relation $R_i$ containing $\{ (m_i, t_i[\ol{m^i}/\ol{X^i}]) \mid i \leq k \}$ is a bisimulation for $(\M, m_i)$ and $(\UU(\M), t_i[\ol{m^i}/\ol{X^i}])$, hence $(\M, m_i) \bisim (\UU(\M), t_i[\ol{m^i}/\ol{X^i}])$.
Similarly, one can prove $(\UU(\M), t_i[\ol{m^i}/\ol{X^i}]) \bisim (\UU, t_i[\ol{s^i}/\ol{X^i}])$ by taking the smallest equivalence relation containing $\{ (t_i[\ol{m^i}/\ol{X^i}], t_i[\ol{s^i}/\ol{X^i}]) \mid i \leq k \}$ and 
$\{ (m^i_j, s^i_j) \mid i \leq k, j \leq l(i) \}$. By transitivity of $\bisim$, $(\M, m_i) \bisim \denot{s_i}$, for all 
$i \leq k$, hence $(\M, m)$ is expressible.
\end{proof}

The converse (up to bisimilarity) of the above result can also be proved, and it follows as a corollary of 
Theorems~\ref{th:universal}, \ref{th:soundness}, and \ref{th:eqcharacterization}.
\begin{cor} \label{cor:finiteexpressibility}
If $(\M,n)$ is expressible then it is finite up-to-bisimilarity.
\end{cor}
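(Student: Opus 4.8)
The plan is to run the argument for Corollary~\ref{cor:expressibility} in reverse: use the equational normal form of Theorem~\ref{th:eqcharacterization} to read a \emph{finite} open Markov chain directly off the syntactic description of a representing term, and then transport everything back through universality. By hypothesis there is a term $t \in \TT{}$ with $\denot{t} \bisim (\M,n)$, and by Theorem~\ref{th:universal} we have $\denot{t} \bisim (\UU, t)$; so it suffices to exhibit a finite open Markov chain bisimilar to $(\UU, t)$.

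First I would apply Theorem~\ref{th:eqcharacterization} to $t$, say with free names among $\ol{Y}$, obtaining finitely many terms $t_1,\dots,t_k$ with $\vdash t \equiv_0 t_1$ and, for every $i \leq k$,
\[
  \vdash t_i \equiv_0 \sum_{j=1}^{h(i)} p_{ij} \cdot s_{ij} + \sum_{j=1}^{l(i)} q_{ij} \cdot Y_{g(i,j)} \,,
\]
where each $s_{ij}$ is either $\rec{Z}{Z}$ or has the shape $\pref{a_{ij}}{t_{f(i,j)}}$ with $f(i,j)\leq k$. The decisive structural point is that every prefix successor $t_{f(i,j)}$ is again one of $t_1,\dots,t_k$, so this finite family is closed under the transition structure. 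I would then build the finite open Markov chain $\N = (\{t_1,\dots,t_k\}, \theta)$ whose transition function mirrors these normal forms: set $\theta(t_i)(a_{ij}, t_{f(i,j)}) = p_{ij}$ for each $j$ with $s_{ij} = \pref{a_{ij}}{t_{f(i,j)}}$, set $\theta(t_i)(Y_{g(i,j)}) = q_{ij}$, and leave the residual mass coming from the terminating summands $s_{ij} = \rec{Z}{Z}$ unassigned (finitely-supported \emph{sub}-probability distributions being allowed). Since the summands and names are enumerated without repetition, $\theta(t_i)$ is a well-defined finitely-supported sub-probability distribution, so $\N$ is a genuine finite open Markov chain.

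It then remains to check that $(\N, t_1) \bisim (\UU, t_1)$. By soundness (Theorem~\ref{th:soundness}) together with Lemma~\ref{lem:bisimdist}, the provable equalities above yield $t \bisim_\UU t_1$ and, for every $i$, $t_i \bisim_\UU \sum_j p_{ij} s_{ij} + \sum_j q_{ij} Y_{g(i,j)}$. Unfolding $\tauTT$ on this formal sum shows that, in $\UU$, the state $t_i$ emits $a_{ij}$ into $t_{f(i,j)}$ with probability $p_{ij}$, moves to the free name $Y_{g(i,j)}$ with probability $q_{ij}$, and discards the mass of the terminating summands --- exactly the behaviour prescribed by $\theta$. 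I would then verify that the equivalence relation on $\N \oplus \UU$ generated by the pairs $(t_i, t_i)$ (for $i\leq k$) together with $\bisim_\UU$ is a bisimulation, checking clauses~\ref{bisimVars} and~\ref{bisimCont} class by class. Chaining $(\M,n) \bisim \denot{t} \bisim (\UU, t) \bisim (\UU, t_1) \bisim (\N, t_1)$ then produces a finite open Markov chain bisimilar to $(\M,n)$, as required.

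I expect the last bisimulation check to be the main obstacle. Concretely, one must confirm that for the pair $(t_i, t_i)$ the aggregate probabilities match: $\theta(t_i)(X) = \tauTT(t_i)(X)$ for every name $X$, and $\theta(t_i)(\{a\}\times C) = \tauTT(t_i)(\{a\}\times C)$ for every label $a$ and every equivalence class $C$ of the relation. The subtlety is that $C$ meets both copies of the state space, so one needs the class of $t_{f(i,j)}$ in $\N$ to align with the $\bisim_\UU$-class of $t_{f(i,j)}$ in $\UU$; this is exactly where the ``no repetitions'' clause of Theorem~\ref{th:eqcharacterization} and the closure of $\{t_1,\dots,t_k\}$ under successors do the real work, guaranteeing that no two distinct summands collapse onto the same transition and that each prefix target is itself a represented state.
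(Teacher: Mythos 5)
Your proposal is correct and follows essentially the same route as the paper: invoke Theorem~\ref{th:eqcharacterization} to get the normal forms $t_1,\dots,t_k$, build exactly the same finite open Markov chain on state set $\{t_1,\dots,t_k\}$ with transitions read off the formal sums, and conclude via soundness, Lemma~\ref{lem:bisimdist}, Theorem~\ref{th:universal}, and transitivity of $\bisim$. The only (cosmetic) divergence is in the final bisimulation bookkeeping --- the paper routes through fresh names $Z_i$ and the substituted terms $t'_i[\ol{t}/\ol{Z}]$, whereas you relate each $\N$-state $t_i$ directly to the $\UU$-state $t_i$ and close under $\bisim_\UU$, which is if anything a slightly more careful way of discharging the ``routine check'' the paper leaves implicit.
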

\begin{proof}
Let $t \in \TT{}$. We have to show that there exists $(\M,m) \in \OMC$ with a finite set of states such that $\denot{t} \bisim (\M, m)$. From Theorem~\ref{th:eqcharacterization}, there exist $t_1, \ldots, t_k$ with free names in $\ol{Y}$, such that
$\vdash t \equiv_0 t_1$ and 
\begin{align*}
  \textstyle
  \vdash t_i \equiv_0 \sum_{j=1}^{h(i)} p_{ij} \cdot s_{ij} + \sum_{j=1}^{l(i)} q_{ij} \cdot Y_{g(i,j)} \,,
  && \text{for all $i \leq k$,}
\end{align*}
where the terms $s_{ij}$ and names $Y_{g(i,j)}$ are enumerated without repetitions, and $s_{ij}$ is either $\rec{X}{X}$ or has the form $\pref{a_{ij}}{t_{f(i,j)}}$. Let $Z_1, \dots, Z_k$ be fresh names distinct from $\ol{Y}$, and define $t'_i$ as the term obtained by replacing in the right end side of the equation above each occurrence of $t_i$ with $Z_i$.
Then, clearly $\vdash t_i \equiv_0 t'_i[\ol{t}/\ol{Z}]$. By soundness (Theorem~\ref{th:soundness}), we have that
$\denot{t_i} \bisim \denot{t'_i[\ol{t}/\ol{Z}]}$, so that, by Theorem~\ref{th:universal}, 
$(\UU, t_i) \bisim (\UU, t'_i[\ol{t}/\ol{Z}])$.

Define $\M= (M,\tau)$ by setting $M = \{t_1, \dots, t_k\}$, $m = t_1$, and, for all $i \leq k$,
taking as $\tau(t_i)$ the smallest sub-probability distribution on $(\Labels \times M) \uplus \X$ such that 
$\tau(t_i)((a_{ij}, t_{f(i,j)})) = p_{ij}$ and $\tau(t_i)(Y_{g(i,e)}) = q_{ie}$, for all $i \leq k$, $j \leq h(i)$, and $e \leq l(i)$.
Notice that since the equation above is without repetitions, $\tau$ is well defined. Moreover, 
$1-\tau(m_i)((\Labels \times M) \uplus \X) = p_{iw}$ whenever there exists $w \leq h(i)$ such that $s_{iw} = \rec{X}{X}$.
It is not difficult to prove that $(\M, t_i) \bisim (\UU, t'_i[\ol{t}/\ol{Z}])$ (take the smallest equivalence relation containing the pairs $(t_i, t'_i[\ol{t}/\ol{Z}])$, for $i \leq k$), so that by transitivity of $\bisim$, $(\M, t_i) \bisim \denot{t_i}$, for all 
$i \leq k$. By $\vdash t \equiv_0 t_1$ and Theorem~\ref{th:soundness}, we also have $\denot{t} \bisim \denot{t_1}$, thus $\denot{t} \bisim (\M, m)$.
\end{proof}

\subsection{A Quantitative Kleene's Theorem}
\label{sec:qKleene}
We provide a metric analogue to Kleene's representation theorem for finite pointed open Markov chains.

\medskip
In~\cite{MardarePP:LICS16}, Mardare et al.\ gave a construction for the free model
of a quantitative theory of a generic quantitative deduction system. 
Here we present their definition only for the specific case of the quantitative deduction system 
presented in Section~\ref{sec:quantAxioms}. Note that since our quantitative deduction system 
does not require all algebraic operators to be non-expansive, by applying this construction 
we obtain a relaxed quantitative algebra, not a proper one in the of sense 
of~\cite{MardarePP:LICS16} . 

\begin{defi}[Initial $\vdash$-model]
The \emph{initial $\vdash$-model} is defined as the relaxed quantitative algebra $(\TT{},\O,d_{\TT{}})$,
where $(\TT{},\O)$ is the initial algebra of $\O$-terms and $d_{\TT{}}$
is the $1$-bounded pseudometric on $\TT{}$ defined, for arbitrary terms $t,s \in \TT{}$, as
$d_{\TT{}}(t,s) = \inf \set{\e}{ \; \vdash t \equiv_\e s}$.
\end{defi}

Note that by (\Refl), (\Symm), (\Triang), (\Top) it is easy to prove that $d_{\TT{}}$ is a well-defined 
$1$-bounded pseudometric. Moreover, $(\TT{},\O,d_{\TT{}})$ is clearly a sound model for $\vdash$.

The attractiveness of the above model, as opposed to an operational one looking directly at the 
operational semantics of terms, is that it is purely equational. Indeed, one can reason about its 
properties by just proving statements about the distance between terms using classical equational deduction in the system $\vdash$.

Next we show that there is a strong correspondence between the initial $\vdash$-model and
the quantitative algebra of finite pointed open Markov chains.
\begin{thm}[Quantitative Kleene's Theorem] \label{th:kleenetheorem} \
\begin{enumerate}[label={(\roman*)}, itemsep=0.7ex]

  \item \label{kleene1}
  For every pair $(\M,m), (\N,n)$ of finite pointed open Markov chains, there exist 
  terms $t, s \in \TT{}$ such that $\denot{t} \bisim (\M,m)$, $\denot{s} \bisim (\N,n)$, and 
  $\dist((\M,m), (\N,n)) = d_{\TT{}}(t, s)$;
  
  \item \label{kleene2}
  for every pair $t,s \in \TT{}$, there exist finite pointed open Markov chains
  $(\M,m), (\N,n)$, such that $\denot{t} \bisim (\M,m)$, $\denot{s} \bisim (\N,n)$, and 
  $\dist((\M,m), (\N,n)) = d_{\TT{}}(t, s)$.
\end{enumerate}
\end{thm}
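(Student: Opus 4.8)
The plan is to reduce both items to a single identity relating the freely-induced pseudometric on terms to the bisimilarity distance on the universal open Markov chain, namely $d_{\TT{}}(t,s) = \dist[\UU](t,s)$ for all closed $t,s \in \TT{}$. I would establish this first. Unfolding the definition of the initial $\vdash$-model, $d_{\TT{}}(t,s) = \inf \set{\e \in \prationals}{\vdash t \equiv_\e s}$. By Theorem~\ref{th:soundness} and Theorem~\ref{th:completeness}, for every rational $\e \geq 0$ we have $\vdash t \equiv_\e s$ if and only if $\models t \equiv_\e s$; and since $t,s$ are closed there are no metavariables to instantiate, so $\models t \equiv_\e s$ is literally the condition $\dist[\UU](t,s) \leq \e$. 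Hence $\set{\e \in \prationals}{\vdash t \equiv_\e s} = \set{\e \in \prationals}{\dist[\UU](t,s) \leq \e}$, and because $\dist[\UU](t,s)$ is a real number in $[0,1]$ and $\prationals$ is dense, the infimum of the right-hand set equals $\dist[\UU](t,s)$. This yields the identity.

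Next I would record the bisimilarity-invariance of the distance. If $\denot{t} \bisim (\M,m)$ and $\denot{s} \bisim (\N,n)$, then by Lemma~\ref{lem:bisimdist} (applied on the relevant disjoint unions) the $\dist[\OMC]$-distance between bisimilar chains is $0$, so two applications of the triangle inequality give $\dist((\M,m),(\N,n)) = \dist[\OMC](\denot{t},\denot{s})$. Combining this with Theorem~\ref{th:universaldist} and the identity just proved, whenever $\denot{t} \bisim (\M,m)$ and $\denot{s} \bisim (\N,n)$ we obtain the master chain
\begin{align*}
\dist((\M,m),(\N,n)) = \dist[\OMC](\denot{t},\denot{s}) = \dist[\UU](t,s) = d_{\TT{}}(t,s) \,.
\end{align*}
The key point is that the distance-matching is then \emph{automatic}: it suffices to exhibit bisimilar representatives, and the required numerical equality comes for free.

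With the master chain in hand, both items are short. For item~\ref{kleene1} I would start from finite $(\M,m),(\N,n)$ and apply Corollary~\ref{cor:expressibility} twice to obtain terms $t,s$ with $\denot{t} \bisim (\M,m)$ and $\denot{s} \bisim (\N,n)$; the master chain then delivers $\dist((\M,m),(\N,n)) = d_{\TT{}}(t,s)$. For item~\ref{kleene2} I would start from $t,s$, observe that $\denot{t}$ and $\denot{s}$ are expressible (witnessed by $t$ and $s$ themselves), and apply Corollary~\ref{cor:finiteexpressibility} to produce finite pointed open Markov chains $(\M,m),(\N,n)$ with $\denot{t} \bisim (\M,m)$ and $\denot{s} \bisim (\N,n)$; the same master chain applies verbatim.

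The only genuinely delicate point is the identity $d_{\TT{}}(t,s) = \dist[\UU](t,s)$, and inside it the passage from the $\prationals$-indexed infimum to the real value $\dist[\UU](t,s)$: this is exactly where soundness, completeness, and the density of $\prationals$ in $[0,1]$ must be used together, and where one must check that $\models t \equiv_\e s$ collapses to $\dist[\UU](t,s) \leq \e$ precisely because the terms are closed. Everything else is bookkeeping with the universality and representability results already established.
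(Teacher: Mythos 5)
Your proposal is correct and follows essentially the same route as the paper: establish $d_{\TT{}}(t,s) = \dist(\denot{t},\denot{s})$ from soundness and completeness, then invoke Corollaries~\ref{cor:expressibility} and \ref{cor:finiteexpressibility} together with Theorem~\ref{th:universaldist} to get both items. Your write-up is in fact slightly more careful than the paper's, since you make explicit two steps the paper leaves implicit—the bisimilarity-invariance of the distance (via Lemma~\ref{lem:bisimdist} and the triangle inequality) and the passage from the rational-indexed infimum to the real value via density of $\prationals$.
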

\begin{proof}
Before proving \ref{kleene1} and \ref{kleene2}, we show that, for all $t,s \in \TT{}$ 
\begin{equation}
d_{\TT{}}(t,s) = \dist(\denot{t},\denot{s}) \,.
\label{eq:termdistIsBisim}
\end{equation}
The equality follows trivially by definition of $d_{\TT{}}$ and the soundness and completeness theorems
(Theorems~\ref{th:soundness} and \ref{th:soundness}). Indeed,
\begin{equation*}
d_{\TT{}}(t,s) 
= \inf \set{\e}{ \; \vdash t \equiv_\e s}
= \inf \set{\e}{ \; \models t \equiv_\e s}
= \inf \set{\e}{ \dist(\denot{t},\denot{s}) \leq \e }
= \dist(\denot{t},\denot{s}) \,.
\end{equation*}

\ref{kleene1} Given $(\M,m), (\N,n)$ finite pointed open Markov chains, we construct the $\O$-terms $t, s$ 
as in Corollary~\ref{cor:expressibility}, obtaining that $\denot{t} \bisim (\M,m)$, $\denot{s} \bisim (\N,n)$.
Now the results follows by \eqref{eq:termdistIsBisim} and Theorem~\ref{th:universaldist}.

\ref{kleene2} Given $t,s \in \TT{}$, we construct the finite pointed Markov chains $(\M,m), (\N,n)$ as in
Corollary~\ref{cor:finiteexpressibility}, obtaining that $\denot{t} \bisim (\M,m)$, $\denot{s} \bisim (\N,n)$.
Again the results follows by \eqref{eq:termdistIsBisim} and Theorem~\ref{th:universaldist}.
\end{proof}

\begin{rem}[The discounted case]
Note that, by using the quantitative deduction system $\vdash_\lambda$ presented in Section~\ref{sec:discountaxioms}, 
adjusting the proof of Theorem~\ref{th:kleenetheorem} in the obvious way, a quantitative Kleene's 
theorem can be obtained also for the $\lambda$-discounted bisimilarity distance. 
\end{rem}

\section{Conclusions and Future Work}
\label{sec:conclusion}

In this paper we proposed a sound and complete axiomatization for the bisimilarity distance of Desharnais et al., later extended to its discounted variant. 
The axiomatic system proposed comes as a natural generalization of Stark and Smolka's one~\cite{StarkS00} for probabilistic bisimilarity, to which we added only three extra axioms, namely (\Top), (\Pref), (\Conv)
---along those required by the quantitative equational framework.

Although the use of the recursion operator does not fit the general framework of Mardare et al.~\cite{MardarePP:LICS16}, we were able to prove completeness in a way that we believe is general 
enough to accommodate the axiomatization of other behavioral distances for probabilistic systems.
A concrete example of this statement is provided in~\cite{BacciLMmfps17}, where we axiomatized the total variation distance for Markov chains.
In the light of our result it would be nice to see how much of the work in~\cite{MardarePP:LICS16}
truly bases on the non-expansivity assumption for the algebraic operators. As as possible future work,
it would be interesting to extend the general quantitative framework to algebraic operators that are
only required to be Lipschitz-continuous (indeed, our proof uses the fact that the functional fixed point 
operator defining the recursion is $q$-Lipschitz continuous for some $q < 1$).

Another appealing direction of future work is to apply our results on quantitative systems
described as coalgebras in a way similar to one proposed in~\cite{SilvaBBR11,Bonsangue13}.
By pursuing this direction we would be able to obtain metric axiomatization for 
a wide variety of quantitative systems, including weighted transition systems, Segala's 
systems, stratified systems, Pnueli-Zuck systems, etc.

A very recent related work worth to be mentioned is~\cite{BacciMPP:LICS18}, where
Markov processes have been axiomatised via the standard framework of quantitative equational
theories via disjoint union of theories. The signature used there is very similar to the one presented in the present work, but they managed to obtain completeness with no need of a recursion operator. Completeness was obtained by taking as the complete model the Cauchy completion of the standard quantitative initial algebra ---intuitively, Cauchy completion introduces recursive behaviours as the limit of infinitely many unfolding operations.

\section*{Acknowledgments}
We thank the anonymous reviewers of CONCUR 2016
for the useful comments and suggestions. The first author would like 
to thank Frank van Breugel for the suggestions on possible extensions of this work.
One of them, namely the axiomatization of the discounted probabilistic bisimilarity distance,  
has been developed in this extended version; the others have been resolved in~\cite{BacciLMmfps17}.\nocite{BacciM15,BacciM12}

\bibliographystyle{alpha}% the recommended bibstyle
\bibliography{biblio}

%% in general the use of bibtex is encouraged

%\begin{thebibliography}{Kos97}
%
%\bibitem[Kos97]{koslowski:mib}
%J{\"u}rgen Koslowski.
%\newblock Monads and interpolads in bicategories.
%\newblock {\em Theory Appl. Categ.}, 3(8):182--212, 1997.
%
%\end{thebibliography}

%\appendix
%\section{}
%  Here is a check-list to be completed before submitting the paper to
%  LMCS:
%\begin{itemize}[label=$\triangleright$]
%\item your submission uses the latest version of lmcs.cls
%\item the text of your submission is contained in a single file,
%  except for macros and graphics
%\item your graphics use only one format 
%\item you have employed the Journal's original proclamation environments,
%  or suitable extensions thereof 
%\item you have loaded the hyperref package
%\item you have \emph{not} loaded the times package
%\item you have not routinely adjusted vertical spacing manually by issuing
%  \texttt{\textbackslash vspace} or \texttt{\textbackslash vskip} commands
%\item you have used the command \texttt{\textbackslash sloppy} only
%  locally and in emergency cases
%\item your displayed equations use the
%  \texttt{\textbackslash[\dots\textbackslash]} construct
%\item your abstract only contains as few math-expressions as possible and no
%  references 
%\item you have supplied a list of keywords and a current
%  \href{http://dl.acm.org/ccs.cfm}{acm-classification}
%\end{itemize}
%
%  This listing also shows how to override the default bullet $\bullet$
%  of the \texttt{itemize}-envronment by a different symbol, in this
%  case \texttt{\textbackslash triangleright}.

\end{document}